\definecolor{Dark Ruby Red}{HTML}{580507}
\definecolor{Dark Blue Sapphire}{HTML}{053641}
\definecolor{Dark Gamboge}{HTML}{be7c00}
\definecolor{Blue Sapphire}{HTML}{005f73} 
\definecolor{Blue Matt}{HTML}{094c7b} 
\definecolor{Blue Dark}{HTML}{1B4C6E} 
\definecolor{Golden}{HTML}{E27400}
\definecolor{Brown}{HTML}{ae740e}
\definecolor{Gamboge}{HTML}{ee9b00}
\definecolor{Ruby Red}{HTML}{9b2226}
\definecolor{Blue Marine}{HTML}{022687}
\definecolor{Very Dark Blue}{HTML}{0c1b44}
\definecolor{Dark Green}{HTML}{005715}
\definecolor{Very Dark Green}{HTML}{1e400c}
\definecolor{Dark Magenta}{HTML}{005715}
\tikzset{
	>=stealth',
	-={stealth',ultra thick,scale=3} 
	node distance=1cm, 
	every state/.style={thick, fill=gray!10}, 
	initial text=$ $, 
}
\let\ab\allowbreak
\definecolor{Green2}{HTML}{3EA514}
\definecolor{Red2}{HTML}{FF0400}
\definecolor{Orange2}{HTML}{FE9507}
\definecolor{Violet2}{HTML}{CE1ff9}
\DeclareMathAlphabet{\mathpzc}{OT1}{pzc}{m}{it}
\newcommand{\NN}{\mathbb{N} }
\newcommand{\F}{\mathcal{F}}
\newcommand{\G}{\mathcal{G}}
\renewcommand{\L}{\mathcal{L}}
\newcommand{\C}{\mathcal{C}}
\newcommand{\A}{\mathcal{A}}
\newcommand{\B}{\mathcal{B}}
\newcommand{\Z}{\mathcal{Z}}
\newcommand{\R}{\mathcal{R}}
\newcommand{\T}{\mathcal{T}}
\newcommand{\W}{\mathcal{W}}
\renewcommand{\P}{\mathcal{P}}
\newcommand{\dd}{\delta}
\renewcommand{\ss}{\sigma}
\newcommand{\rr}{\varrho}
\renewcommand{\aa}{\alpha}
\renewcommand{\tt}{\tau}
\newcommand{\bb}{\beta}
\renewcommand{\SS}{\Sigma}
\newcommand{\GG}{\Gamma}
\newcommand{\oo}{\omega}
\newcommand{\pp}{\varphi}
\newcommand{\ZF}{\mathcal{\Z_{\F}}}
\newcommand{\TF}{\mathcal{T}_{\F}}
\newcommand{\rest}{\upharpoonright}
\newcommand{\mstate}{\kl{\mathit{States}}}
\newcommand{\msupp}{\kl{\mathit{Supp}}}
\newcommand{\msource}{\mathit{Source}}
\newcommand{\mtarget}{\mathit{Target}}
\newcommand{\moutput}{\mathit{Output}}
\newcommand{\mout}{\mathit{Out}}
\newcommand{\mIn}{\mathit{In}}
\newcommand{\minf}{\mathit{Inf}}
\newcommand{\mocc}{\kl{\mathit{Occ}}}
\newcommand{\macc}{\mathit{Acc}}
\newcommand{\mfirst}{\mathit{First}}
\newcommand{\mlast}{\mathit{Last}}
\newcommand{\mdepth}{\kl{\mathit{Depth}}}
\newcommand{\mheight}{\kl{\mathit{Height}}}
\newcommand{\mbranch}{\kl{\mathit{Branch}}}
\newcommand{\msubtree}{\kl{\mathit{Subtree}}}
\newcommand{\mnextb}{\kl{\mathit{Nextbranch}}}
\newcommand{\mnextc}{\kl{\mathit{Nextchild}}}
\newcommand{\mrun}{\kl{\mathpzc{Run}}}
\newcommand{\mloop}{\kl{\mathpzc{Loop}}}
\newcommand{\prefix}{\kl{\sqsubseteq}}
\newcommand{\ACD}{\kl{\mathcal{ACD}}}
\title{Optimal Transformations of Muller Conditions 
}
\titlerunning{Optimal Transformations of Muller Conditions} 
\author{Antonio Casares}{LaBRI, Université de Bordeaux, France}{antonio.casares-santos@labri.fr}{https://orcid.org/0000-0002-6539-2020}{}
\author{Thomas Colcombet}{CNRS, IRIF, Université de Paris, France}{thomas.colcombet@irif.fr}{https://orcid.org/0000-0001-6529-6963}{}
\author{Nathanaël Fijalkow}{CNRS, LaBRI, Université de Bordeaux, France \and
 The Alan Turing Institute of Data Science, London, United Kingdom}{nathanael.fijalkow@labri.fr}{https://orcid.org/0000-0002-6576-4680}{}
\authorrunning{A.Casares, T. Colcombet and N. Fijalkow}
\keywords{Automata over infinite words, Omega regular languages, Determinisation of automata} 
\begin{document}
\maketitle

\begin{abstract}

	In this paper we are interested in automata over infinite words and infinite duration games, that we view as general transition systems. We study transformations of systems using a Muller condition into ones using a parity condition, extending Zielonka's construction. We introduce the alternating cycle decomposition transformation, and we prove a strong optimality result: for any given deterministic Muller automaton, the obtained parity automaton is minimal both in size and in number of priorities among those automata admitting a morphism into the original Muller automaton.
	
	 We give two applications. The first is an improvement in the process of determinisation of Büchi automata into parity automata by Piterman and Schewe. The second is to present characterisations on the possibility of relabelling automata with different acceptance conditions.
	

\end{abstract}

\paragraph*{}
This document contains hyperlinks.
\AP Each occurrence of a "notion@@example" is linked to its ""definition@@example"".
On an electronic device, the reader can click on words or symbols (or just hover over them on some PDF readers) to see their definition.

\vskip4mm

\section{Introduction}
\label{sec:intro}
Automata over infinite words were first introduced in the 60s by Büchi~\cite{buchi1960decision}, in his proof of the decidability of the monadic second order theory of the natural numbers with successor $(\NN,0,\mathit{succ})$. 
Contrary to automata over finite words, there is not a unique natural definition for acceptance in the infinite setting. 
The condition used by Büchi (called "Büchi condition"), accepts those runs that visit infinitely often a final state. The "Muller condition", introduced in~\cite{muller1963infSequences}, specifies a family $\F$ of subsets of states and accepts a run if the states visited infinitely often form a set in that family. Other acceptance conditions have been defined, and in particular the "parity condition", introduced by Mostowski in~\cite{mostowski1984RegularEF}, is of notable importance. It has the same expressive power as the Muller condition and it is specially well-behaved and adequate for algorithmic manipulation. For example, it allows for easy complementation of automata, it admits memoryless strategies for games, and parity game solvers have been proved very efficient. The "parity condition" assigns to each edge of an automaton a natural number, called a priority, and a run is accepting if the smallest priority visited infinitely often is even. Therefore, another important parameter dealing with parity conditions is the number of different priorities that it uses. An automaton using an acceptance condition of type $\C$ is called a $\C$ automaton.

As it is to be expected, different acceptance conditions have different expressive power. It is not difficult to see that non-deterministic Büchi automata have strictly more expressive power than deterministic ones. However, McNaughton Theorem~\cite{mcNaughton1966Testing} states that Büchi automata can be transformed into deterministic Muller automata accepting the same language. Non-deterministic Büchi and Muller automata are also equivalent to regular expressions and MSO logic over infinite words. Deterministic parity automata stand out as the simplest type of deterministic automata being equivalent to all these models~\cite{mostowski1984RegularEF}.
	
In this work we study methods which transform automata and games that use Muller acceptance conditions into others using parity acceptance conditions. We present the con\-struc\-tions using the formalism of "transition systems" in order to obtain the most general results. These constructions can be immediately applied to transform deterministic or non-deterministic, as well as games. We work with \emph{transition-labelled} systems (instead of state-labelled) for technical convenience. 

The standard way to transform a Muller transition system into a parity transition system is to build a deterministic parity automaton that "recognises" the Muller condition, and then take the "composition" of the Muller transition system and the automaton, which is a transition system using a parity condition. We can find the first example of one such automaton implicitly in the work of Gurevich and Harrington~\cite{gurevich1982trees}, called a \emph{later appearance record (LAR)}. The ideas of Gurevich and Harrington have recently been refined in order to find smaller automata~\cite{loding1999Optimal,kretinski2017IAR}. On the other hand, in his work on the memory requirements of Muller games~\cite{zielonka1998infinite}, Zielonka presents the notion of the \emph{split tree} associated to a "Muller condition" (later called \emph{"Zielonka tree"}~\cite{djw1997memory,horn2008random}). This construction yields another "parity automaton" (that we call "Zielonka tree automaton") recognising a Muller condition, and we make this construction explicit in Section~\ref{sec:zielonka-tree}. We prove in Section~\ref{sec:zielonka-tree} that the Zielonka tree automaton is a minimal parity automaton recognising a Muller condition and it uses a parity condition with the minimal number of priorities. 
	
	However, the "composition" of a Muller transition system $\T$ and the "Zielonka tree automaton" does not result in general in the ``best'' parity transition system simulating $\T$. In order to optimise this transformation, we present in Section~\ref{sec:acd} a data structure, the \emph{"alternating cycle decomposition" (ACD)} of $\T$, that incorporates the information of how the transition system $\T$ makes use locally of the Muller condition. The alternating cycle decomposition is obtained applying the Zielonka tree construction while taking into account the structure of $\T$ by considering the loops that are alternatively accepting and rejecting.
	 The idea of considering the alternating chains of loops of an automaton is already present in the work of Wagner~\cite{wagner1979omega}. In Section~\ref{sec:acd} we prove our main result: the transformation given by the "alternating cycle decomposition" is optimal and it uses a "parity condition" with the optimal number of priorities.
	
	In this report we are concerned with \emph{state complexity}, this is, the efficiency of a construction is measured based on the number of states of the resulting transition system. However, we emphasize that the word \textit{optimal} is used in a strong sense. When we say that a transformation is optimal we do not only mean that there exists a family of transition systems for which we need at least the number of states given by this transformation,
	 but than \emph{in all cases} we obtain a minimal transition system with the desired properties. For instance, it is sometimes stated that the \textit{LAR} automaton of~\cite{gurevich1982trees} is optimal (see~\cite{loding1999Optimal}), however, it only is in the worst case, as in most other cases the "Zielonka tree automaton" has strictly smaller size.

	At the end of this report we study the applicability of the proposed transformation to one of the main concerns in many applications of automata over infinite words (as for instance the synthesis for LTL formulas): the determinisation of "Büchi automata". The first efficient determinisation procedure was proposed by Safra~\cite{safra1988onthecomplexity}, and since then many other constructions have been proposed. In~\cite{piterman2006fromNDBuchi}, Piterman proposed a modification on Safra's construction that improves the complexity and directly produces a "parity automaton". In~\cite{schewe2009tighter}, Schewe revisits Piterman's construction differentiating two steps: a first one producing a deterministic "Rabin" automaton, and a second one producing a "parity" automaton. In~\cite{colcombetz2009tight}, Colcombet and Zdanowski found a tight worst-case lower bound for the first step, and Schewe and Varghese found a tight (up to a constant) worst-case lower bound for the second step in~\cite{varghese2014PhD,Varghese12}.
	The "alternating cycle decomposition" presented in this report provides a new procedure to transform the Rabin automaton to a parity one that improves the con\-struc\-tion proposed in~\cite{schewe2009tighter} (Theorem~\ref{Th_ACD-Better than Schewe}).
	
	Finally, the "alternating cycle decomposition" clarifies the relation between the structure of a transition system and the different acceptance conditions that can be used to relabel it. In~\cite{zielonka1998infinite}, it is proven that a "Muller" condition $\F$ is equivalent to a "Rabin" (resp. "Streett") condition if and only if the family $\F$ is closed under intersection (resp. union), and it is equivalent to a "parity" condition if and only if it is closed under both unions and intersections. We extend this characterisations to transition systems. In Propositions~\ref{Prop_RelabellinRabin} and~\ref{Prop_RelabellinStreett} we prove that a transition system can be labelled with a Rabin (resp. Streett) condition if and only if the union of two rejecting (resp. accepting) "loops" is rejecting (resp. accepting). In Proposition~\ref{Prop_RelabellinParity} we prove that it can be relabelled with a parity condition if and only if the two previous conditions hold.  
	As corollaries, we obtain some already known results, first proven in~\cite{kupferman2010ParityizingRA}: a deterministic automaton can be labelled with a parity condition if and only if it can be labelled with both Rabin and Streett conditions; and it can be labelled with a "Weak" condition if and only if it can be labelled with both "Büchi" and "co-Büchi" conditions.

\paragraph*{Contributions} 
In this report, we establish four results:
\begin{description}
\item[Optimal parity automata for Muller conditions.] We present how to use the "Zielonka tree" of a "Muller condition" in order to build a parity automaton "recognising" this condition. We prove that this automaton is optimal for every "Muller condition", both in terms of number of priorities used (Proposition~\ref{Prop_OptimalPrioritiesZielonka}) and of size (Theorem~\ref{Th_OptimalityZielonkaTree}).
This result is new here, but this construction can be considered as already known by the community.

\item[Optimal transformation of Muller into parity transition systems.]
We provide a con\-struc\-tion translating Muller transition systems into parity transition systems that can be seen as a generalisation of the above one. We introduce the "alternating cycle decompositions" which are generalisations of "Zielonka trees" to transition systems, and derive from them our
"alternating cycle decomposition transformation" (ACD-transformation).

We state and prove an optimality result using the concept of "locally bijective morphisms": when the ACD-transformation is applied to a Muller transition system $\T$, it outputs a parity transition system $\T'$ such that there exists a "locally bijective morphism" from $\T'$ to $\T$. The optimality result states that $\T'$ has the minimum number of states such that this property holds (Theorem~\ref{Th_OptimalityACDTransformation}), and that its parity condition uses an optimal number of priorities (Proposition~\ref{Prop_OptimalityACD_Priorities}). 

 \item[Improvement on Piterman-Schewe's determinisation of Büchi automata.]
 Piterman~\cite{piterman2006fromNDBuchi} and Schewe~\cite{schewe2009tighter} have described an efficient translation from non-deterministic Büchi automata to deterministic parity automata (both are variations of the famous construction of Safra~\cite{safra1988onthecomplexity}).
 Schewe describes this construction as first building a Rabin automaton $"\R_\B"$, followed by an ad-hoc transformation of this automaton for producing a parity automaton $"\P_\B"$.
This second step induces in fact a "locally bijective morphism". This implies that we obtain a smaller parity automaton by applying the "ACD-transformation" to $\R_\B$. We also provide an example where this automaton is indeed strictly smaller and uses less priorities (Example~\ref{Example_Buchi_betterACD}).
 
\item[New proofs for results of automata relabelling.]
 Finally, the "alternating cycle decomposition" enables us to prove characterisations of systems that can be labelled with Rabin, Streett and parity conditions (Propositions~\ref{Prop_RelabellinRabin},~\ref{Prop_RelabellinStreett} and~\ref{Prop_RelabellinParity}). As a consequence, we obtain simple proofs of two existing results: if a deterministic automaton can be labelled by a "Rabin condition" and by a "Streett condition" while accepting the same language, then it can be labelled by a "parity condition" while accepting the same language~\cite{kupferman2010ParityizingRA},
 and if a deterministic automaton can be labelled by a Büchi condition and by a co-Büchi condition while accepting the same language, then it can be labelled by a "weak condition" while accepting the same language.
\end{description}

\paragraph*{Organisation of this report}
In Section~\ref{sec:notations} we present the definitions and notations that we will use throughout the report.

In Section~\ref{sec:zielonka-tree} we define the "Zielonka tree" and the "Zielonka tree automaton" and we prove the optimality of the latter. In Section~\ref{Section_ZielonkaTree-SomeTypes} we present some examples of "Zielonka trees" for special acceptance conditions, that will be useful in Section~\ref{Section_4.2.StructuralParity}. Most constructions and proofs of this section can be regarded as special cases of those from Section~\ref{sec:acd}. However, we find instructive to include them separately since this is the opportunity to describe all the core ideas that will appear in Section~\ref{sec:acd} in a simpler setting.

We begin Section~\ref{sec:acd} by defining "locally bijective morphisms". In Section~\ref{Subsection_ACD} we present the main contribution of this work: the "alternating cycle decomposition", and we prove its optimality in Section~\ref{Section_OptimalityACD}.

 Section~\ref{sec:applications} is divided in two very different parts. In Section~\ref{Section_DeterminisationBuchi} we show how the "ACD-transformation" could provide a smaller deterministic parity automaton than the con\-struc\-tions of~\cite{piterman2006fromNDBuchi} and~\cite{schewe2009tighter}. In Section~\ref{Section_4.2.StructuralParity} we analyse the information given by the "alternating cycle decomposition" and we provide two original proofs concerning the possibility of labelling automata with different acceptance conditions.

We have included detailed examples all throughout the report. We hope that these will help the reader to better understand the sometimes intricate formalism.

\paragraph*{Extended version (2023)}
The current paper has been superseded by the extended version ``From Muller to Parity and Rabin Automata: Optimal Transformations Preserving (History-)Determinism''~\cite{CCFL23FromMtoP}. Some of the main modifications and additions that can be found in that new version are:
\begin{itemize}
	\item Generalisation of the results to "history-deterministic" (also called "good-for-games") and "Rabin" automata.
	\item A conceptually simpler proof of Theorem~\ref{Th_OptimalityACDTransformation}.
	\item The introduction of a normal form for "parity" automata.
	\item An algorithm for the minimisation of "parity" automata "recognising" "Muller conditions".
\end{itemize}

\newpage

\section{Notations and definitions}
\label{sec:notations}

In this section we introduce 
 standard notions that will be used throughout the report.
We begin with some basic notations in Section~\ref{subsection:basic-notations}.

	\subsection{Basic notations}
	\label{subsection:basic-notations}

	 We let $\P(A)$ denote the power set of a set $A$ and $|A|$ its cardinality. 
	 The symbol $\oo$ denotes the ordered set of non-negative integers. For $i,j\in \oo$, $i\leq j$, $[i,j]$ stands for $\{i,i+1, \dots, j-1,j \}$.
	 
	  For a set $\Sigma$, a ""word"" over $\Sigma$ is a sequence of elements 
	   from $\Sigma$. The length of a word $u$ is $|u|$. An ""$\omega$-word"" (or simply an "infinite word") is a word of length $\oo$.
	 The sets of finite and infinite words over $\Sigma$ will be written $\Sigma^*$ and $\Sigma^{\oo}$ respectively. We let $\AP""\Sigma^\infty"" = \Sigma^* \cup \Sigma^\oo$. For a word $u\in \Sigma^\infty$ we write $u_i$ to represent the $i$-th letter of $u$.
	 We let $\varepsilon$ denote the ""empty word"". For $u\in \Sigma^*$ and $v\in "\Sigma^\infty"$, the concatenation of these words is written $u\cdot v$, or simply $uv$. If $u=v\cdot w$ for $v\in \Sigma^*, u,w\in "\Sigma^\infty"$, we say that $v$ is a  \AP""prefix"" of $u$ and we write $v "\sqsubseteq" u$ (it induces a partial order on $\Sigma^*$). 
	 
	 For a finite "word" $u\in \Sigma^*$ we write $\AP""\mfirst""(u)=u_0$ and $\AP""\mlast""(u)=u_{|u|-1}$.
	 For a word $u\in "\Sigma^\infty"$, we let 
	 $ \AP""\mathit{Inf}(u)""=\{ a\in \Sigma \; : \; u_i=a \text{ for infinitely many } i\in \oo\}$ and $ \AP\intro*\mocc=\{ a\in \Sigma \; : \; \exists i \in \oo \text{ such that }u_i=a  \}$.
	 
	Given a map $\alpha:A \rightarrow B$, we will extend $\alpha$ to words component-wise, i.e., $\aa: A^\infty \rightarrow B^\infty$ will be defined as $\aa(a_0a_1a_2\dots)=\aa(a_0)\aa(a_1)\aa(a_2)\dots$. We will use this convention throughout the paper without explicitly mention it.
	 
	 A \AP""directed graph"" is a tuple $(V,E,\msource,\mtarget)$ where $V$ is a set of vertices, $E$ a set of edges and $\msource,\mtarget:E\rightarrow V$ are maps indicating the source and target for each edge. A \emph{"path"} from $v_1\in V$ to $v_2\in V$ is a word $\rr\in E^*$ such that $\msource("\mfirst"(\rr))=v_1$, $\mtarget("\mlast"(\rr))=v_2$ and $\msource(\rr_{i})=\mtarget(\rr_{i-1})$ for $1\leq i<|\rr|$. A graph is \AP""strongly connected"" if there is a path connecting each pair of vertices. A \AP""subgraph"" of $(V,E,\msource,\mtarget)$ is a graph $(V',E',\msource',\mtarget')$ such that $V'\subseteq V$, $E'\subseteq E$ and $\msource'$ and $\mtarget'$ are the restriction to $E'$ of $\msource$ and $\mtarget$, respectively. A \AP""strongly connected component"" is a maximal "strongly connected" subgraph.

	\subsection{Automata over infinite words}
	
	A  \AP""non-deterministic automaton"" (which we will simply call an \emph{automaton}) is a tuple $\A=(Q,\Sigma, I_0, \Gamma, \delta, \macc)$ where:
	\begin{itemize}
		\item $Q$ is a set of states.
		\item $\Sigma$ is an input alphabet.
		\item $I_0\subseteq Q$ is a non-empty set of initial states.
		\item $\Gamma$ is an output alphabet.
		\item $\delta: Q\times \Sigma \rightarrow \P(Q \times \Gamma)$ is a transition function.
		\item $\macc\subseteq \Gamma^\oo$ is an acceptance condition.
	\end{itemize}
	
	If for every $q\in Q$, $a\in \Sigma$, $\dd(q,a)\neq \emptyset$ we say that the automaton is \AP""$\Sigma$-complete"". We can always suppose that an automaton is $\Sigma$-complete by adding a ``sink node'' $s$ to $Q$ that receives the not previously defined transitions.
	
	If $I_0$ is a singleton and for every $q\in Q$, $a\in \Sigma$, $\dd(q,a)$ is a singleton, we say that $\A$ is a  \AP""deterministic automaton"" (in particular a deterministic automaton is "$\Sigma$-complete").
	%
	In this case we will split the transition function into $\dd:Q\times \Sigma \rightarrow Q$ and $\gamma: Q\times \Sigma \rightarrow \Gamma$.
	In some cases we will omit the output alphabet $\Gamma$. If so, we will implicitly take as the output alphabet the whole set of transitions, $\Gamma=\{ (q,a,\delta(q,a)) \; : \; q\in Q, \; a\in \Sigma \}$. (See Figure~\ref{Fig_AutomataForL} for examples).

	We extend the definition of $\delta$ to finite words $\delta:Q\times \Sigma^* \rightarrow Q$ inductively: 
	\begin{itemize}
		\item $\dd(q,\varepsilon)=q$, for $q\in Q$
		\item $\delta(q,wa)=\delta(\delta(q,w),a)$, for $w\in \Sigma^*$ and $a\in \Sigma$
	\end{itemize}


	Given an "automaton" $\A$ and a word $u\in \Sigma^\oo$, a  \AP""run over $u$"" in $\A$ is a sequence
	\[ \rr=(q_0,u_0,b_0,q_1)(q_1,u_1,b_1,q_2) \dots  \quad q_i\in Q,\,  b_i\in \Gamma \; \text{for every } i\in \oo\]
	such that $q_0\in I_0$ and $(q_{i+1},b_{i})\in \dd(q_i,u_i)$ for all $i\in \oo$. The  \AP""output"" of the run $\rr$ is the word $\moutput_\A(\rr)=b_0b_1b_2\dots \in \Gamma^\oo$. The word $u$ is  \AP""accepted"" by $\A$ if it exists a $run$ $\rr$ over $u$ such that $\moutput_\A(\rr)\in \macc$.	The  \AP""language accepted"" (or recognised) by an automaton $\A$ is the set 
	\[ "\mathcal{L}(\A)":= \{ u\in \Sigma^\oo \; : \; u \text{ is accepted by } \A \}.\]
	We remark that if $\A$ is "deterministic" then there is a single "run over" $u$ for each $u\in \Sigma^\oo$. We let $ \AP""\A(u)""$ denote the "output" of this run.
	\begin{remark*}
		
		We have defined transition-labelled automata (the acceptance condition is defined over transitions instead of over states). 
		 Transition-labelled automata are easily transformed into state-labelled automata, and vice versa. 
	\end{remark*}
	 
	\subsection{Transition systems}
	
	A  \AP""transition system graph"" $\T_G=(V,E,\msource,\mtarget,I_0)$ is a "directed graph" with a non-empty set of initial vertices $I_0\subseteq V$. We will also refer to vertices and edges as \emph{states} and \emph{transitions}, respectively.

	We will suppose in this work that every vertex of a transition system graph has at least one outgoing edge. \\
	
	A \AP""transition system"" $\T$ is obtained from a "transition system graph" $\T_G$ by adding:
	\begin{itemize}
		\item A function $\gamma: E \rightarrow \GG$. The set $\GG$ will be called a \emph{set of colours} and the function $\gamma$ a \emph{colouring function}.
		\item An \AP""acceptance condition"" $\macc \subseteq \GG^\oo$.
	\end{itemize} 
	We will usually take $\GG=E$ and $\gamma$ the identity function. In that case we will omit the set of colours in the description of $\T$. 
	

	A  \AP""run"" from $q\in V$ on a "transition system graph" $\T$ is a sequence of edges $\rr=e_0e_1\dots \in "E^\infty"$ such that $\msource(e_0)=q$ and $\mathit{Target}(e_{i-1})=\mathit{Source}(e_{i})$ for all $1\leq i<|\rr|$. We emphasize the fact that runs can be finite or infinite.
	
	For $A\subseteq V$ we let $"\mathpzc{Run}"_{\T,A}$ denote the set of runs on $\T$ starting from some $q\in A$ (we omit brackets if $A=\{q\}$), and $\mathpzc{Run}_{\T}=\mathpzc{Run}_{\T,I_0}$ the set of runs starting from some initial vertex.
	
	A "run" $\rr \in \mathpzc{Run}_{\T}$ is ""accepting"" if $\gamma(\rr)\in \macc$, and rejecting otherwise.\\
	
	
	We say that a vertex $v\in V$ is  \AP""accessible"" (or \emph{reachable}) if there exists a finite run $\rr \in \mathpzc{Run}_{\T}$ such that $\mtarget("\mlast"(\rr))=v$. A set of vertices $B\subseteq V$ is accessible if every vertex $v\in B$ is accessible. The \AP""accessible part"" of a "transition system" is the set of accessible vertices.

	Given a transition system $\T$ we let $ \AP""|\T|""$ denote $|V|$ for $V$ its set of vertices.
		For a subset of vertices $A\subseteq V$ we write:
		\begin{itemize}
			\item $""\mIn""(A)=\{ e\in E \; : \; \mtarget(e)\in A \}$,
			\item $""\mout""(A)=\{ e\in E \; : \; \msource(e)\in A \}$.
		\end{itemize}

	We might want to add more information to a transition system. For example we could associate vertices to different players in order to obtain a game, or add an input alphabet to obtain an automaton. A  \AP""labelled transition system"" is a "transition system" $\T$ with labelling functions $l_V: V \rightarrow L_V$, $l_E: E \rightarrow L_E$ into sets of labels for vertices and edges respectively.

	\begin{example}[""Automata as transition systems""]\label{Example_AutomataAsTransSyst}
		An "automaton" $\A=(Q,\Sigma, I_0, \delta, \macc)$ can be seen as a "labelled transition system" $\T=(V,E,\msource,\mtarget,I_0,\macc, l_E)$, taking $V=Q$, $E=\{(q,a,q') \; : \; q\in Q, \, a\in \Sigma,\, \, \dd(q,a)=(q',b) \}$, $\msource$ and $\mtarget$ the projections into the first and last component respectively and  adding labels indicating the input letters:
		\[l_E: E \rightarrow \Sigma \;\; ; \;\; l_E(q,a,b,q')=a  \]
		The automaton $\A$ is "deterministic" if and only if from every vertex $v\in V$ and $a\in \Sigma$ there exists a unique edge $e\in \mout(v)$ such that $l_E(e)=a$ and $I_0$ is a singleton. 
	\end{example}
	Depending on the context, we will use one of the two equivalent formalisms introduced to work with automata.
	
	\begin{definition}[Games]
		A  \AP""game"" $\G=(V, E, \msource, \mtarget, v_0, \macc, l_V)$ is a "transition system" with a single initial vertex $v_0$ and vertices labelled by a function $l_V: V \rightarrow \{Eve, Adam\}$ that induces a partition of $V$ into vertices controlled by a player named Eve and another named Adam.
		
		During a \AP""play"", players move a token from one vertex to another, starting from the initial vertex $v_0$.
		 The player who owns the vertex $v$ where the token is placed chooses an edge in $"\mout"(v)$ and the token travels through this edge to its target. In this way, they produce an infinite "run" $\rr$ over $\G$ (that we also call a \emph{play}). We say that Eve \emph{wins} the play if it belongs to the acceptance condition $\macc$ (and Adam wins in the contrary).
		 
		 A \AP""strategy"" for a player $P\in \{Eve, Adam\}$ is a function $S_P: \mrun_{\G}\cap E^* \rightarrow E$ that tells the player which move to choose after a finite play. We say that a "play" $\rr\in \mrun_{\G}$ is consistent with the strategy $S_P$ for player $P$ if after each finite subplay $\rr' "\sqsubseteq" \rr$ ending in a vertex controlled by $P$, the next edge in $\rr$ is $S_P(\rr')$.
		 We say that Eve \AP""wins"" the game $\G$ if there is a strategy $S_{Eve}$ such that all plays consistent with $S_{Eve}$ for Eve are accepted. Dually, Adam \emph{wins} $\G$ if there is a strategy $S_{Adam}$ such that no play consistent with $S_{Adam}$ for Adam is accepted.
		 
		 Given a game $\G$, the \AP""winning region"" of $\G$ for player $P\in \{Eve, Adam\}$, written $"\W_P(\G)"$, is the set of vertices $v\in V$ such that $P$ wins the game $\G'$ obtained by setting the initial vertex to $v$ in $\G$.
	\end{definition}

%

	\paragraph*{Composition of a transition system and an automaton}
	
	Let $\T=(V,E,\msource,\mtarget,I_0, \ss:E\rightarrow \SS)$ be a "transition system graph" with transitions coloured by colours in a set $\Sigma$, and let $\A=(Q,\Sigma, q_0, \Gamma, \delta, \gamma, \macc)$ be a "deterministic automaton" over the alphabet $\Sigma$. We define the  \AP""composition"" of $\T$ and $\A$ (also called the \emph{product}) as the transition system $\A \lhd \T=(V\times Q, E',\msource',\mtarget',I_0 \times \{q_0\} ,\gamma',\macc)$, where:
	\begin{itemize}
		\item The set of vertices is the cartesian product $V\times Q$.
		\item The set of edges is $E'=E \times Q$.
		
		\item $\msource'(e,q)=(\msource(e),q)$.
		\item $\mtarget'(e,q)=(\mtarget(e),\dd(q,\ss(e)))$.
		\item The initial set is $I_0 \times \{q_0\}$.
		\item The "acceptance condition" is given by the colouring $\gamma': E \times Q \rightarrow \Gamma$, $\gamma'(e,q)= \gamma(q,\ss(e))$ and the set $\macc\subseteq \Gamma^\oo$.
	\end{itemize}

	Intuitively, a computation in $\A \lhd \T$ happens as follows: we start from a vertex $v_0\in I_0$ in $\T$ and from $q_0 \in Q$. Whenever a transition $e$ between $v_1$ and $v_2$ takes places in $\T$, it produces the colour $c(e)\in \Sigma$. Then, the automaton $\A$ makes the transition corresponding to $\ss(e)$, producing an output in $\Gamma$. In this way, a word in $\Gamma^\oo$ is produced and we can use the "acceptance condition" $\macc \subseteq \Gamma^\oo$ of the automaton as the acceptance condition for $\A \lhd \T$.
	
	
	In particular, we can perform this operation if $\T=\B$ is an automaton. We obtain in this way a new automaton $\A \lhd \B$ that uses the acceptance condition of $\A$.
	
	We refer the reader to Figure~\ref{Fig_Product_AxZF} for an example of the composition of two automata.
	
	\begin{proposition}[Folklore]
		Let $\B=(B,\Sigma_1,I_0,\Sigma_2,\delta,\macc_B)$ be an "automaton", and $\A=(A,\Sigma_2,q_0',\Gamma,\delta',\macc_A)$ be a "deterministic" automaton recognising $\L(\A)=\macc_B\subseteq \Sigma_2^\oo$. Then $\L(\A \lhd \B)=\L(\B)$. 
	\end{proposition}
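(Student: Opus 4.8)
The plan is to set up a bijection between the runs of $\B$ over a word $u\in\Sigma_1^\oo$ and the runs of $\A\lhd\B$ over $u$, and then to check that this bijection preserves acceptance. The crucial structural point is that $\A$ is \emph{deterministic}: the product therefore adds no branching, and all the non-determinism of $\A\lhd\B$ comes from $\B$ alone.

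First I would describe the correspondence explicitly. Fix $u\in\Sigma_1^\oo$ and let $\rr=e_0e_1e_2\cdots$ be a run of $\B$ over $u$ starting from some $v_0\in I_0$, with output $w=\moutput_\B(\rr)\in\Sigma_2^\oo$. Since $\A$ is deterministic there is a unique run of $\A$ over $w$; reading $w$ letter by letter it visits states $q_0',q_1',q_2',\ldots$ with $q_{i+1}'=\delta'(q_i',w_i)$. Following the definition of the composition, the sequence $\tilde\rr=(e_0,q_0')(e_1,q_1')(e_2,q_2')\cdots$ is then a run of $\A\lhd\B$ over $u$ from $(v_0,q_0')\in I_0\times\{q_0'\}$, because $w_i=\ss(e_i)$ and hence $\mtarget'(e_i,q_i')=(\mtarget(e_i),\delta'(q_i',\ss(e_i)))=(\mtarget(e_i),q_{i+1}')$. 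Conversely, projecting any run of $\A\lhd\B$ onto its first component yields a run of $\B$ over $u$, and the two maps are mutually inverse because the $\A$-component of a run of $\A\lhd\B$ is entirely determined by its $\B$-component together with the initial state $q_0'$. This gives the announced bijection.

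Next I would compare acceptance. By the definition of the colouring $\gamma'$, the output of $\tilde\rr$ in $\A\lhd\B$ is $\gamma'(e_0,q_0')\gamma'(e_1,q_1')\cdots=\gamma(q_0',w_0)\gamma(q_1',w_1)\cdots$, which is exactly $\A(w)$, the output of $\A$ on $w$. Since $\A\lhd\B$ uses the acceptance condition $\macc_A$ of $\A$, the run $\tilde\rr$ is accepting if and only if $\A(w)\in\macc_A$, that is, if and only if $\A$ accepts $w$, i.e.\ $w\in\L(\A)=\macc_B$. On the other hand $\B$ uses the acceptance condition $\macc_B\subseteq\Sigma_2^\oo$, so $\rr$ is accepting if and only if $w=\moutput_\B(\rr)\in\macc_B$. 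Therefore $\tilde\rr$ is accepting in $\A\lhd\B$ exactly when $\rr$ is accepting in $\B$.

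Combining the two steps yields the statement: $u\in\L(\A\lhd\B)$ iff some run of $\A\lhd\B$ over $u$ is accepting, iff the corresponding run of $\B$ over $u$ is accepting, iff $u\in\L(\B)$. I do not expect a genuine difficulty here; the only points requiring care are the bookkeeping of the product states and the use of determinism of $\A$, which guarantees that each run of $\B$ extends to exactly one run of $\A\lhd\B$. This is precisely what makes the correspondence a bijection rather than merely a surjection, and it is the reason $\A$ must be assumed deterministic.
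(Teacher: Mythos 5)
Your proof is correct. The paper states this proposition as folklore and gives no proof of its own, so there is no argument to compare against; your explicit run-lifting construction is the standard one. It also matches the machinery the paper develops later: in section \ref{sec:acd} the projection of $\A \lhd \B$ onto $\B$ is observed to be a locally bijective morphism, and Proposition \ref{Prop3.6_MorphImpliesLanguages} shows that such morphisms preserve the recognized language; your bijection between runs (which uses determinism of $\A$ exactly where the paper needs local bijectivity) and your acceptance computation via $\L(\A)=\macc_B$ are precisely the two facts that make this projection an acceptance-preserving morphism in the paper's sense.
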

%
%

	\subsection{Classes of acceptance conditions}\label{Section_Acceptingconditions}
	
	The definition of an "acceptance condition" we have used so far is a very general one. In this section we present the main types of representations for the commonly named ""$\oo$-regular conditions"".
	
	
	Let $\Gamma$ be a finite set (whose elements will be called \emph{colours}). The set $\Gamma$ will usually be the set of edges of a "transition system".
	
	\begin{description}

	\item[Büchi] A  \AP""Büchi condition"" $\macc_B$ is represented by a subset $B\subseteq \Gamma$. An infinite word $u\in \Gamma^\oo$ is accepted if some colour from $B$ appears infinitely often in $u$:
	\[ u\in \macc_B \; \Leftrightarrow \; "\minf"(u) \cap B \neq \emptyset. \]
	
	The dual notion is the co-Büchi condition.
	
	\item[co-Büchi] The  \AP""co-Büchi condition"" $\macc_{cB}$ represented by $B\subseteq \Gamma$ is defined as
		\[ u\in \macc_{cB} \; \Leftrightarrow \; "\minf"(u) \cap B = \emptyset. \]

	\item[Rabin] A  \AP""Rabin condition"" is represented by a family of ``Rabin pairs'', $R=\{(E_1,F_1),\dots,(E_r,F_r)\}$, where $E_i,F_i\subseteq \Gamma$. The condition $\macc_R$ is defined as 
		\[ u\in \macc_R \; \Leftrightarrow \; \text{there exists an index } i\in \{1,\dots,r\} \text{ such that } \quad "\minf"(u) \cap E_i \neq \emptyset \; \wedge \; \minf(u) \cap F_i = \emptyset . \]

	The dual notion of a Rabin condition is the Streett condition.
	\item[Streett] The  \AP""Streett condition"" associated to the family $S=\{(E_1,F_1),\dots,(E_r,F_r)\}$, $E_i,F_i\subseteq \Gamma$ is defined as
	\[ u\in \macc_S \; \Leftrightarrow \; \text{for all } i\in \{1,\dots,r\} \quad "\minf"(u) \cap E_i \neq \emptyset \; \rightarrow \; \minf(u) \cap F_i \neq \emptyset .\]
	
	
	\item[Parity] To define a  \AP""parity condition"" we suppose that $\Gamma$ is a finite subset of $\NN$. We define the condition $\macc_P$ as 
	\[ u\in \macc_p \; \Leftrightarrow \; \min  "\minf"(u) \text{ is even} .\]
	
	The elements of $\Gamma$ are called  \AP""priorities"" in this case. Since the expressive power of a parity condition (and the complexity of related algorithms) depends on the number of priorities used, we associate to a parity condition the interval $[\mu,\eta]$, where $\mu= \min \Gamma$ and $\eta = \max \Gamma$. Modulo a normalization (subtracting $\mu$ or $\mu-1$ to all priorities) we can suppose that $\mu=0$ or $\mu=1$. If a parity condition uses priorities in $[\mu,\eta]$ we call it a "$[\mu,\eta]$-parity condition".
	
	We remark that "Büchi conditions" are exactly $[0,1]$-parity conditions and "co-Büchi" are $[1,2]$-parity conditions.
	
	 Parity conditions are also called "Rabin chain conditions" since a parity condition is equivalent to a Rabin condition given by $R=\{(E_1,F_1),\dots,(E_r,F_r)\}$ with $F_1\subseteq E_1 \subseteq F_2  \subseteq \dots \subseteq E_r$. \\
	 
	
	\item[Muller] A  \AP""Muller condition"" is given by a family $\F\subseteq \P(\Gamma)$ of subsets of $\Gamma$. A word $u\in \Gamma^\oo$ is accepted if the colours visited infinitely often form a set of the family $\F$:
		\[ u\in \macc_\F \; \Leftrightarrow \; "\minf"(u) \in \F . \]
	We remark that Muller conditions can express all the previously defined acceptance conditions.
	
	\end{description}
	
	We will also define conditions that depend on the structure of the transition system and not only on the set of colours.
	\begin{description}

	\item[Generalised weak conditions] Let $\T=(V,E,\msource,\mtarget,q_0,\macc)$ be a "transition system". An  \AP""ordered partition"" of $\T$ is a partition of $V$, $V_1,\dots,V_s\subseteq V$ such that for every pair of vertices $p\in V_i$, $q\in V_j$, if there is a transition from $p$ to $q$, then $i\geq j$. We call each subgraph $V_i$ a  \AP""component of the ordered partition"". Every such component must be a union of "strongly connected components" of $\T$, so we can imagine that the partition is the decomposition into strongly connected components suitably ordered. We remark that given an "ordered partition" of $\T$, a "run" will eventually stay in some component $V_i$.
	
	Given different representations of acceptance conditions $\macc_1, \dots, \macc_m$ from some of the previous classes, a  \AP""generalised weak condition"" is a condition for which we allow to use the different conditions in different components of an ordered partition of a transition system. We will mainly use the following type of "generalised weak" condition:
	
	Given a transition system $\T$ and an "ordered partition" $(V_i)_{i=1}^s$, a $"\mathit{Weak}_k"$-condition is a "parity condition" such that in any component $V_i$ there are at most $k$ different priorities associated to transitions between vertices in $V_i$. It is the "generalised weak condition" for $[1,k]$ and $[0,k-1]$.

	The adjective \textit{Weak} has typically been used to refer to the condition $"\mathit{Weak}_1"$. It correspond to a partition of $\T$ into ``accepting'' and ``rejecting'' components. A "run" will be accepted if the component it finally stays in is accepting.
\end{description}
	
	"Transition systems" (resp. "automata", "games") using an acceptance condition of type $\R$ will be called  \AP""$\R$-transition systems"" (resp. \emph{$\R$-automata}, \emph{$\R$-games}). We will also say that they are \emph{labelled with an $\R$-condition}.

	\begin{remark}\label{Remark_Colours=Edges}
		As we have already observed, we can always suppose that $\GG=E$. However, this supposition might affect the size of the representation of the acceptance conditions, and therefore the complexity of related algorithms as shown in~\cite{Horn2008Explicit}.
	\end{remark}
\begin{example}\label{Example_AutomataForL}
In Figure~\ref{Fig_AutomataForL} we show three automata recognising the language
\[ \L = \{ u \in \{0,1\}^\oo \; : \; \minf(u)=\{1\} \text{ or } (\minf(u)=\{0\} \text{ and} \text{ there is an even number of 1's in } u) \} \]
and using different acceptance conditions. We represent Büchi conditions by marking the accepting transitions with a \textbullet \hspace{1mm} symbol. For Muller or parity conditions we write in each transition $\alpha:\textcolor{Green2}{a}$, with $\aa \in \{0,1\}$ the input letter and $\textcolor{Green2}{a}\in \Gamma$ the output letter. The initial vertices are represented with an incoming arrow.
\begin{figure}[ht]
	\scalebox{0.9}{
	\centering 
	\begin{minipage}[b]{0.4\textwidth} 
		\begin{tikzpicture}[square/.style={regular polygon,regular polygon sides=4}, align=center,node distance=2cm,inner sep=2pt]
		
		\node at (0,2) [state, initial] (0) {};
		\node at (2,2) [state] (1) {};
		\node at (0,0) [state] (2) {};
		\node at (2,0) [state] (3) {};

		\path[->] 
		(0)  edge [in=70,out=110,loop] 	node[above] {$0$ }   (0)
		(0)  edge [in=150,out=30] 	node[above] {$1$ }   (1)
		(0)  edge []  node[left] {$0$ }   (2)
		
		(1)  edge [in=-30,out=210]  node[below] {$1$ }   (0)
		(1)  edge [in=70,out=110,loop] 	node[above] {$0$ }   (1)
		(1)  edge [] 	node[right] {$0,1$ }   (3)
		
		(2)  edge[thick] [in=200,out=160,loop]  node[left] {$0$ } node[scale=2] { \textbullet }  (2)
		
		(3)  edge[thick] [in=200,out=160,loop]  node[anchor=south west, pos=0.3] { $1$ } node[scale=2] { \textbullet }  (3);
		
		\end{tikzpicture}
		\caption*{Non-deterministic Büchi  automaton.}
	\end{minipage}
	\begin{minipage}[b]{0.4\textwidth} 
		\begin{tikzpicture}[square/.style={regular polygon,regular polygon sides=4}, align=center,node distance=2cm,inner sep=2pt]
		
		\node at (0,2) [state, initial] (0) {};
		\node at (2,2) [state] (1) {};
		
		\path[->] 
		(0)  edge [in=70,out=110,loop] 	node[above] {$0:\textcolor{Green2}{a}$ }   (0)
		(0)  edge [in=150,out=30] 	node[above] {$1 : \textcolor{Green2}{b}$ }   (1)
		
		(1)  edge [in=-30,out=210]  node[below] {$1: \textcolor{Green2}{b}$ }   (0)
		(1)  edge [in=70,out=110,loop] 	node[above] {$0 : \textcolor{Green2}{c}$ }   (1);
		
		\end{tikzpicture}
		\caption*{  Deterministic Muller  automaton.\\ $\F_1=\{\{a\},\{b\}\}$.}
	\end{minipage}
	\begin{minipage}[b]{0.4\textwidth} 
		\begin{tikzpicture}[square/.style={regular polygon,regular polygon sides=4}, align=center,node distance=2cm,inner sep=2pt]
		
		\node at (0,2) [state, initial] (A1) {};
		\node at (0,0) [state] (A2) {};
		\node at (2,1) [state] (B) {};

		\path[->] 
		(A1)  edge [in=70,out=110,loop] 	node[above] {$0 :\textcolor{Green2}{2}$ }   (A1)
		(A1)  edge 	node[ anchor=south west, pos=0.2] {$1:\textcolor{Green2}{1}$ }   (B)

		(B)  edge[in=70,out=110,loop]  node[above] {$0:\textcolor{Green2}{1}$ }   (B)
		(B)  edge [in=10,out=230]	node[below right] {$1:\textcolor{Green2}{2}$ }   (A2)
		
		(A2)  edge [in=190,out=50]  node[above] {$1:\textcolor{Green2}{2}$ }   (B)
		(A2)  edge []  node[left] {$0:\textcolor{Green2}{1}$ }  (A1);
		
		\end{tikzpicture}
		\caption*{ Deterministic parity automaton.}
	\end{minipage}
}
	\caption{Different types of automata accepting the language $\L$.}
	\label{Fig_AutomataForL}
\end{figure}


\end{example}

In the following we will use a small abuse of notation and speak indifferently of an acceptance condition and its representation. For example, we will sometimes replace the "acceptance condition" of a "transition system" by a family of sets $\F$ (representing a "Muller condition") or by a function assigning priorities to edges.
	
	\subsection*{Equivalent conditions}
	
	Two different representations of acceptance conditions over a set $\Gamma$ are  \AP""equivalent"" if they define the same set $\macc\subseteq \Gamma^\infty$.
	
	Given a "transition system graph" $\T_G$, two representations $\R_1,\R_2$ of acceptance conditions are \emph{\AP""equivalent over"" $\T_G$} if they define the same accepting subset of runs of $"\mathpzc{Run}"_{T}$. We write $(\T_G,\R_1) \simeq(\T_G,\R_2)$. \\

	If $\A$ is the "transition system graph" of an automaton (as in Example~\ref{Example_AutomataAsTransSyst}), and $\R_1,\R_2$ are two representations of acceptance conditions such that $(\A,\R_1) \simeq(\A,\R_2)$, then they recognise the same language: $\L(\A,\R_1)=\L(\A,\R_2)$. However, the converse only holds for "deterministic" automata.
	
	\begin{proposition}\label{Prop_EquivaleceDetAutomata_ImpliesEqCondition}
		Let $\A$ be the  the "transition system graph" of a "deterministic automaton" over the alphabet $\SS$ and let $\R_1,\R_2$ be two representations of acceptance conditions such that $\L(\A,\R_1)=\L(\A,\R_2)$. Then, both conditions are "equivalent over" $\A$, $(\A,\R_1) \simeq(\A,\R_2)$.
	\end{proposition}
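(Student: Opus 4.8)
The plan is to exploit the single feature that separates the deterministic from the non-deterministic case, namely that in a deterministic automaton words and runs are in bijective correspondence, whereas the accepted language of a non-deterministic automaton only records the \emph{existence} of an accepting run and therefore loses this information.

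First I would make the bijection precise. Since $\A$ is the "transition system graph" of a "deterministic automaton", it is "$\Sigma$-complete" with a single initial state $q_0$, and by the characterisation in example \ref{Example_AutomataAsTransSyst} every vertex has exactly one outgoing edge carrying each label $a\in\SS$. Hence for every $u\in\SS^\oo$ there is a unique "run over" $u$ starting at $q_0$; write $\rr_u$ for it. Conversely, any infinite "run" $\rr=e_0e_1\cdots \in \mrun_\T$ determines the word $u=l_E(e_0)l_E(e_1)\cdots$, and by determinism $\rr=\rr_u$. Thus the map $\Phi\colon u\mapsto \rr_u$ is a bijection from $\SS^\oo$ onto $\mrun_\T$.

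Next I would rephrase both sides of the hypothesis through $\Phi$. Each representation $\R_i$ splits $\mrun_\T$ into "accepting" and rejecting runs; let $\macc_i=\{\rr\in\mrun_\T : \gamma(\rr)\in\macc_{\R_i}\}$ denote its set of accepting runs. By the definition of the "language accepted" by a deterministic automaton, a word is accepted exactly when its unique run is accepting, so
\[ \L(\A,\R_i)=\{u\in\SS^\oo : \gamma(\rr_u)\in\macc_{\R_i}\}=\Phi^{-1}(\macc_i). \]
The assumption $\L(\A,\R_1)=\L(\A,\R_2)$ then reads $\Phi^{-1}(\macc_1)=\Phi^{-1}(\macc_2)$, and applying the bijection $\Phi$ to both sides gives $\macc_1=\macc_2$. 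As $\macc_1$ and $\macc_2$ are precisely the accepting subsets of $\mrun_\T$ induced by $\R_1$ and $\R_2$, this is by definition $(\A,\R_1)\simeq(\A,\R_2)$.

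I do not expect a genuine obstacle here: the only point requiring care is the verification that $\Phi$ is both injective and surjective, which is exactly where determinism (and $\SS$-completeness) is used. It is also precisely the step that fails for "non-deterministic automata", where several runs may correspond to one word, so that $\Phi$ is no longer a bijection and equal languages need not force equal accepting run-sets; this explains the remark in the excerpt that the converse holds only in the deterministic setting.
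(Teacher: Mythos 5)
Your proof is correct and takes essentially the same approach as the paper: both arguments rest on the fact that determinism puts words in $\SS^\oo$ and infinite runs in bijective correspondence, so that equality of the languages forces the two representations to induce the same set of accepting runs. The only difference is presentational --- you package this as an explicit global bijection $\Phi$, whereas the paper argues pointwise on an arbitrary infinite run.
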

	\begin{proof}
		Let $\rr \in \mrun_{T}$ be an infinite "run" in $\A$, and let $u\in \SS^\oo$ be the word in the input alphabet such that $\rr$ is the "run over" $u$ in $\A$. Since $\A$ is deterministic, $\rr$ is the only "run over" $u$, then $\rr$ belongs to the "acceptance condition" of $(\A,\R_i)$ if and only if the word $u$ belongs to $\L(\A,\R_1)=\L(\A,\R_2)$, for $i=1,2$.  
	\end{proof}

	\subsection*{The deterministic parity hierarchy}
	
	As we have mentioned in the introduction, deterministic Büchi automata have strictly less expressive power than deterministic Muller automata. However, every language recognised by a Muller automaton can be recognised by a deterministic parity automaton, but we might require at least some number of priorities to do so. We can assign to each regular language $L\subseteq \Sigma^\oo$ the optimal number of priorities needed to recognise it using a "deterministic automaton". We obtain in this way the  \AP""deterministic parity hierarchy"", first introduced by Mostowski in~\cite{mostowski1984RegularEF}, represented in Figure~\ref{Fig_ParityHierarchy}. In that figure, we denote by $[\mu,\eta]$ the set of languages over an alphabet $\Sigma$ that can be recognised using a deterministic "$[\mu,\eta]$-parity" automaton. The intersection of the levels $[0,k]$ and $[1,k+1]$ is exactly the set of languages recognised using a $"\mathit{Weak}_k"$ deterministic automaton.
	
	 This hierarchy is strict, that is, for each level of the hierarchy there are languages that do not appear in lower levels~\cite{wagner1979omega}.

	\begin{figure}[ht]
		\centering 
			\begin{tikzpicture}[square/.style={regular polygon,regular polygon sides=4}, align=center,node distance=2cm,inner sep=3pt]
			
			\node at (0,0)  (00) {$[0,0]$};
			\node at (4,0)  (11) {$[1,1]$};
			\node at (2,0.75)  (Weak1) {$\mathit{Weak}_1$};
			
			\node at (0,1.5)  (01) {$[0,1]$};
			\node at (4,1.5)  (12) {$[1,2]$};
			\node at (2,2.25)  (Weak2) {$\mathit{Weak}_2$};
			
			\node at (0,3)  (02) {$[0,2]$};
			\node at (4,3)  (13) {$[1,3]$};
			\node at (2,3.75)  (Weak3) {$\mathit{Weak}_3$};			
			
			\node at (0,4.25)  (dotsLeft) {$\vdots$};
			\node at (4,4.25)  (dotsRight) {$\vdots$};
			\node at (2,4.5)  (dotsRight) {$\vdots$};

			\draw   
			(00) edge (Weak1)
			(11) edge (Weak1)
			(Weak1) edge (01)
			(Weak1) edge (12)
			
			(01) edge (Weak2)
			(12) edge (Weak2)
			(Weak2) edge (02)
			(Weak2) edge (13)
			
			(02) edge (Weak3)
			(13) edge (Weak3);
			
			\end{tikzpicture}
			\caption{The deterministic parity hierarchy.}
			\label{Fig_ParityHierarchy}
		\end{figure}
	
	We observe that the set of languages that can be recognised by a deterministic "Rabin" automaton using $r$ Rabin pairs is the level $[1,2r+1]$. Similarly, the languages recognisable by a deterministic "Streett" automaton using $s$ pairs is $[0,2s]$.\\

	For non-deterministic automata the hierarchy collapses for the level $[0,1]$ (Büchi automata).

	\subsection{Trees}\label{Sec_Trees}
	A  \AP""tree"" is a set of sequences of non-negative integers $T\subseteq \oo^*$ that is prefix-closed: if $\tt\cdot i \in T$, for $\tt \in \oo^*, i\in \oo$, then $\tt\in T$. 
	In this report we will only consider finite trees.
	
	The elements of $T$ are called  \AP""nodes"". A ""subtree"" of $T$ is a tree $T'\subseteq T$. The empty sequence~$\varepsilon$ belongs to every non-empty "tree" and it is called the  \AP""root"" of the tree.  
	 A "node" of the form $\tt \cdot i$, $i\in \oo$, is called a  \AP""child"" of $\tt$, and $\tt$ is called its  \AP""parent"". We let $\mathit{Children}(\tt)$ denote the set of children of a node $\tt$. Two different children $\ss_1,\ss_2$ of $\tt$ are called  \AP""siblings"", and we say that $\ss_1$ is  \AP""older"" than $\ss_2$ if $"\mlast"(\ss_1)<"\mlast"(\ss_2)$. We will draw the children of a node from left to right following this order.
	If two "nodes" $\tt,\ss$ verify $\tt \prefix \ss$, then $\tt$ is called an  \AP""ancestor"" of $\ss$, and $\ss$ a  \AP""descendant"" of $\tt$ (we add the adjective ``strict'' if in addition they are not equal).
	
	A "node" is called a  \AP""leaf"" of $T$ if it is a maximal sequence of $T$ (for the "prefix relation" $\prefix$). A  \AP""branch"" of $T$ is the set of prefixes of a "leaf". The set of branches of $T$ is denoted $\AP""\mathit{Branch}""(T)$. We consider the lexicographic order over leaves, that is, for two leaves $\ss_1, \, \ss_2$, $\ss_1<_{\mathit{lex}}\ss_2$ if $\ss_1(k)<\ss_2(k)$, where $k$ is the smallest position such that $\ss_1(k)\neq \ss_2(k)$. We extend this order to $"\mathit{Branch}"(T)$: let $\bb_1$, $\bb_2$ be two branches defined by the leaves $\ss_1$ and $\ss_2$ respectively. We define $\bb_1<\bb_2$ if  $\ss_1 <_{\mathit{lex}}\ss_2$. That is, the set of branches is ordered from left to right.
	
	For a node $\tt \in T$ we define $\AP""\mathit{Subtree}_T""(\tt)$ as the subtree consisting on the set of nodes that appear below $\tt$, or above it in the same branch (they are "ancestors" or "descendants" of $\tt$):
	\[  \mathit{Subtree}_T(\tt)= \{ \ss \in T \; : \; \ss \prefix \tt \text{ or } \tt \prefix \ss \}. \] We omit the subscript $T$ when the tree is clear from the context.
	
	Given a node $\tau$ of a tree $T$, the  \AP""depth"" of $\tt$ in $T$ is defined as the length of $\tt$, $\mdepth(\tt)=|\tt|$ (the root $\varepsilon$ has depth $0$). The  \AP""height of a tree"" $T$, written $\mheight(T)$, is defined as the maximal depth of a "leaf" of $T$ plus $1$. The  \AP""height of the node"" $\tt\in T$ is $\mheight(T)-\mdepth(\tt)$ (maximal leaves have height $1$).
	
	A  \AP""labelled tree"" is a pair $(T,\nu)$, where $T$ is a "tree" and $\nu: T \rightarrow \Lambda$ is a labelling function into a set of labels $\Lambda$.

	\begin{example}
		In Figure~\ref{Fig_Tree} we show a tree $T$ of "height" $4$ and we show $\msubtree(\tt)$ for $\tt=\langle 2 \rangle$. The node $\tt$ has "depth" $1$ and height $3$. The branches $\aa$, $\bb$, $\gamma$ are ordered as $\aa<\bb<\gamma$.

		\begin{figure}[ht]
			\centering 
			\begin{minipage}[b]{0.45\textwidth} 
				
				\begin{tikzpicture}[square/.style={regular polygon,regular polygon sides=4}, align=center,node distance=2cm,inner sep=3pt]
				
				\node at (0,4)  (R) {$\langle \varepsilon\rangle$};
				
				\node at (-1.5,3)  (0) {$ \; \langle  0\rangle$ };
				\node at (0,3)  (1) {$\; \langle 1\rangle$ };
				\node at (1,3)  (2) {$\tt{=}\langle2\rangle$};
				
				\node at (-2,2)  (00) {$\; \langle 0{,}0\rangle$};
				\node at (-1,2)  (01) {$\langle 0{,}1\rangle$};
				\node at (0.5,2)  (20) {$\langle2{,}0\rangle$};
				\node at (1.75,2)  (21) {$\langle 2{,}1\rangle$};
				
				\node at (-1,1)  (010) {$\; \langle  0{,}1{,}0\rangle$};
				\node at (1,1)  (210) {$\langle 2{,}1{,}0\rangle$};
				\node at (2.25,1)  (211) {$\langle 2{,}1{,}1\rangle$};
				
				\node at (-2,1.5)  (aa) {$\aa$};
				\node at (-1,0.6)  (bb) {$\bb$};
				\node at (1,0.6)  (bb) {$\gamma$};
				
				\draw   
				(R) edge (0)
				(R) edge (1)
				(R) edge (2)
				
				(0) edge (00)
				(0) edge (01)
				
				(2) edge (20)
				(2) edge (21)
				
				(01) edge (010)
				
				(21) edge (210)
				(21) edge (211);
				\end{tikzpicture}
				\caption*{\centering Tree $T$.}
				\end{minipage}
			\hspace{10mm}
				\begin{minipage}[b]{0.2\textwidth} 
					
					\begin{tikzpicture}[square/.style={regular polygon,regular polygon sides=4}, align=center,node distance=2cm,inner sep=3pt]
					
					\node at (0,4)  (R) {$\langle \varepsilon\rangle$};
					
					\node at (1,3)  (2) {$\tt$};
					
					\node at (0.5,2)  (20) {$\langle2{,}0\rangle$};
					\node at (1.5,2)  (21) {$\langle 2{,}1\rangle$};
					
					\node at (1,1)  (210) {$\langle 2{,}1{,}0\rangle$};
					\node at (2,1)  (211) {$\langle 2{,}1{,}1\rangle$};				
					
					\draw   
					(R) edge (2)
					
					(2) edge (20)
					(2) edge (21)
					
					(21) edge (210)
					(21) edge (211);
					\end{tikzpicture}
					\caption*{ \centering $\msubtree(\tt)$, $\tt=\langle 2 \rangle$.}
			\end{minipage}
		\caption{Example of a tree.}
		\label{Fig_Tree}
	\end{figure}
		\end{example}

\newpage

\section{An optimal transformation of Muller into parity conditions}
\label{sec:zielonka-tree}

In the previous section we have presented different classes of "acceptance conditions" for "transition systems" over infinite words, with "Muller conditions" being the most general kind of "$\oo$-regular conditions".
In order to translate a "Muller condition" $\F$ over $\Gamma$ into a simpler one, the usual procedure is to build
 a "deterministic automaton" over $\Gamma$ using a simpler condition that accepts $\F$, i.e., this automaton will accept the words $u\in \Gamma^\oo$ such that $"\minf"(u)\in \F$. As we have asserted, the simplest condition that we could use in general in such a deterministic automaton is a "parity" one, and the number of priorities that we can use is determined by the position of the "Muller condition" in the "parity hierarchy".
	
	In this section we build a deterministic parity automaton that recognises a given "Muller condition", and we prove that this automaton has minimal size and uses the optimal number of priorities. This construction is based in the notion of the "Zielonka tree", introduced in~\cite{zielonka1998infinite} (applied there to the study of the optimal memory needed to solve a Muller game). In most cases, this automaton strictly improves other constructions such as the LAR~\cite{gurevich1982trees} or its modifications~\cite{kretinski2017IAR}.
	
	All constructions and proofs on this section can be regarded as a special case of those of Section~\ref{sec:acd}. However, we include the proofs for this case here since we think that this will help the reader to understand many ideas that will reappear in Section~\ref{sec:acd} in a more complicated context.

	\subsection{The Zielonka tree automaton}\label{Section_PresentationZielonkaTree}
	In this first section we present the Zielonka tree and the parity automaton that it induces.
	
	\begin{definition}[Zielonka tree of a Muller condition]\label{Def_ZielonkaTree}
		Let $\Gamma$ be a finite set of colours and $\F\subseteq \P(\Gamma)$ a "Muller condition" over $\Gamma$. The  \AP""Zielonka tree"" of $\F$, written $T_\F$, is a tree labelled with subsets of $\Gamma$ via the labelling $\nu:T_\F \rightarrow \P(\Gamma)$, defined inductively as:
		\begin{itemize}
			\item $\nu(\varepsilon)=\Gamma$
			\item If $\tt$ is a node already constructed labelled with $S=\nu(\tt)$, we let $S_1,\dots,S_k$ be the maximal subsets of $S$ verifying the property
			\[ S_i \in \F \; \Leftrightarrow \; S\notin \F \quad \text{ for each } i=1,\dots,k . \]
			For each $i=1,\dots,k$ we add a child to $\tt$ labelled with $S_i$.
		\end{itemize}
	\end{definition}

	\begin{remark*}
		We have not specified the order in which children of a node appear in the Zielonka tree. Therefore, strictly speaking there will be several Zielonka trees of a "Muller condition". The order of the nodes will not have any relevance in this work and we will speak of ``the'' Zielonka tree of $\F$.
	\end{remark*}

	\begin{definition}
		We say that the condition $\F$ and the tree $"T_\F"$ are \intro(Zielonka){even} if $\Gamma\in \F$, and that they are \kl(Zielonka){odd} on the contrary. We associate a priority $\AP ""p_Z(\tt)""$ to each node (to each level in fact) of the "Zielonka tree" as follows:
		\begin{itemize}
			\item If $"T_\F"$ is \kl(Zielonka){even}, then $p_Z(\tt)=\mdepth(\tt)$.
			\item If $"T_\F"$ is \kl(Zielonka){odd}, then $p_Z(\tt)=\mdepth(\tt)+1$.
			
		\end{itemize}
	\end{definition}
	
	In this way, $p_Z(\tt)$ is even if and only if $\nu(\tt)\in \F$. We represent nodes $\tt\in "T_\F"$ such that $"p_Z(\tt)"$ is even as a  \AP""circle"" (\emph{round nodes}), and those for which $p_Z(\tt)$ is odd as a \emph{square}. 
	
	\begin{example}\label{Example_ZielonkaTrees}
		Let $\Gamma_1=\{a,b,c\}$ and $\F_1=\{\{a\},\{b\}\}$ (the "Muller condition" of the automaton of Example~\ref{Example_AutomataForL}). The "Zielonka tree" $T_{\F_1}$ is shown in Figure~\ref{Fig_ZielonkaTree1}. It is \kl(Zielonka){odd}.
		
		Let $\Gamma_2=\{a,b,c,d\}$ and
		\vspace{-2mm} \[\F_2=\{\{a,b,c,d\},\{a,b,d \},\{ a,c,d\},\{ b,c,d \},\{a,b\},\{a,d\},\{b,c\},\{b,d \},\{ a\},\{b \},\{d \}  \} .\]
		The "Zielonka tree" $T_{\F_2}$ is \kl(Zielonka){even} and it is shown on Figure~\ref{Fig_ZielonkaTree2}.
		
		On the right of each tree there are the priorities assigned to the nodes of the corresponding level. We have named the branches of the Zielonka trees with greek letters and we indicate the names of the nodes in \textcolor{Violet2}{violet}.
		\begin{figure}[ht]
			\begin{minipage}[b]{0.5\textwidth} 
				
				\begin{tikzpicture}[square/.style={regular polygon,regular polygon sides=4}, align=center,node distance=2cm,inner sep=3pt]
				
				\node at (0,1.5) [draw, rectangle, text height=0.5cm, text width=1cm] (R) { a,b,c \vspace{1mm}};
				
				\node at (-1,0) [draw, ellipse,text height=0.3cm, text width=0.5cm] (0) {a};
				\node at (1,0) [draw, ellipse,text height=0.3cm, text width=0.5cm] (1) {b};
				
				\node at (1.8,1.5)  (p2) {1};
				\node at (1.8,0)  (p3) {2};
				\node at (-1,-0.8)  (aa) {$\aa$};
				\node at (1,-0.8)  (bb) {$\bb$};
				\node at (-2,0)  (pp) {};
				
				\node at (0.9,1.5)  (neps) {$\textcolor{Violet2}{ \langle \varepsilon \rangle}$};
				\node at (-0.3,-0.2)  (n0) {{\scriptsize$\textcolor{Violet2}{ \langle 0\rangle }$}};
				\node at (0.35,-0.2)  (n1) {{\scriptsize$\textcolor{Violet2}{ \langle 1 \rangle} $}};
				\draw   
				(R) edge (0)
				(R) edge (1);
				\end{tikzpicture}
				\caption{ Zielonka tree $T_{\F_1}$.}
				\label{Fig_ZielonkaTree1}
			\end{minipage}
		\begin{minipage}[b]{0.4\textwidth} 
			\begin{tikzpicture}[square/.style={regular polygon,regular polygon sides=4}, align=center,node distance=2cm,inner sep=3pt]
			
			\node at (0,2.6) [draw,ellipse, minimum width=1.2cm,minimum height=0.8cm] (R) {a,b,c,d};
			
			\node at (-1.5,1.3) [draw, rectangle] (0) {a,b,c};
			\node at (1.5,1.3) [draw, rectangle, minimum width=0.8cm] (1) {c,d};
			
			\node at (-2.5,0) [draw, ellipse , minimum width=0.8cm] (00) {a,b};
			\node at (-0.5,0) [draw, ellipse , minimum width=0.8cm] (01) {b,c};
			\node at (1.5,0) [draw, ellipse , minimum width=0.8cm] (10) {d};
			
			\node at (-0.5,-1.3) [rectangle, draw, text width=0.4cm] (010) {c};

			\node at (3,2.6)  (p0) {0};
			\node at (3,1.3)  (p1) {1};
			\node at (3,0) (p2) {2};
			\node at (3,-1.3) (p3) {3};
			
			\draw   
			(R) edge (0)
			(R) edge (1)
			
			(0) edge (00)
			(0) edge (01)
			(01) edge (010)
			(1) edge (10);

			\node at (-2.5,-0.7)  (aa) {$\aa$};
			\node at (-0.5,-2)  (bb) {$\bb$};
			\node at (1.5,-0.7)  (cc) {$\gamma$};
			
			\node at (1.2,2.6)  (neps) {$\textcolor{Violet2}{ \langle \varepsilon \rangle}$};
			\node at (-2.2,1.3)  (n0) {{\scriptsize$\textcolor{Violet2}{ \langle 0\rangle }$}};
			\node at (0.8,1.3)  (n1) {{\scriptsize$\textcolor{Violet2}{ \langle 1\rangle }$}};
			
			\node at (-3.3,0)  (n00) {{\scriptsize$\textcolor{Violet2}{ \langle 0{,}0\rangle}$}};
			\node at (-1.3,0)  (n01) {{\scriptsize$\textcolor{Violet2}{ \langle 0{,}1\rangle}$}};
			\node at (0.8,-0)  (n10) {{\scriptsize$\textcolor{Violet2}{ \langle 1{,}0\rangle}$}};
			
			\node at (-1.3,-1.3)  (n010) {{\scriptsize$\textcolor{Violet2}{ \langle 0{,}1{,}0\rangle}$}};
			
			\end{tikzpicture}
			\caption{Zielonka tree $T_{\F_2}$.}
			\label{Fig_ZielonkaTree2}
		\end{minipage}

		\end{figure}
	\end{example}
	
	We show next how to use the "Zielonka tree" of $\F$ to build a "deterministic automaton" recognising the "Muller condition" $\F$.

	\begin{definition}
		For a branch $\bb \in \mbranch("T_\F")$ and a colour $a\in \Gamma$ we define $\intro(Zielonka){\mathit{Supp}}(\bb,a)=\tt$ as the "deepest" node (maximal for $\prefix$) in $\bb$ such that $a\in \nu(\tt)$. 
		
	
	\end{definition}
	
	\begin{definition}\label{Def_Nextbranch_Zielonka}
		Given a tree $T$, a branch $\bb\in \mbranch(T)$ and a node $\tt\in \bb$, if $\tt$ is not a "leaf" then it has a unique "child" $\ss_\bb$ such that $\ss_\bb \in \bb$. In this case, we let $\AP""\mathit{Nextchild}""(\bb,\tt)$ be the next "sibling" of $\ss_\bb$ on its right, that is:
		\[ \mathit{Nextchild}(\bb,\tt)=\begin{cases}
			\text{ Smallest child of } \tt  \text{ if } \ss_\bb \text{ is the greatest child of } \tt.\\[2mm]
			\text{ Smallest older sibling of } \ss_\bb  \text{ if not.}
		\end{cases} \]
		
		 We define $\AP""\mathit{Nextbranch}""(\bb,\tt)$ as the leftmost branch in $T$ (smallest in the order defined in Section~\ref{Sec_Trees}) below $"\mathit{Nextchild}"(\bb,\tt)$, if $\tt$ is not a "leaf", and we let $\mathit{Nextbranch}(\bb,\tt)= \bb$ if $\tt$ is a leaf of $T$.
		 
	\end{definition}

	\begin{example}
		In the previous example, on the tree $"T_{\F_2}"$ of Figure~\ref{Fig_ZielonkaTree2}, we have that $\kl(Zielonka){\mathit{Supp}}(\aa,c)=\langle 0 \rangle$, $\mnextc(\bb,\langle \varepsilon \rangle)=\langle 1 \rangle$, $"\mathit{Nextbranch}"(\bb,\langle \varepsilon \rangle)=\gamma$, $\mnextc(\bb,\langle 0 \rangle)=\langle 0 {,} 0 \rangle$ and  $"\mathit{Nextbranch}"(\bb,\langle 0 \rangle)=\aa$.
	\end{example}

	\begin{definition}[Zielonka tree automaton]\label{Def_ZielonkaAutomaton}
		Given a "Muller condition" $\F$ over $\Gamma$ with "Zielonka tree" $T_\F$, we define the  \AP""Zielonka tree automaton"" $\ZF=(Q,\Gamma,q_0, [\mu,\eta],\dd, p:Q\times \Gamma \rightarrow [\mu,\eta])$ as a "deterministic automaton" using a "parity" acceptance condition given by $p$, where
		\begin{itemize}
		\item $Q=\mbranch(T_\F)$, the set of states is the set of branches of $T_\F$.
		\item The initial state $q_0$ is irrelevant, we pick the leftmost branch of $T_\F$.
		\item $\dd(\bb,a)= \mnextb(\bb,\kl(Zielonka){\mathit{Supp}}(\bb,a))$. 
		
		\item $\mu=0, \; \eta=\mheight(T_\F)-1$ if $\F$ is \kl(Zielonka){even}.
		\item $\mu=1, \; \eta=\mheight(T_\F)$ if $\F$ is \kl(Zielonka){odd}.
		
		\item $p(\bb,a)="p_Z"(\kl(Zielonka){\mathit{Supp}}(\bb,a))$.
		\end{itemize}
	\end{definition}
	
	The transitions of the automaton are determined as follows: if we are in a branch $\bb$ and we read a colour $a$, then we move up in the branch $\bb$ until we reach a node $\tt$ that contains the colour $a$ in its label. Then we pick the child of $\tt$ just on the right of the branch $\bb$ (in a cyclic way) and we move to the leftmost branch below it. We produce the priority corresponding to the depth of $\tt$.
	
	
	\begin{example}\label{Example_ZielonkaTreeAutomata}
		Let us consider the conditions of Example~\ref{Example_ZielonkaTrees}. The "Zielonka tree automaton" for the "Muller condition" $\F_1$ is shown in Figure~\ref{Fig_ZielonkaTreeAutomaton1}, and that for $\F_2$ in Figure~\ref{Fig_ZielonkaTreeAutomaton2}. States are the branches of the respective "Zielonka trees".
\begin{figure}[ht]
	
		\begin{minipage}[b]{0.3\textwidth} 
		
		\begin{tikzpicture}[square/.style={regular polygon,regular polygon sides=4}, align=center,node distance=2cm,inner sep=2pt]
		
		\node at (0,2) [state, initial] (0) {$\aa$};
		\node at (2,2) [state] (1) {$\bb$};
		
		\path[->] 
		(0)  edge [in=70,out=110,loop] 	node[above] {$a:\textcolor{Green2}{2}$ }   (0)
		(0)  edge [in=150,out=30] 	node[above] {$b, c : \textcolor{Green2}{1}$ }   (1)
		
		(1)  edge [in=-30,out=210]  node[below,pos=0.3] {$a,c: \textcolor{Green2}{1}$ }   (0)
		(1)  edge [in=70,out=110,loop] 	node[above] {$b : \textcolor{Green2}{2}$ }   (1);
		
		\node at (0,-0.3) (space) {};
		\end{tikzpicture}
		\caption{ The "Zielonka tree automaton" $\Z_{\F_1}$.}
		\label{Fig_ZielonkaTreeAutomaton1}
	\end{minipage}
	\hspace{15mm}
	\begin{minipage}[b]{0.3\textwidth} 
		\begin{tikzpicture}[square/.style={regular polygon,regular polygon sides=4}, align=center,node distance=2cm,inner sep=2pt]
		
		\node at (0,2) [state, initial] (0) {$\aa$};
		\node at (3,2) [state] (1) {$\bb$};
		\node at (1.5,0) [state] (2) {$\gamma$};
		
		\path[->] 
		(0)  edge [in=70,out=110,loop] 	node[above] {$a,b:\textcolor{Green2}{2}$ }   (0)
		(0)  edge [out=20,in=160] 	node[above] {$c:\textcolor{Green2}{1}$ }   (1)
		(0)  edge [out=-50,in=130] 	node[right,pos=0.5] {$d : \textcolor{Green2}{0}$ }   (2)
		
		(1)  edge [out=110,in=70,loop]  node[above] {$b: \textcolor{Green2}{2}$ }   (1)
		(1)  edge [out=20,in=-20,loop] 	node[right] {$c : \textcolor{Green2}{3}$ }   (1)
		(1)  edge [out=200,in=-20]  node[above,pos=0.5] {$a: \textcolor{Green2}{1}$ }   (0)
		(1)  edge [out=-90,in=30] 	node[right] {$d : \textcolor{Green2}{0}$ }   (2)
		
		(2)  edge [out=230,in=270,loop]  node[left] {$c: \textcolor{Green2}{1}$ }   (2)
		(2)  edge [out=280,in=320,loop] 	node[right] {$d : \textcolor{Green2}{2}$ }   (2)
		(2)  edge [out=170,in=270]  node[left] {$a,b: \textcolor{Green2}{0}$ }   (0);
		
		\end{tikzpicture}
		\caption{The "Zielonka tree automaton" $\Z_{\F_2}$.}
		\label{Fig_ZielonkaTreeAutomaton2}
	\end{minipage}
\end{figure}
	\end{example}

	\begin{proposition}[Correctness]\label{Prop_Correctness-Zielonka}
		Let $\F\subseteq \P(\Gamma)$ be a "Muller condition" over $\Gamma$. Then, a word $u\in \Gamma^\oo$ verifies $"\minf(u)"\in \F$ ($u$ belongs to the Muller condition) if and only if $u$ is "accepted by" $"\ZF"$. 
	\end{proposition}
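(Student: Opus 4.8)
The plan is to run $\ZF$ on $u$ and to isolate the single node of $\TF$ that governs the minimal priority produced infinitely often. Write $S=\minf(u)$, which is nonempty since $\Gamma$ is finite. Recall that a state of $\ZF$ is a branch $\beta$, that reading a colour $a$ from $\beta$ outputs the priority $p_Z(\mathit{Supp}(\beta,a))$, and that the support node $\mathit{Supp}(\beta,a)$ always lies on $\beta$. Since $p_Z$ is a strictly increasing function of the depth, the minimal priority produced infinitely often equals $p_Z(\tau^*)$, where $\tau^*$ is a node of minimal depth that occurs as a support infinitely often; thus $u$ is accepted iff $p_Z(\tau^*)$ is even. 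As $p_Z(\tau)$ is even exactly when $\nu(\tau)\in\F$, the whole statement reduces to proving $\nu(\tau^*)\in\F \iff S\in\F$, together with the fact that $\tau^*$ is well defined (unique at its depth).

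The core is a stabilization lemma on the eventual behaviour of the run. First I would fix a time $N$ after which (a) only colours of $S$ are read, and (b) no strict ancestor of $\tau^*$ is ever again a support; such an $N$ exists because there are finitely many ancestors and each is a support only finitely often, and finitely many colours lie outside $S$. After $N$ the support, being a node of the current branch, is comparable to $\tau^*$ and is never a strict ancestor of it, hence equals $\tau^*$ or is a descendant of it; a short case analysis of the $\mathit{Nextbranch}$ move then shows that once the branch passes through $\tau^*$ it keeps passing through $\tau^*$ forever, so from some point on the support never leaves the descendants of $\tau^*$. Moreover, since $\mathit{Nextchild}$ advances cyclically through the children of the support node and $\tau^*$ is a support infinitely often, every child of $\tau^*$ is entered infinitely often; this also yields uniqueness of $\tau^*$, as any other infinitely occurring support must lie on the stabilized branch and hence be comparable to $\tau^*$.

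From the lemma I would extract the two inclusions that pin down the $\F$-status. For $S\subseteq\nu(\tau^*)$: if some $a\in S\setminus\nu(\tau^*)$ existed, then reading $a$ on a branch through $\tau^*$ (the situation after stabilization, since $a$ is read infinitely often) would make the deepest node of the branch containing $a$ a strict ancestor of $\tau^*$, contradicting (b). For $S\not\subseteq\nu(\sigma)$ for every child $\sigma$ of $\tau^*$: if $S\subseteq\nu(\sigma)$, then after the run enters the subtree below $\sigma$ every subsequent colour lies in $\nu(\sigma)$, so the support stays at depth at least $\mdepth(\sigma)$ forever and $\tau^*$ is never a support again, a contradiction. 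Finally, by the construction of $\TF$ the children of $\tau^*$ are the maximal subsets of $\nu(\tau^*)$ whose $\F$-membership is opposite to that of $\nu(\tau^*)$; hence a set $S\subseteq\nu(\tau^*)$ contained in no child cannot have the opposite membership (it would otherwise extend to a maximal such subset, i.e.\ a child), giving $\nu(\tau^*)\in\F \iff S\in\F$. When $\tau^*$ is a leaf the argument degenerates pleasantly: a leaf has no child precisely because every subset of its label shares its $\F$-membership, so again the equivalence holds. Combining this with the reduction of the first paragraph finishes the proof.

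I expect the stabilization lemma to be the main obstacle. The combinatorial step relating the membership of $S$ and of $\nu(\tau^*)$ via maximal subsets is routine, but establishing that the $\mathit{Nextbranch}$/$\mathit{Nextchild}$ mechanism realises a fair round-robin over the children of $\tau^*$, and that the support escapes upward through $\tau^*$ only finitely often, requires careful bookkeeping of how a branch is rewritten at each step.
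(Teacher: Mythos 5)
Your proof is correct and takes essentially the same route as the paper's: your $\tau^*$ is the paper's $\tau_p$ (the unique highest node occurring infinitely often as a support), your stabilization lemma is the paper's observation that eventually every support is a descendant of $\tau_p$ and the run stays among branches of $\mathit{Subtree}(\tau_p)$, and your two inclusions $\mathit{Inf}(u)\subseteq\nu(\tau^*)$ and $\mathit{Inf}(u)\nsubseteq\nu(\sigma)$ for every child $\sigma$ (proved via the round-robin of $\mathit{Nextchild}$ over the children) are exactly the paper's two bullet points, concluded identically from the definition of the Zielonka tree. The only difference is organizational: you obtain uniqueness of $\tau^*$ as a consequence of stabilization, while the paper derives it first, from the fact that subtrees below distinct siblings have disjoint sets of branches, so that passing between them forces a support of strictly smaller depth.
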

	
	\begin{proof}
		Let us first remark that we can associate to each input word $u\in \GG^\oo$ an infinite sequence of nodes in the Zielonka tree $\{\tt_{u,i}\}_{i=0}^{\infty}$ as follows: let $\bb_i$ be the state of the Zielonka tree automaton (the branch of the tree $"T_\F"$) reached after reading $u_0u_1\dots u_{i-1}$ ($\bb_0$ being the leftmost branch), then
		\[  \tt_{u,i}=\kl(Zielonka){\mathit{Supp}}(\bb_i,u_i)  \]
		 The sequence of priorities produced by the automaton $\ZF$ when reading $u$ is given by the priorities associated to $\tt_{u,i}$, that is, $"\moutput"_\ZF(u)=\{"p_Z"(\tt_{u,i})\}_{i=0}^{\infty}$.
		
		Let $p_{\min}$ be the minimal priority produced infinitely often in $"\moutput"_\ZF(u)$. We first show that there is a unique node appearing infinitely often in $\{\tt_{u,i}\}_{i=0}^{\infty}$ such that $"p_Z"(\tt_{u,i})=p_{\min}$. Indeed, transitions of $\ZF$ verify that $\dd(\bb,a)$ is a branch in the subtree under $\kl(Zielonka){\mathit{Supp}}(\bb,a)$. However, subtrees below two different "siblings" have disjoint sets of branches, so if $\tt_{u,i}$ and $\tt_{u,k}$ are siblings, for some $k>i$, then there must exist some transition at position $j$, $i<j<k$ such that $\kl(Zielonka){\mathit{Supp}}(\bb_j,u_j)$ is a strict "ancestor" of $\tt_{u,i}$ and $\tt_{u,k}$. Therefore,  $"p_Z"(\tt_{u,j})<"p_Z"(\tt_{u,i})$, what cannot happen infinitely often since $p_{\min}="p_Z"(\tt_{u,i})$.
		
		We let $\tt_p$ be the highest node visited infinitely often. The reasoning above also proves that all nodes appearing infinitely often in $\{\tt_{u,i}\}_{i=0}^{\infty}$ are descendants of $\tt_p$, and therefore the states appearing infinitely often in the "run over $u$" in $\ZF$ are branches in $\msubtree(\tt_p)$. We will prove that
		
		\begin{itemize}
			\item $\minf(u)\subseteq \nu(\tt_p)$.
			\item For every child $\ss$ of $\tt_p$, $\minf(u)\nsubseteq \nu(\ss)$.
		\end{itemize}
	
		Therefore, by the definition of the "Zielonka tree", $\minf(u)$ is accepted if and only if $\nu(\tt_p)\in \F$ and thus
		$$ \minf(u)\in \F \quad \Leftrightarrow \quad \nu(\tt_p)\in \F \quad \Leftrightarrow \quad "p_Z"(\tt_p)=p_{\min} \, \text{ is even.} $$
		
		In order to see that $\minf(u)\subseteq \nu(\tt_p)$, it suffices to remark that for every branch $\bb$ of  $\msubtree(\tt_p)$ and for every $a\notin \nu(\tt_p)$, we have that $\kl(Zielonka){\mathit{Supp}}(\bb, a)$ is a strict "ancestor" of $\tt_p$. Since the nodes $\tt_{u,i}$ appearing infinitely often are all descendants of $\tt_p$, the letter $a$ cannot belong to $\minf(u)$ if $a\notin \nu(\tt_p)$.
		
		Finally, let us see that $\minf(u)\nsubseteq \nu(\ss)$ for every child of $\tt_p$. Suppose that $\minf(u)\subseteq \nu(\ss)$ for some child $\ss$. Since we visit $\tt_p$ infinitely often, transitions of the form $\dd(\bb,a)$ such that $\tt_p = \msupp(\bb,a)$ take place infinitely often. By definition of $\mnextb(\bb,a)$, after each of these transitions we move to a branch passing through the next child of $\tt_p$, so we visit all children of $\tt_p$ infinitely often. Eventually we will have $\ss\in \dd(\bb,a)$ (the state reached in $\ZF$ will be some branch $\bb'$ below $\ss$).
		However, since $\minf(u)\subseteq \nu(\ss)$, for every $a\in \minf(u)$ and every $\bb'\in \msubtree(\ss)$, we would have that $\kl(Zielonka){\mathit{Supp}}(\bb',a)$ is a descendant of $\ss$, and therefore we would not visit again $\tt_p$ and the priority $p_{\min}$ would not be produced infinitely often, a contradiction.
	\end{proof}
	
	\subsection{Optimality of the Zielonka tree automaton}\label{Section_OptimalityZielonkaTree}
	We prove in this section the strong optimality of the "Zielonka tree automaton", both for the number of priorities (Proposition~\ref{Prop_OptimalPrioritiesZielonka}) and for the size (Theorem~\ref{Th_OptimalityZielonkaTree}).
	
	Proposition~\ref{Prop_OptimalPrioritiesZielonka} can be proved easily applying the results of~\cite{niwinskiwalukievicz1998Relating}. We present here a self-contained proof. 
	
	\begin{lemma}\label{Lemma_PrioritiesInRange}
		Let $\P$ be a parity "transition system" with set of edges $E$ and priorities given by $p:E \rightarrow [\mu,\eta]$ such that the minimal priority it uses is $\mu$ and the maximal one is $\eta$. If the number of different priorities used in $\P$ ($|p(E)|$) is smaller or equal than $\eta-\mu$, then we can relabel $\P$ with a parity condition that is "equivalent over" $\P$ that uses priorities in $[\mu', \eta']$ and $\eta'-\mu' < \eta-\mu$.
	\end{lemma}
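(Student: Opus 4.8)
The plan is to exploit the fact that a parity condition depends on the sequence of priorities along a run only through the set of priorities produced infinitely often, and in fact only through the \emph{parity of the minimum} of that set. First I would isolate the following principle: if $f\colon \NN \to \NN$ is non-decreasing and parity-preserving (that is, $f(n) \equiv n \pmod 2$ for every $n$), then relabelling every edge $e$ with the priority $f(p(e))$ yields a condition equivalent over $\P$. Indeed, for any nonempty finite set $S \subseteq \NN$, monotonicity of $f$ gives $\min f(S) = f(\min S)$, and since $f$ preserves parities, $\min S$ is even if and only if $\min f(S)$ is even. Applying this to $S = \minf(p(\rr))$, the set of priorities produced infinitely often along a run $\rr$, shows that $\rr$ is accepting for $p$ exactly when it is accepting for $f \circ p$.

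Next I would use the hypothesis to locate a priority to collapse. Since all priorities lie in $[\mu,\eta]$, there are $\eta-\mu+1$ possible values, so the assumption $|p(E)| \le \eta-\mu$ forces at least one value of $[\mu,\eta]$ to be unused. As $\mu$ and $\eta$ are used by assumption, such a missing value $k$ must satisfy $\mu < k < \eta$; fix one.

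Finally I would define the relabelling
\[ f(n) = \begin{cases} n & \text{if } n < k, \\ n - 2 & \text{if } n > k, \end{cases} \]
the value at $k$ being irrelevant since $k \notin p(E)$. This $f$ is parity-preserving, as subtracting $2$ does not change parity. It is also non-decreasing on $p(E)$: the only nontrivial pairs are $a < k < b$ with $a,b \in p(E)$, and then $a \le k-1$ while $b \ge k+1$, so $f(a) = a \le k-1 \le b-2 = f(b)$. By the principle above, $p' = f \circ p$ is equivalent over $\P$. Since $f$ is non-decreasing and $\mu < k < \eta$, the new minimum is $f(\mu) = \mu$ and the new maximum is $f(\eta) = \eta - 2$; hence $p'$ uses priorities in $[\mu,\eta-2]$ and $\eta' - \mu' = (\eta-\mu) - 2 < \eta-\mu$, as required.

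The only delicate point, which I would flag as the main obstacle, is that $f$ need not be strictly increasing: when $a = k-1$ and $b = k+1$ are both used, they are merged into the single value $k-1$. This is harmless precisely because $k-1$ and $k+1$ have the same parity, and it is exactly the reason to work with non-decreasing (rather than injective) relabellings and to phrase the equivalence through the identity $\min f(S) = f(\min S)$ rather than through preservation of the full order.
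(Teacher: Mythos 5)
Your proposal is correct and takes essentially the same route as the paper's proof: locate an unused priority $d$ with $\mu < d < \eta$ and relabel by keeping priorities below $d$ and shifting those above $d$ down by two. The paper simply asserts that the resulting condition is "clearly" equivalent over $\P$, whereas you justify this step in detail via the principle that non-decreasing, parity-preserving relabellings preserve acceptance — a useful elaboration, but the same argument.
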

	\begin{proof}
		If $\P$ uses less priorities than the length of the interval $[\mu, \eta]$, that means that there is some priority $d$, $\mu<d< \eta$ that does not appear in $\P$. Then, we can relabel $\P$ with the parity condition given by:
		\[ p'(e)=\left\{ \begin{array}{c}
		p(e) \text{ if } p(e)<d\\
		p(e)-2 \text{ if } d <p(e) \end{array} \right. \]
		that is clearly an "equivalent condition over" $\P$ that uses priorities in $[\mu, \eta-2]$.
	\end{proof}
	
		\begin{proposition}[Optimal number of priorities]\label{Prop_OptimalPrioritiesZielonka}
			The "Zielonka tree" gives the optimal number of priorities recognising a "Muller condition" $\F$. More precisely, if $[\mu,\eta]$ are the priorities used by $\ZF$ and $\P$ is another "parity automaton" recognising $\F$, it uses at least $\eta-\mu+1$ priorities. Moreover, if it uses priorities in $[\mu',\eta']$ and $\eta-\mu = \eta'-\mu'$, then $\mu$ and $\mu'$ have the same parity.
		\end{proposition}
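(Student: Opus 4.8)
The plan is to read off from the deepest branch of $T_\F$ an \emph{alternating chain of loops} inside $\P$, and to extract from it a strictly increasing sequence of minimal priorities. Recall first that $\ZF$ uses exactly $h$ priorities, where $h$ is the height of $T_\F$, and that $\eta-\mu+1=h$. I would fix a branch of $T_\F$ of maximal length and let $\Gamma=S_0\supsetneq S_1\supsetneq\dots\supsetneq S_{h-1}$ be the (nonempty) labels $\nu$ along it, from the root to the deepest leaf. By the defining property of $T_\F$ these sets alternate: $S_i\in\F$ if and only if $S_{i+1}\notin\F$; and $S_0=\Gamma\in\F$ if and only if $\F$ is even, that is, if and only if $\mu=0$. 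Throughout, $\P$ is taken to be a \emph{deterministic} parity automaton recognizing $\F$ (this hypothesis is essential, since a nondeterministic Büchi automaton already recognizes any $\omega$-regular language).

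The heart of the argument, and the step I expect to be the main obstacle, is to build loops $\ell_0\supseteq\ell_1\supseteq\dots\supseteq\ell_{h-1}$ of $\P$ (as edge sets), all strongly connected, all passing through a single common state $q$, with the set of input letters labelling the edges of $\ell_i$ equal to $S_i$. Base loops are easy to produce: reading $(\widehat{S_i})^\omega$ from the initial state, where $\widehat{S_i}$ enumerates the letters of $S_i$, the edges visited infinitely often form a reachable strongly connected loop whose letter set is exactly $S_i$. The delicate point is to make these loops nested and cofinal in a single state. Starting from the innermost loop $\ell_{h-1}$ and a state $q$ on it, I would, for decreasing $i$, produce a loop reading $S_i$ that shares a state with the already-built region, using a Ramsey/idempotent argument on the transition monoid of $\P$ to force the run reading $S_i$-letters back to $q$; then I would take unions. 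Since two strongly connected edge sets that share a vertex have strongly connected union, the sets $L_i:=\ell_i\cup\dots\cup\ell_{h-1}$ are loops with letter set $S_i\cup\dots\cup S_{h-1}=S_i$, they satisfy $L_{h-1}\subseteq\dots\subseteq L_0$, and they all contain $q$. Renaming $L_i$ as $\ell_i$ yields the desired nested flower.

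With the flower in hand the priority computation is short. For each $i$ there is a run of $\P$ reaching $\ell_i$ and then traversing all of its edges infinitely often; this run reads a word $u_i$ with $\minf(u_i)=S_i$, so $\P$ accepts $u_i$ if and only if $S_i\in\F$, that is, if and only if the least priority $p_i$ occurring on $\ell_i$ is even. Since $\ell_{i+1}\subseteq\ell_i$, the priorities occurring on $\ell_{i+1}$ are among those of $\ell_i$, whence $p_{i+1}\ge p_i$; and because $S_i$ and $S_{i+1}$ lie on opposite sides of $\F$, the parities of $p_i$ and $p_{i+1}$ differ, forcing $p_{i+1}>p_i$. Thus $p_0<p_1<\dots<p_{h-1}$ are $h$ pairwise distinct priorities of $\P$, and $p_0$ is even if and only if $\Gamma\in\F$, i.e. $p_0\equiv\mu \pmod 2$.

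Finally I would conclude. The sequence above shows that $\P$ uses at least $h=\eta-\mu+1$ priorities, which is the first claim. For the second, suppose $\P$ uses priorities in $[\mu',\eta']$ with $\eta'-\mu'=\eta-\mu=h-1$. Then $[\mu',\eta']$ contains exactly $h$ integers, so the $h$ distinct values $p_0<\dots<p_{h-1}$ must fill it entirely; in particular $p_0=\mu'$, and therefore $\mu'=p_0\equiv\mu\pmod 2$, as required. (Should $\P$ use strictly fewer priorities than the length of its range, Lemma~\ref{Lemma_PrioritiesInRange} allows shrinking the range, so the comparison is always made against a range of the correct length.)
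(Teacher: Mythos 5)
Your overall strategy is sound in its second half: \emph{if} one has nested loops $\ell_0\supseteq\dots\supseteq\ell_{h-1}$ of $\P$ through a common accessible state, with letter sets exactly $S_0\supsetneq\dots\supsetneq S_{h-1}$, then your extraction of $h$ strictly increasing minimal priorities and the parity argument via Lemma~\ref{Lemma_PrioritiesInRange} are correct. The genuine gap is exactly at the step you flag as the main obstacle: the inside-out construction of the flower fails, and not merely for lack of detail. Take $\Gamma=\{a,b\}$ and $\F=\{\{a\}\}$, so that $T_\F$ is the single branch $\{a,b\}\supsetneq\{a\}$ and $h=2$, and consider the deterministic parity automaton with states $q_1,q_2$, initial state $q_1$, transitions $\delta(q_1,a)=q_1$ and $\delta(q_2,a)=q_2$ with priority $2$, and $\delta(q_1,b)=q_2$, $\delta(q_2,b)=q_2$ with priority $1$; it recognizes $\F$. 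Your base loop for $S_1=\{a\}$, obtained by reading $a^\omega$ from the initial state, is the $a$-self-loop at $q_1$, so $q=q_1$. But the only loop of this automaton passing through $q_1$ is that very self-loop: after reading $b$ the run is trapped in $q_2$ forever. Hence no loop with letter set $\{a,b\}$ shares a state with your already-built region, and no Ramsey or idempotent argument on the transition monoid can force the run reading $S_0$-letters back to $q_1$, because returning to $q_1$ is impossible in the underlying graph, not just hard to arrange. A flower does exist in this automaton, but only at $q_2$; a procedure that first commits to an arbitrary innermost loop can never find it.

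The repair is to reverse the order of your construction and build from the outside in, using accessible $X$-strongly connected components (lemma~\ref{Lemma_ExistenceSCC} of the paper, which the paper itself uses for the size bound, not for this proposition). Since $S_{i+1}\subseteq S_i$, one gets nested sets $P_0\supseteq P_1\supseteq\dots\supseteq P_{h-1}$, where $P_0$ is an accessible $S_0$-SCC of $\P$ and $P_{i+1}$ is an $S_{i+1}$-SCC of the automaton restricted to $P_i$ (well defined because $P_i$ is closed under $S_{i+1}$-transitions). Any $q\in P_{h-1}$ lies in every $P_i$, and since $P_i$ is closed under $S_i$-letters and strongly connected by words over $S_i$, there is a loop through $q$ with letter set exactly $S_i$; taking unions then yields your nested flower, after which the rest of your proof goes through verbatim. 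Note that even once repaired, your route is genuinely different from the paper's: the paper never builds loops, but instead proves by induction on $j$ that for every $v\in\Gamma^*$ the run over $(a_0a_1\dots a_jv)^\omega$ produces infinitely often a priority at most $\eta'-j$ (up to a parity adjustment), which yields both the count and the parity claim at once; your structural argument essentially re-derives the flower lemma of Niwi\'nski and Walukiewicz, at the cost of needing the $X$-SCC machinery.
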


		\begin{proof}
			Let $\P$ be a "deterministic" "parity automaton" recognising $\F$ using priorities in $[\mu',\eta']$. After Lemma~\ref{Lemma_PrioritiesInRange}, we can suppose that $\P$ uses all priorities in this interval. Let $\bb$ be a "branch" of $\TF$ of maximal length $h=\eta-\mu+1$, and let $S_0\subseteq S_{1} \subseteq \dots \subseteq S_{h-1}=\Gamma$ be the labellings of the nodes of this branch from bottom to top. Let us suppose $S_0\in \F$, the case $S_0\notin \F$ being symmetric.
			Let $a_i$ be the finite word formed concatenating the colours of $S_i$. In particular $a_i$ is accepted if and only if $i$ is even. Let $\eta'$ be the greatest priority appearing in the automaton $\P$. We prove by induction on $j$ that, for every $v\in \Gamma^*$, the "run over"
			$(a_0a_1\dots a_jv)^\oo$ in $\P$
			produces a priority smaller than or equal to $\eta'-j$, if $\eta'$ even, and smaller than or equal to $\eta'-j-1$ if $\eta'$ is odd. 
			We do here the case $\eta'$ even, the case $\eta'$ odd being symmetric.
			
			 For $j=0$ this is clear, since $\eta'$ is the greatest priority. For $j>0$, if it was not true, the smallest priority produced infinitely often reading $(a_0a_1\dots a_jv)^\oo$ would be strictly greater than $\eta'-j$ for some $v\in \Gamma^*$. Since $\eta'-j$ has the same parity as $j$ and $S_j\in \F$ if and only if $j$ is even, then the smallest priority produced infinitely often reading $(a_0a_1\dots a_jv)^\oo$ must have the same parity than $j$ and cannot be $\eta'-j+1$, so it is greater than $\eta'-j+2$. However, by induction hypothesis, the run over $(a_0a_1\dots a_{j-1}w)^\oo$ produces a priority smaller than or equal to $\eta'-(j-1)$ for every $w$, in particular for $w=a_jv$, contradicting the induction hypothesis.
			 
			 In particular, taking $v=\varepsilon$, we have proved that the "run over"
			 $(a_0a_1\dots a_{h-1})^\oo$ in $\P$ produces a priority smaller than or equal to $\eta'-(h-1)$ that has to be even if and only if $\mu$ is even. Therefore, $\P$ must use all priorities in $[\eta'-(h-1),\eta']$, that is, at least $h$ priorities.
			 
			
		\end{proof}

	
	In order to prove Theorem~\ref{Th_OptimalityZielonkaTree} we introduce the definition of an \emph{$X$-strongly connected component} and we present two key lemmas.
	
	\begin{definition}
		Let $\A=(Q,\Sigma, q_0, \Gamma, \delta,\macc)$ be a "deterministic automaton" and $X\subseteq \Sigma$ a subset of letters of the input alphabet. An  \AP""X-strongly connected component"" (abbreviated $X$-SCC) is a non-empty subset of states $S\subseteq Q$ such that:
		\begin{itemize}
			\item For every state $q\in S$ and every letter $x\in X$, $\delta(q,x)\in S$.
			\item For every pair of states $q,q'\in S$ there is a finite word $w\in X^*$ such that $\delta(q,w)=q'$.
		\end{itemize}
	\end{definition}
	
	That is, an $X$-SCC of $\A$ are the states of an "$X$-complete" part of $\A$ that forms a "strongly connected subgraph".
	
	
	\begin{lemma}\label{Lemma_ExistenceSCC}
		For every "deterministic automaton" $\A=(Q,\Sigma, q_0, \Gamma, \delta,\macc)$ and every subset of letters $X\subseteq \Sigma$ there is an "accessible" "$X$-SCC" in $\A$.
	\end{lemma}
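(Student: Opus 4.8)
The plan is to reduce the statement to a standard fact about finite directed graphs: a finite non-empty "directed graph" always possesses a \emph{sink} "strongly connected component", that is, one from which no edge exits. First I would encode the $X$-transitions as a graph. Let $G_X$ be the "directed graph" on vertex set $Q$ having an edge from $q$ to $\dd(q,x)$ for every $q\in Q$ and every $x\in X$. With this encoding, a subset $S\subseteq Q$ is an "$X$-SCC" exactly when it is non-empty, closed under the edges of $G_X$ (this is the first defining condition $\dd(q,x)\in S$), and induces a "strongly connected" subgraph of $G_X$ (this is the second condition, since a "path" from $q$ to $q'$ inside $S$ spells a word $w\in X^*$ with $\dd(q,w)=q'$).

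Next I would restrict attention to the "accessible part" $A\subseteq Q$. The key observation is that $A$ is closed under every transition $\dd(\cdot,a)$ with $a\in\Sigma$: if $q$ is "reachable" by a word $u$, then $\dd(q,a)$ is reachable by $ua$. In particular $A$ is closed under all $X$-transitions, so no edge of $G_X$ starting in $A$ leaves $A$; moreover $A$ is non-empty since $q_0\in A$. I would therefore work inside the subgraph of $G_X$ induced on $A$.

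Now I would extract a sink component. Ordering the "strongly connected components" of $G_X$ restricted to $A$ by reachability gives a finite, non-empty, acyclic quotient, which must have a component $S\subseteq A$ from which no edge of $G_X$ exits. I would then verify that $S$ is an accessible "$X$-SCC": it is non-empty and contained in $A$, hence "accessible"; being a sink, every $X$-edge out of $S$ returns to $S$, giving $\dd(q,x)\in S$ for all $q\in S$, $x\in X$; and being a "strongly connected component", any $q,q'\in S$ are joined by a "path" inside $S$, whose letters form the required word $w\in X^*$ with $\dd(q,w)=q'$.

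There is no deep obstacle here, but two points require care. The first is that one must pass to the "accessible part" \emph{before} taking a sink component, since a sink "strongly connected component" of $G_X$ over all of $Q$ need not be "reachable" from $q_0$; restricting to $A$ first, and using its closure under $X$, is exactly what guarantees both accessibility and the $X$-closure condition simultaneously. The second is the degenerate case $X=\emptyset$, where $G_X$ has no edges and every singleton is trivially an "$X$-SCC"; here one simply takes $\{q_0\}$, which is covered by the same argument.
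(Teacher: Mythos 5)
Your proof is correct, but it is organised differently from the paper's. The paper argues by induction on the number of states of $\A$: after restricting to the accessible part, either the whole state set is already an $X$-SCC, or there exist states $q,q'$ such that $q'$ is not reachable from $q$ by any word of $X^*$; in that case the set $Q_q$ of states $X$-reachable from $q$ is non-empty, closed under $X$-transitions, and strictly smaller, so the induction hypothesis applied to the automaton restricted to $Q_q$ with alphabet $X$ yields the required $X$-SCC. You instead delegate the combinatorial core to the classical fact that the condensation of a finite non-empty directed graph is acyclic and therefore has a sink component, and you verify that a sink strongly connected component of the $X$-transition graph restricted to the accessible part $A$ is exactly an accessible $X$-SCC (closure because no edge leaves a sink component, mutual $X$-reachability because paths between vertices of one component stay inside it). The two arguments produce the same objects, namely terminal $X$-reachability classes; the paper's recursion, which keeps descending along $X$-reachability until strong connectivity holds, is essentially an inlined proof of your sink-component fact. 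Both also share the two points of care you identify: restrict to $A$ before extracting the component, and use that $A$ is closed under all transitions so that accessibility and $X$-closure are obtained simultaneously. What your version buys is brevity and modularity, since the graph-theoretic lemma is standard and reusable; what the paper's version buys is a self-contained elementary argument that never invokes the condensation. Your explicit treatment of the degenerate case $X=\emptyset$ (taking the singleton $\{q_0\}$) is a detail the paper leaves implicit, and you handle it correctly.
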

	\begin{proof}

		Restricting ourselves to the set of "accessible" states of $\A$ we can suppose that every state of the automaton is accessible.
		
		We prove the lemma by induction on $|\A|$. For $|\A|=1$, the state of the automaton forms an $X$-SCC. For $|\A|>1$, if  $Q$ is not an "$X$-SCC", there are $q,q'\in Q$ such that there does not exist a word $w\in X^*$ such that $\delta(q,w)=q'$. Let 
		\[ Q_q=\{ p\in Q \; : \; \exists u\in X^* \text{ such that } p= \dd(q,u) \} .\]
		Since $q'\notin Q_q$, the set $Q_q$ is strictly smaller than $Q$. The set $Q_q$ is non-empty and closed under transitions labelled by letters of $X$, so the restriction of $\A$ to this set of states and the alphabet $X$ forms an automaton $\A_{Q_q,X}=(Q_q,X, q, \Gamma, \delta')$ (where $\dd'$ is the restriction of $\dd$ to these states and letters). By induction hypothesis, $\A_{Q_q,X}$ contains an $X$-SCC that is also an $X$-SCC for $\A$.	
	\end{proof}
	
	

	\begin{lemma}\label{Lemma_Disjoint_X-SCC}
		Let $\F$ be a "Muller condition" over $\Gamma$, $T_\F$ its "Zielonka tree" and $\P=\ab(P,\Gamma,p_0,\ab [\mu',\eta'],\ab \delta_P,p':P\rightarrow [\mu',\eta'])$ a "deterministic" parity automaton recognising $\F$. Let $\tt$ be a node of $T_\F$ and $C=\nu(\tt)\subseteq \Gamma$ its label. Finally, let $A,B\subseteq C$ be two different subsets maximal such that $C\in \F \, \Leftrightarrow \, A \notin \F$, $C\in \F \, \Leftrightarrow \, B \notin \F$ (they are the labels of two different children of $\tt$). Then, if $P_A$ and $P_B$ are two "accessible" "$A$-SCC" and "$B$-SCC" of $\P$ respectively, they satisfy $P_A \cap P_B=\emptyset$.
	\end{lemma}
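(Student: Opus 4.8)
The plan is to argue by contradiction: I assume there is a common state $q \in P_A \cap P_B$, and build from it an infinite word on which the parity automaton $\P$ is forced to behave inconsistently.

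Before touching $\P$, I record a combinatorial fact about the two labels. As $A$ and $B$ are \emph{distinct} subsets of $C$, both maximal for the property ``$S \in \F \Leftrightarrow C \notin \F$'', neither can contain the other, so $A \cup B$ is a strict superset of $A$ contained in $C$. Maximality of $A$ then forbids $A \cup B$ from having the property, i.e.\ $A \cup B \in \F \Leftrightarrow C \in \F$. Combined with the property for $A$ and $B$ this yields the two facts I will use: $A \in \F \Leftrightarrow B \in \F$ (both being equivalent to $C \notin \F$), whereas $A \cup B$ sits on the opposite side, $A \cup B \in \F \Leftrightarrow A \notin \F$. (Incidentally this also shows that $A$ and $B$ are nonempty whenever $\tt$ has two children, which is what the construction below needs.)

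Next I build the witnesses inside $\P$. Since $P_A$ is an $A$-SCC it is closed under reading letters of $A$ and strongly connected by $A$-words, so there is $w_A \in A^*$ with $\delta_P(q,w_A)=q$ whose run from $q$ never leaves $P_A$ and reads every letter of $A$ at least once, giving $\minf((w_A)^\oo)=A$. Symmetrically I obtain $w_B \in B^*$ with $\delta_P(q,w_B)=q$, staying in $P_B$ and using all of $B$, so $\minf((w_B)^\oo)=B$ and $\minf((w_A w_B)^\oo)=A\cup B$. Let $p_A$ and $p_B$ be the minimal priorities produced by $\P$ along the finite runs of $w_A$ and $w_B$, each started and ended at $q$. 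Choosing any access word $v$ with $\delta_P(p_0,v)=q$ (possible since $P_A$ is accessible) and using that $\P$ is deterministic, the runs of $v(w_A)^\oo$, $v(w_B)^\oo$ and $v(w_A w_B)^\oo$ are eventually periodic, returning to $q$ after every period, so the minimal priority seen infinitely often on them is exactly $p_A$, $p_B$ and $\min(p_A,p_B)$ respectively. The delicate step here, where both axioms of an $A$-SCC are genuinely used, is ensuring $w_A$ can simultaneously stay in $P_A$, return to $q$, and exhaust $A$.

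The contradiction then comes from reading off acceptance three times. Because $\P$ recognises $\F$, a word is accepted iff its set of recurring colours lies in $\F$; hence $p_A$ is even $\Leftrightarrow A\in\F$, $p_B$ is even $\Leftrightarrow B\in\F$, and $\min(p_A,p_B)$ is even $\Leftrightarrow A\cup B\in\F$. The first two together with $A\in\F\Leftrightarrow B\in\F$ give $p_A\equiv p_B \pmod 2$, so $\min(p_A,p_B)$, being one of them, has the same parity as $p_A$; but the third together with $A\cup B\in\F\Leftrightarrow A\notin\F$ forces $\min(p_A,p_B)$ to have the parity opposite to $p_A$, which is impossible. Hence no common state exists and $P_A\cap P_B=\emptyset$. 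Besides the construction of $w_A$, the only other point demanding care is the claim that the minimal priority produced infinitely often along the interleaved word is genuinely $\min(p_A,p_B)$, which rests precisely on determinism making the run literally periodic after the prefix $v$.
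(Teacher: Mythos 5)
Your proof is correct and follows essentially the same route as the paper's: build loops $w_A, w_B$ at the common state $q$ reading exactly $A$ and exactly $B$, then derive a parity contradiction from the concatenated loop whose recurring set is $A \cup B$, using maximality of $A$ and $B$ to place $A\cup B$ on the opposite side of $\F$. You are in fact slightly more careful than the paper, which fixes $C \in \F$ by symmetry and leaves the access word and the fact $A \cup B \in \F \Leftrightarrow C \in \F$ implicit.
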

	
	\begin{proof}
		We can suppose that $C\in \F$ and $A,B \notin \F$. Suppose that there is a state $q\in P_A \cap P_B$. Let $A=\{a_1,\dots,a_l\}$, $B=\{b_1,\dots,b_r\}$ and $q_1=\delta(q,a_1\cdots a_l)\in A$, $q_2=\delta(q,b_1\cdots b_r)\in B$. By definition of an $X$-SCC, there are words $u_1 \in A^*$, $u_2\in B^*$ such that $\dd(q_1,u_1)=q$ and $\dd(q_2,u_2)=q$. Since $A,B \notin \F$, the minimum priorities $p_1$ and $p_2$ produced by the "runs over" $(a_1\cdots a_lu_1)^\oo$ and $(b_1\cdots b_r u_2)^\oo$ starting from $q$ are odd. However, the run over $(a_1\cdots a_lu_1b_1\cdots b_r u_2)^\oo$ starting from $q$ must produce an even minimum priority (since $A\cup B \in \F$), but the minimum priority visited in this run is $\min\{p_1,p_2 \}$, odd, which leads to a contradiction.
	\end{proof}
	
%
	\begin{theorem}[Optimal size of the Zielonka tree automaton]\label{Th_OptimalityZielonkaTree}
		Every "deterministic" "parity automaton" $\P=(P,\Gamma,p_0,[\mu',\eta'],\delta_P,p':P\times \Gamma\rightarrow [\mu',\eta'])$ accepting a "Muller condition" $\F$ over $\Gamma$ verifies
			$ |"\ZF"|\leq |\P| .$
	\end{theorem}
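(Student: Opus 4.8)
The plan is to associate injectively, to each leaf (equivalently, each branch) of $T_\F$, a non-empty set of states of $\P$, in such a way that the sets attached to distinct leaves are pairwise disjoint. Since $|\ZF| = |\mbranch(T_\F)|$ is exactly the number of leaves of $T_\F$, the existence of such pairwise disjoint non-empty subsets of $P$ immediately gives $|\ZF| \le |P| = |\P|$. The sets I would use are accessible $X$-SCCs: Lemma~\ref{Lemma_ExistenceSCC} produces them, and Lemma~\ref{Lemma_Disjoint_X-SCC} supplies the disjointness. After restricting $\P$ to its accessible part, I would build these SCCs by a top-down recursion following the structure of the tree.

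Concretely, first I would take an accessible $\Gamma$-SCC $P_\varepsilon$ of $\P$, which exists by Lemma~\ref{Lemma_ExistenceSCC} applied with $X=\Gamma=\nu(\varepsilon)$. Then, inductively, suppose a node $\tt$ with label $C=\nu(\tt)$ has been assigned an accessible $C$-SCC $P_\tt$. For a child $\tt_i$ of $\tt$ with label $C_i=\nu(\tt_i)\subseteq C$, I would restrict $\P$ to the state set $P_\tt$ and the alphabet $C$: since $P_\tt$ is a $C$-SCC it is closed under $C$-transitions and strongly connected by words of $C^*$, so this restriction is a well-defined deterministic automaton, and Lemma~\ref{Lemma_ExistenceSCC} applied to it with the subset $C_i$ yields an accessible $C_i$-SCC $P_{\tt_i}\subseteq P_\tt$. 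The one point that genuinely needs checking is that $P_{\tt_i}$ is an accessible $C_i$-SCC of the original automaton $\P$ (not just of the restriction): closure under $C_i$-transitions in $\P$ holds because $C_i\subseteq C$ and $P_\tt$ is closed under $C$-transitions, while accessibility in $\P$ follows from accessibility of $P_\tt$ together with internal reachability inside $P_\tt$ via $C$-words. This compatibility between the two lemmas is the main (and essentially the only) delicate step; everything else is bookkeeping along the tree.

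With this nested family $\{P_\tt\}_{\tt\in T_\F}$ at hand, I would conclude as follows. For two distinct children $\tt_i,\tt_j$ of a common node $\tt$, the sets $P_{\tt_i}$ and $P_{\tt_j}$ are accessible $C_i$-SCC and $C_j$-SCC of $\P$ associated to the labels of two different children, so Lemma~\ref{Lemma_Disjoint_X-SCC} gives $P_{\tt_i}\cap P_{\tt_j}=\emptyset$. Now take two distinct leaves $\ell\neq\ell'$: their branches diverge at some node $\tt$, entering two distinct children $\tt_i\neq\tt_j$, and by the nestedness of the construction $P_\ell\subseteq P_{\tt_i}$ and $P_{\ell'}\subseteq P_{\tt_j}$, whence $P_\ell\cap P_{\ell'}=\emptyset$. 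Thus the SCCs $\{P_\ell\}$ indexed by the leaves of $T_\F$ are non-empty and pairwise disjoint subsets of $P$, so the number of leaves is at most $|P|=|\P|$. As $|\ZF|$ equals this number of leaves, we obtain $|\ZF|\le|\P|$, as desired.
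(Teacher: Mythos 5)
Your proof is correct and is essentially the paper's argument: both rest on the same two pillars, Lemma~\ref{Lemma_ExistenceSCC} to produce accessible SCCs nested along the branches of $T_\F$ and Lemma~\ref{Lemma_Disjoint_X-SCC} (which the paper states for an arbitrary node of $T_\F$, exactly as you need) to separate the SCCs attached to distinct siblings. The only difference is organizational: the paper runs the construction as an induction on $|\Gamma|$, using that the restriction of $\P$ to an accessible $A_i$-SCC recognizes $\F \rest A_i$, whose Zielonka tree is the subtree under the $i$-th child of the root, and then sums $|\P| \geq \sum_i |P_i| \geq \sum_i n_i$, whereas you unroll that induction into one explicit nested family $\{P_\tt\}_{\tt \in T_\F}$ and count the pairwise disjoint leaf-SCCs directly --- a sound reorganization in which your compatibility check (an SCC of the restricted automaton is an accessible SCC of $\P$) plays the role of the paper's semantic step about restricted automata.
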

	\begin{proof}
		Let $\P$
		 be a deterministic parity automaton accepting $\F$. To show $|\ZF|\leq |\P|$ we proceed by induction on the number of colours $|\Gamma|$. For $|\Gamma|=1$ the two possible "Zielonka tree automata" have one state, so the result holds. Suppose $|\Gamma|>1$ and consider the first level of $\TF$.

		Let $n$ be the number of children of the root of $\TF$. For $i=1,...,n$, let $A_i=\nu(\tt_i)\subseteq C$ be the label of the $i$-th child of the root of $\TF$, $\tt_i$, and let $n_i$ be the number of branches of the subtree under $\tt_i$, $\msubtree(\tt_i)$. We remark that $|\ZF|=\sum_{i=1}^{n}n_i$. Let $\F\upharpoonright A_i:=\{F\in \F \; : \; F\subseteq A_i \}$. Since each $A_i$ verifies $|A_i|<|C|$ and the "Zielonka tree" for $\F \rest A_i$ is the subtree of $\TF$ under the node $\tt_i$, every "deterministic parity automaton" accepting $\F \rest A_i$ has at least $n_i$ states, by induction hypothesis. 

		Thanks to Lemma~\ref{Lemma_ExistenceSCC}, for each $i=1,\dots,n$ there is an accessible "$A_i$-SCC" in $\P$, called $P_i$. Therefore, the restriction to $P_i$ (with an arbitrary initial state) is an automaton recognising $F\rest A_i$. By induction hypothesis, for each $i=1,...,n $, $|P_i|\geq n_i$. Thanks to Lemma~\ref{Lemma_Disjoint_X-SCC}, we know that for every $i,j\in \{1,\dots,n\},\; i \neq j$, $P_i \cap P_j = \emptyset$. We deduce that 
		\begin{equation*}\label{Eq_NumberStates}
		|\P|\geq \sum\limits_{i=1}^{n}|P_i|\geq \sum\limits_{i=1}^{n}n_i=|\ZF| .\qedhere
		\end{equation*}

	\end{proof}
	
	\subsection{The Zielonka tree of some classes of acceptance conditions}\label{Section_ZielonkaTree-SomeTypes}
	 In this section we present some results proven by Zielonka in~\cite[Section 5]{zielonka1998infinite} that show how we can use the "Zielonka tree" to deduce if a "Muller condition" is representable by a "Rabin", "Streett" or "parity" condition. These results are generalised to "transition systems" in Section~\ref{Section_4.2.StructuralParity}.
	 
	 We first introduce some definitions. The terminology will be justified by the upcoming propositions.
	 
	 \begin{definition}\label{Def_ShapesOfTrees}
	 	Given a tree $T$ and a function assigning priorities to nodes, $p:T\rightarrow \NN$, we say that $(T,p)$ has
	 	\begin{itemize}
	 		\item  \AP""Rabin shape"" if every node with an even priority assigned ("round" node) has at most one child.
	 		
	 		\item  \AP""Streett shape"" if every node with an odd priority assigned ("square" node) has at most one child.
	 		
	 		\item  \AP""Parity shape"" if every node has at most one child.
	 	\end{itemize}
	 \end{definition}

	\begin{proposition}\label{Prop_RabinZielonkaTree}
		
		Let $\F\subseteq \P( \Gamma)$ be a "Muller condition". The following conditions are equivalent:
		\begin{enumerate}
			\item $\F$ is "equivalent" to a "Rabin condition".
			\item The complement of the family $\F$ is closed under union.
			\item $"T_\F"$ has "Rabin shape". 
		\end{enumerate}
		
%
	\end{proposition}

	\begin{proposition}\label{Prop_StreettZielonkaTree}
		
		Let $\F\subseteq \P( \Gamma)$ be a "Muller condition". The following conditions are equivalent:
		\begin{enumerate}
			\item $\F$ is "equivalent" to a "Streett condition".
			\item The family $\F$ is closed under union.
			\item $"T_\F"$ has "Streett shape". 
		\end{enumerate}
		
	\end{proposition}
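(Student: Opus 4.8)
The plan is to derive the statement from its Rabin counterpart (Proposition~\ref{Prop_RabinZielonkaTree}) by duality, and to handle the closure clause by a direct argument on the tree. Throughout I write $\widetilde{\F}=\P(\Gamma)\setminus\F$ for the complement family. Since $\minf(u)\neq\emptyset$ for every $u\in\Gamma^\oo$, the Muller conditions $\macc_\F$ and $\macc_{\widetilde\F}$ are complementary languages. Moreover, negating the defining formula of a Streett condition with pairs $\{(E_i,F_i)\}$ yields exactly the Rabin condition with the same pairs. Hence $\F$ is equivalent to a Streett condition if and only if $\widetilde\F$ is equivalent to a Rabin condition, which reduces item (1) to the first item of Proposition~\ref{Prop_RabinZielonkaTree} applied to $\widetilde\F$.

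The second step is to observe that $T_\F$ and $T_{\widetilde\F}$ are the \emph{same} labelled tree: the recursive choice of the maximal subsets $S_i$ with $S_i\in\F \Leftrightarrow S\notin\F$ is literally unchanged when $\F$ is replaced by $\widetilde\F$. Only the parity assignment flips, because $\nu(\tt)\in\F \Leftrightarrow \nu(\tt)\notin\widetilde\F$; thus the round nodes of $T_{\widetilde\F}$ are precisely the square nodes of $T_\F$, and conversely. Consequently $T_\F$ has Streett shape if and only if $T_{\widetilde\F}$ has Rabin shape, and combining this with the equivalence of items (1) and (3) in the Rabin proposition for $\widetilde\F$ establishes that items (1) and (3) here are equivalent.

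I expect the main obstacle to be the closure clause (2). The naive dualisation ``$\F$ closed under union $\Leftrightarrow$ $\widetilde\F$ closed under intersection'' is in fact \emph{false}: an intersection may produce $\emptyset$, which is never realised as some $\minf(u)$, so one cannot simply quote item (2) of the Rabin proposition for $\widetilde\F$. I would therefore prove the equivalence of (2) and (3) directly on $T_\F$, exploiting that unions of nonempty sets stay nonempty. For (2)$\Rightarrow$(3): if a square node $\tt$ had two children, their labels $S_1,S_2\in\F$ are maximal among the $\F$-subsets of $\nu(\tt)$, so $S_1\cup S_2\notin\F$ (otherwise $S_1$ would not be maximal), contradicting closure under union; hence every square node has at most one child.

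For (3)$\Rightarrow$(2): given $A,B\in\F$, assume for contradiction $A\cup B\notin\F$ and pick a deepest node $\tt$ with $A\cup B\subseteq\nu(\tt)$ (one exists since $\nu(\varepsilon)=\Gamma$). If $\nu(\tt)\in\F$, then $A\cup B\subsetneq\nu(\tt)$ is a proper subset not in $\F$, hence contained in some maximal such subset, i.e.\ in the label of a child of $\tt$; if $\nu(\tt)\notin\F$, then Streett shape gives $\tt$ a unique child $\ss$, and since $A,B\in\F$ are proper subsets of $\nu(\tt)$ they are each contained in a maximal $\F$-subset, necessarily $\nu(\ss)$, so $A\cup B\subseteq\nu(\ss)$. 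Either way we find a strictly deeper node still containing $A\cup B$, contradicting the choice of $\tt$; therefore $A\cup B\in\F$. Together with the first two steps this yields the cycle (1)$\Leftrightarrow$(3) and (2)$\Leftrightarrow$(3), proving the three conditions equivalent.
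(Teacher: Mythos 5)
Your proof is correct, and it takes a genuinely different route from the paper, which in fact contains no proof of this proposition at all: it is attributed to Zielonka, and the nearest argument in the paper is the proof of the transition-system generalisation of the Rabin case (Proposition~\ref{Prop_RelabellinRabin}), whose Streett counterpart (Proposition~\ref{Prop_RelabellinStreett}) is omitted as being its dual. Where the paper dualises at the level of \emph{proofs} (exchanging accepting and rejecting throughout the Rabin argument), you dualise at the level of \emph{statements}: you pass to $\widetilde{\F}=\P(\Gamma)\setminus\F$, observe that negating a Streett formula yields the Rabin condition on the same pairs, and that $T_{\widetilde{\F}}=T_\F$ with round and square nodes exchanged, so that $(1)\Leftrightarrow(3)$ transfers verbatim from Proposition~\ref{Prop_RabinZielonkaTree}; this is clean, though it makes your proof only as self-contained as that (also unproven) proposition. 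Your decision to handle clause (2) by a separate direct argument is not merely prudent but unavoidable, and for a stronger reason than the one you give: the obstruction is not really the empty set --- for instance $\F=\{\{c\}\}$ over $\Gamma=\{a,b,c\}$ is closed under union, yet $\{a,c\},\{b,c\}\in\widetilde{\F}$ while their non-empty intersection $\{c\}$ is not in $\widetilde{\F}$ --- but rather that clause (2) of Proposition~\ref{Prop_RabinZielonkaTree} is itself false as printed: the family of all subsets of $\{a,b,c\}$ meeting $\{a,b\}$ is a B\"uchi, hence Rabin, condition, yet it contains $\{a,c\}$ and $\{b,c\}$ but not $\{c\}$, so it is not closed under intersection. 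The correct Rabin clause (the one the paper actually uses in Proposition~\ref{Prop_RelabellinRabin}) is that \emph{rejecting} sets are closed under union; with that formulation your complementation would have disposed of clause (2) as well, since the rejecting sets of $\widetilde{\F}$ are exactly the members of $\F$. Your two tree arguments are sound: for $(2)\Rightarrow(3)$, two distinct children of a square node carry incomparable maximal accepting subsets of $\nu(\tt)$, so their union cannot lie in $\F$; for $(3)\Rightarrow(2)$, a deepest node whose label contains $A\cup B$ yields a strictly deeper such node in either case, a contradiction. One small wording point: in the square-node case, Streett shape only gives \emph{at most} one child, and the existence of the child $\ss$ comes from $A$ being an accepting subset of $\nu(\tt)$; your phrasing asserts uniqueness first, but the following clause does supply the missing existence, so this is cosmetic rather than a gap.
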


\begin{proposition}\label{Prop_ParityZielonkaTree}
	Let $\F\subseteq \P( \Gamma)$ be a "Muller condition". The following conditions are equivalent:
	\begin{enumerate}
		\item $\F$ is "equivalent" to a "parity condition".
		\item The family $\F$ and its complement are closed under union.
		\item $"T_\F"$ has "parity shape". 
	\end{enumerate}
	Moreover, if some of these conditions is verified, $\F$ is "equivalent" to a "$[1,\eta]$-parity condition" (resp. "$[0,\eta-1]$-parity condition") if and only if  $\mheight("T_\F")\leq \eta $ and in case of equality $"T_\F"$ is \kl(Zielonka){odd} (resp. \kl(Zielonka){even}).
\end{proposition}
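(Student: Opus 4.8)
The plan is to reduce the equivalence of $(1)$, $(2)$ and $(3)$ to the characterisations already obtained in Propositions~\ref{Prop_RabinZielonkaTree} and~\ref{Prop_StreettZielonkaTree}, and then to read the ``Moreover'' statement off the explicit priorities of $\ZF$ together with the optimality result of Proposition~\ref{Prop_OptimalPrioritiesZielonka}. The key elementary observation for $(2)\Leftrightarrow(3)$ is that every node $\tt$ of $T_\F$ is either a "round" or a "square" node, according to the parity of $p_Z(\tt)$. Hence $(T_\F,p_Z)$ has "parity shape" (every node has at most one child) if and only if it has simultaneously "Rabin shape" (every round node has at most one child) and "Streett shape" (every square node has at most one child). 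Feeding this into Propositions~\ref{Prop_RabinZielonkaTree} and~\ref{Prop_StreettZielonkaTree} yields that $\F$ is closed under intersection and under union if and only if $T_\F$ has both Rabin and Streett shape, i.e.\ if and only if $T_\F$ has parity shape.

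To close the cycle I would prove $(1)\Rightarrow(2)$ and $(3)\Rightarrow(1)$. For the first, a "parity condition" is at once a Rabin condition (it is a Rabin chain) and, since its complement is again a parity condition while the complement of a Rabin condition is a Streett condition, also a "Streett condition"; thus a condition "equivalent" to a parity condition is equivalent to both a Rabin and a Streett condition, and the two previous propositions make $\F$ closed under intersection and under union. For $(3)\Rightarrow(1)$ I would invoke the "Zielonka tree automaton": if $T_\F$ has parity shape it is a single chain of nodes and so possesses a unique "branch", whence $\ZF$ has a single state. Its transition function then simply relabels each colour $a$ by the priority $p_Z(\tt)$ of the deepest node $\tt$ of the chain whose label contains $a$; this is literally a parity condition over $\Gamma$, equivalent to $\F$ by the correctness of $\ZF$ (Proposition~\ref{Prop_Correctness-Zielonka}).

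For the ``Moreover'' part, write $h=\mheight(T_\F)$. When $T_\F$ has parity shape the labels along the chain form a strictly decreasing sequence of nonempty sets, so every node is the deepest node containing some colour and $\ZF$ produces each priority of its range; thus $\ZF$ realises $\F$ as a parity condition with priorities in $[0,h-1]$ when $T_\F$ is even and in $[1,h]$ when $T_\F$ is odd. Since adding an even constant to all priorities preserves the condition, this optimal interval can be placed inside any target interval $[1,\eta]$ or $[0,\eta-1]$ that is wide enough; checking the four cases shows that this is possible exactly when $h\le\eta$, the equality $h=\eta$ being feasible only when the parity of $T_\F$ matches the parity of the minimum of the target interval (odd for $[1,\eta]$, even for $[0,\eta-1]$), which gives the ``if'' directions. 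For the converse I would regard a condition over $\Gamma$ as a one-state parity automaton recognising $\F$ and apply Proposition~\ref{Prop_OptimalPrioritiesZielonka}: any such automaton uses at least $h$ priorities, forcing $h\le\eta$; and when $h=\eta$ the automaton is forced to use the whole target interval, so the parity clause of that proposition forces its minimum (hence the parity of $T_\F$) to agree with the parity of the interval's minimum, ruling out the mismatched equality case.

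I expect the ``Moreover'' bookkeeping to be the main obstacle: one has to keep exact track of the interval endpoints and of the parities of their minima, translate ``equivalent to a $[\mu,\eta]$-parity condition'' faithfully into a one-state parity automaton so that Proposition~\ref{Prop_OptimalPrioritiesZielonka} applies verbatim, and carefully rule out the equality case in precisely the two cross-parity situations.
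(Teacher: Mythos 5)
Your handling of $(3)\Rightarrow(1)$ and of the \emph{Moreover} clause is correct, and it is essentially the paper's own method in miniature: the paper never proves this proposition directly (it is imported from Zielonka's work), but its generalisation, Proposition~\ref{Prop_RelabellinParity}, is proved in Section~\ref{Section_4.2.StructuralParity} by exactly the argument you give --- when the tree is a chain, the Zielonka/ACD construction is a mere relabelling by priorities, and the interval bookkeeping is settled by the optimality result (there Proposition~\ref{Prop_OptimalityACD_Priorities}, here Proposition~\ref{Prop_OptimalPrioritiesZielonka}). Your four-case check of the even-shift argument, and your use of the same-width/same-parity clause of Proposition~\ref{Prop_OptimalPrioritiesZielonka} to exclude the mismatched equality case, are both sound. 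Where you genuinely diverge is in the cycle $(1)\Rightarrow(2)\Rightarrow(3)$: instead of direct arguments (comparing minimal priorities for $(1)\Rightarrow(2)$, and uniting the labels of two children of a node for $(2)\Rightarrow(3)$, as in the proof of Proposition~\ref{Prop_RelabellinParity}), you reduce everything to Propositions~\ref{Prop_RabinZielonkaTree} and~\ref{Prop_StreettZielonkaTree} via the observation that parity shape equals Rabin shape plus Streett shape; this is more modular and makes the corollary following the proposition trivial.

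However, that modularity imports a real defect. Condition (2) of Proposition~\ref{Prop_RabinZielonkaTree} (`$\F$ closed under intersection') is not in fact equivalent to Rabin representability, and consequently condition (2) here (`closed under union and intersection') is not equivalent to (1) or (3). Concretely: over $\Gamma=\{a,a',b\}$ the parity condition $p(a)=p(a')=0$, $p(b)=1$ gives a family $\F$ with $\{a,b\},\{a',b\}\in\F$ but $\{a,b\}\cap\{a',b\}=\{b\}\notin\F$, so (1) and (3) hold while (2) fails; conversely, $\F=\{\{a\},\{a,b,c\}\}$ over $\Gamma=\{a,b,c\}$ is closed under both union and intersection, yet the root of $T_\F$ is a round node with three children $\{a,b\},\{a,c\},\{b,c\}$, and $\F$ is not equivalent to any parity condition, so (2) holds while (1) and (3) fail. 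The robust closure condition --- the one the paper actually uses in Proposition~\ref{Prop_RelabellinParity} --- is that the union of two accepting sets is accepting \emph{and} the union of two rejecting sets is rejecting; closure of the accepting family under intersection neither implies nor follows from control of unions of rejecting sets, which is what bounds the number of children of round nodes. So your steps $(1)\Rightarrow(2)$ and $(2)\Rightarrow(3)$ are valid only as formal consequences of the misstated Proposition~\ref{Prop_RabinZielonkaTree}, and they cannot be repaired while condition (2) reads as it does; by contrast, your $(3)\Rightarrow(1)$ and the \emph{Moreover} part bypass condition (2) entirely and are genuinely correct. If condition (2) (and the Rabin proposition's condition) is restated in the union form, your reduction strategy goes through verbatim.
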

%
%

%
\begin{corollary}
	A "Muller condition" $\F \subseteq \P(\GG)$ is equivalent to a parity condition if and only if it is equivalent to both Rabin and Streett conditions.
\end{corollary}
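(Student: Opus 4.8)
The plan is to derive the corollary directly from the three characterization results already at hand, Propositions~\ref{Prop_RabinZielonkaTree}, \ref{Prop_StreettZielonkaTree} and~\ref{Prop_ParityZielonkaTree}, without introducing any new construction. The guiding observation is that each of these propositions translates ``$\F$ is equivalent to such-and-such a condition'' into a \emph{closure property} of the family $\F$: closure under intersection for Rabin, closure under union for Streett, and closure under \emph{both} union and intersection for parity. Since the parity closure hypothesis is literally the conjunction of the Rabin and Streett closure hypotheses, the two sides of the corollary coincide as soon as they are restated in terms of closure, and the statement becomes a purely formal recombination.

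Concretely, I would argue as follows. Assume first that $\F$ is equivalent to a parity condition. By Proposition~\ref{Prop_ParityZielonkaTree} this means $\F$ is closed under both union and intersection. Closure under intersection, via Proposition~\ref{Prop_RabinZielonkaTree}, gives that $\F$ is equivalent to a Rabin condition; closure under union, via Proposition~\ref{Prop_StreettZielonkaTree}, gives that $\F$ is equivalent to a Streett condition. Conversely, assume $\F$ is equivalent to both a Rabin and a Streett condition. Then Proposition~\ref{Prop_RabinZielonkaTree} yields that $\F$ is closed under intersection and Proposition~\ref{Prop_StreettZielonkaTree} yields that $\F$ is closed under union; having both closure properties, Proposition~\ref{Prop_ParityZielonkaTree} lets us conclude that $\F$ is equivalent to a parity condition. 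This closes both implications.

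Alternatively, the same fact can be read off the shape characterizations of the Zielonka tree $T_\F$, which I would mention as a second, purely combinatorial route. A tree has parity shape exactly when every node has at most one child, whereas Rabin shape constrains only the round nodes and Streett shape only the square nodes. Since each node of $T_\F$ is either round or square, $T_\F$ has simultaneously Rabin and Streett shape precisely when every node, of either colour, has at most one child, i.e.\ precisely when $T_\F$ has parity shape. Combined with the three propositions, this again identifies the parity case with the intersection of the Rabin and Streett cases.

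I expect essentially no obstacle: all the mathematical content lives in the three propositions, and the corollary is a recombination of their closure-property formulations. The only point deserving a moment's care is to keep the direction of each biconditional straight, and in particular to record that ``closed under union and intersection'' is the plain conjunction of the two single-closure hypotheses. Thus no compatibility or independence argument between the Rabin and Streett witnesses is required; the conjunction of closure properties suffices on the nose.
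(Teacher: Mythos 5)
Your proposal is correct and follows exactly the route the paper intends: the corollary is stated there without explicit proof, as an immediate recombination of Propositions~\ref{Prop_RabinZielonkaTree}, \ref{Prop_StreettZielonkaTree} and~\ref{Prop_ParityZielonkaTree}, whose parity characterisation (closure under both operations, equivalently parity shape of $T_\F$) is by definition the conjunction of the Rabin and Streett characterisations. Your secondary observation that Rabin shape together with Streett shape equals parity shape (since every node of $T_\F$ is either round or square) is likewise the intended combinatorial reading and requires no further argument.
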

	
	\begin{example}
		In Figures~\ref{Fig_ZielonkaTreeParity} and~\ref{Fig_ZielonkaTreeRabin} we represent "Zielonka trees" for some examples of "parity" and "Rabin" conditions.

		We remark that for a fixed number of "Rabin" (or Street) pairs we can obtain "Zielonka trees" of very different shapes that range from a single branch (for "Rabin chain conditions") to a tree with a branch for each "Rabin pair" and height $3$. 

	\begin{figure}[ht]
		\centering
		\begin{minipage}[b]{0.3\textwidth} 
			
			\begin{tikzpicture}[square/.style={regular polygon,regular polygon sides=4}, align=center,node distance=2cm,inner sep=3pt]
			
			\node at (0,3) [draw, rectangle, text height=0.3cm, text width=2cm] (R) { 1, 2, 3, 4};
			
			\node at (0,2) [draw, ellipse,text height=0.2cm, text width=1.2cm] (0) {2, 3, 4};
			
			\node at (0,1) [draw, rectangle, text height=0.3cm, text width=1cm] (1) { 3, 4};
			
			\node at (0,0) [draw, ellipse,text height=0.2cm, text width=0.5cm] (2) {4};
			
			\draw   
			(R) edge (0)
			(0) edge (1)
			(1) edge (2);
			
			\node at (-2,0) {};
			
			\end{tikzpicture}
			\caption{ Zielonka tree of a parity condition.}
			\label{Fig_ZielonkaTreeParity}
		\end{minipage}
		\hspace{10mm}
		\begin{minipage}[b]{0.3\textwidth} 
			\begin{tikzpicture}[square/.style={regular polygon,regular polygon sides=4}, align=center,node distance=2cm,inner sep=3pt]
			
			\node at (0,3) [draw,rectangle, minimum width=0.7cm,, minimum height=0.7cm,,scale=0.8] (R) { }; 
			
			\node at (-1.6,2) [draw, ellipse, minimum width=0.7cm,, minimum height=0.7cm,, scale=0.8] (0) {}; 
			\node at (1.6,2) [draw, ellipse, minimum width=0.7cm,, minimum height=0.7cm,, scale=0.8] (2) {}; 
			
			\node at (-1.6,1) [draw, rectangle, minimum width=0.7cm,, minimum height=0.7cm,, scale=0.8] (00) {}; 
			\node at (1.6,1) [draw, rectangle, minimum width=0.7cm,,minimum height=0.7cm,, scale=0.8] (20) {}; 
			
			\node at (-2.2,0) [draw, ellipse , minimum width=0.7cm,,minimum height=0.7cm,, scale=0.8] (000) {}; 
			\node at (-1,0) [draw, ellipse ,minimum width=0.7cm,, minimum height=0.7cm,, scale=0.8] (001) {}; 
			
			\node at (-1,-1) [draw, rectangle , minimum width=0.7cm,,minimum height=0.7cm,, scale=0.8] (0010) {}; 
			
			\node at (1,0) [draw, ellipse , minimum width=0.7cm,,minimum height=0.7cm,, scale=0.8] (200) {}; 
			\node at (2.2,0) [draw, ellipse, minimum width=0.7cm,,minimum height=0.7cm,, scale=0.8] (201) {}; 
			\node at (1,-1) [draw, rectangle ,minimum width=0.7cm,,minimum height=0.7cm,, scale=0.8] (2000) {}; 
			
			\node at (2.2,-1) [draw, rectangle ,minimum width=0.7cm,,minimum height=0.7cm,, scale=0.8] (2010) {}; 
			
			\draw   
			(R) edge (0)
			(R) edge (2)
			
			(0) edge (00)
			(2) edge (20)
			
			(00) edge (000)
			(00) edge (001)
			
			(20) edge (200)
			(20) edge (201)
			
			(200) edge (2000)
			
			(001) edge (0010)
			(201) edge (2010);
			
			\end{tikzpicture}
			\caption{Zielonka tree of a Rabin condition.}
			\label{Fig_ZielonkaTreeRabin}
		\end{minipage}
		
	\end{figure}
	
	\end{example}

\newpage

\section{An optimal transformation of Muller into parity transition systems}
\label{sec:acd}
In this section we present our main contribution: an optimal transformation of "Muller" "transition systems" into "parity" transition systems. Firstly, we formalise what we mean by ``a transformation'' using the notion of "locally bijective morphisms" in Section~\ref{Section_LocallyBijectiveMorphism}.
Then, we describe a transformation from a "Muller transition system" to a parity one. Most transformations found in the literature use the "composition" of the "transition system" by a parity automaton recognising the Muller condition (such as $"\Z_\F"$). In order to achieve optimality this does not suffice, we need to take into account the structure of the transition system. Following ideas already present in~\cite{wagner1979omega}, we analyse the alternating chains of accepting and rejecting cycles of the transition system. We arrange this information in a collection of "Zielonka trees" obtaining a data structure, the "alternating cycle decomposition", that subsumes all the structural information of the transition system necessary to determine whether a "run" is accepted or not. We present the "alternating cycle decomposition" in Section~\ref{Subsection_ACD} and we show how to use this structure to obtain a parity transition system that mimics the former Muller one in Section~\ref{Subsection_ACD-transformation}.
	
	In Section~\ref{Section_OptimalityACD} we prove the optimality of this construction. More precisely, we prove that if $\P$ is a parity transition system that admits a "locally bijective morphism" to a Muller transition system $\T$, then the transformation of $\T$ using the alternating cycle decomposition provides a smaller transition system than $\P$ and using less priorities.
	
	\subsection{Locally bijective morphisms as witnesses of transformations}\label{Section_LocallyBijectiveMorphism}
	We start by defining locally bijective morphisms.
	
	\begin{definition}
		Let $\T=(V,E,\msource,\mtarget,I_0,\macc)$, $\T'=(V',E',\msource',\mtarget',I_0',\macc')$ be two "transition systems". A  \AP""morphism of transition systems"", written $\pp: \T \rightarrow \T'$, is a pair of maps $(\pp_V: V \rightarrow V', \pp_E: E \rightarrow E')$ such that:
		\begin{itemize}
			\item $\pp_V(v_0)\in I_0'$ for every $v_0\in I_0$ (initial states are preserved).
			\item $\msource'(\pp_E(e))=\pp_V(\msource(e))$ for every $e\in E$ (origins of edges are preserved).

			\item $\mtarget'(\pp_E(e))=\pp_V(\mtarget(e))$ for every $e\in E$ (targets of edges are preserved).
			
			\item For every "run" $\rr \in \mrun_{\T}$, 
			$\rr \in \macc \; \Leftrightarrow \; \pp_E(\rr) \in \macc'$ (acceptance condition is preserved).
		\end{itemize}
	
	If $(\T,l_V,l_E)$, $(\T',l_V',l_E')$ are "labelled transition systems", we say that $\pp$ is a  \AP""morphism of labelled transition systems"" if in addition it verifies 
	\begin{itemize}
		\item $l_V'(\pp_V(v))=l_V(v)$ for every $v\in V$ (labels of states are preserved).
		\item $l_E'(\pp_E(e))=l_E(e)$ for every $e\in V$ (labels of edges are preserved).
	\end{itemize}
	\end{definition}

	We remark that it follows from the first three conditions that if $\rr \in \mrun_{\T}$ is a "run" in $\T$, then $\pp_E(\rr)\in \mrun_{T'}$ (it is a "run" in $\T'$ starting from some initial vertex).

	Given a "morphism of transition systems" $(\pp_V,\pp_E)$, we will denote both maps by $\pp$ whenever no confusion arises. We extend $\pp_E$ to $E^*$ and $E^\oo$ component wise.
	
	\begin{remark*}
		A "morphism of transition systems" $\pp=(\pp_V, \pp_E)$ is unequivocally characterised by the map $\pp_E$. Nevertheless, it is convenient to keep the notation with both maps.
	\end{remark*}
	
	\begin{definition}
	Given two "transition systems" $\T=(V,E,\msource,\mtarget,I_0,\macc)$, $\T'=(V',E',\msource',\mtarget',I_0',\macc')$, a "morphism of transition systems" $\pp: \T \rightarrow \T'$ is called
	
	\begin{itemize}
		\item  \AP""Locally surjective"" if 
		\begin{itemize}
		\item For every $v_0'\in I_0'$ there exists $v_0\in I_0$ such that $\pp(v_0)=v_0'$.
		\item 
		For every $v\in V$ and every $ e'\in E'$ such that $ \msource'(e')=\pp(v)$
		there exists $e\in E $ such that $ \pp(e)=e' $ and $ \msource(e)=v$. 

		\end{itemize}
		
		\item "Locally injective" if 
		\begin{itemize}
			\item For every $v_0'\in I_0'$, there is at most one $v_0\in I_0$ such that $\pp(v_0)=v_0'$. 
			
			\item For every $ v\in V$ and every $ e'\in E' $ such that $ \msource'(e')=\pp(v) $
			if there are $ e_1,e_2\in E $ such that $ \pp(e_i)=e'$ and $ \msource(e_i)=v$, for $ i=1,2 $, then $ e_1=e_2  $.

		\end{itemize}
		
		\item "Locally bijective" if it is both "locally surjective" and "locally injective".
	\end{itemize}

	\end{definition}

	\begin{remark*}
		Equivalently, 
		a "morphism of transition systems" $\pp$ is "locally surjective" (resp. injective) if the restriction of $\pp_E$ to $"\mout"(v)$ is a surjection (resp. an injection) into $"\mout"(\pp(v))$ for every $v\in V$ and the restriction of $\pp_V$ to $I_0$ is a surjection (resp. an injection) into $I_0'$. 
		
		If we only consider the "underlying graph" of a "transition system", without the "accepting condition", the notion of "locally bijective morphism" is equivalent to the usual notion of bisimulation. However, when considering the accepting condition, we only impose that the acceptance of each "run" must be preserved (and not that the colouring of each transition is preserved). This allows us to compare transition systems using different classes of accepting conditions. 
	\end{remark*}

	We state two simple, but key facts.
	\begin{fact}\label{Fact_LocBijMorph_BijectionRuns}
		If  $\pp: \T \rightarrow \T'$ is a "locally bijective morphism", then $\pp$ induces a bijection between the runs in $\mrun_{\T}$ and $\mrun_{\T'}$ that preserves their acceptance.
	\end{fact}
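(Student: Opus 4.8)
The plan is to show that the component-wise edge map $\rr \mapsto \pp_E(\rr)$ is itself the desired bijection. By the remark following the definition of a "morphism of transition systems", $\pp_E(\rr)$ already lies in $\mrun_{\T'}$ whenever $\rr \in \mrun_{\T}$, and the last clause of that definition guarantees $\rr \in \macc \Leftrightarrow \pp_E(\rr) \in \macc'$, so acceptance is preserved automatically. It therefore only remains to prove that this map is injective and surjective, which I would obtain from two parallel inductions on the length of a prefix: local injectivity drives the first and local surjectivity the second.

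For injectivity, suppose $\rr = e_0 e_1 \cdots$ and $\rr' = e_0' e_1' \cdots$ are runs in $\T$ with $\pp_E(\rr) = \pp_E(\rr')$, and prove $e_i = e_i'$ by induction on $i$. In the base case $\msource(e_0),\msource(e_0') \in I_0$ and $\pp_V(\msource(e_0)) = \msource'(\pp_E(e_0)) = \msource'(\pp_E(e_0')) = \pp_V(\msource(e_0'))$, so injectivity of $\pp_V$ on $I_0$ forces $\msource(e_0) = \msource(e_0') =: v$; then $e_0,e_0' \in \mout(v)$ have equal image, and injectivity of $\pp_E$ on $\mout(v)$ gives $e_0 = e_0'$. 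For the step, once $e_i = e_i'$ we get $\msource(e_{i+1}) = \mtarget(e_i) = \mtarget(e_i') = \msource(e_{i+1}') =: v$, and the two edges again lie in $\mout(v)$ with a common image, so local injectivity closes the induction.

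For surjectivity, given a run $\rr' = e_0' e_1' \cdots$ in $\T'$ I would build a preimage edge by edge. Using local surjectivity on initial states, pick $v_0 \in I_0$ with $\pp_V(v_0) = \msource'(e_0') \in I_0'$; since then $e_0' \in \mout(\pp_V(v_0))$, local surjectivity of $\pp_E$ on $\mout(v_0)$ yields $e_0 \in \mout(v_0)$ with $\pp_E(e_0) = e_0'$. Inductively, given $e_0,\dots,e_i$ with the correct images and $\mtarget(e_i) =: v$ satisfying $\pp_V(v) = \mtarget'(e_i') = \msource'(e_{i+1}')$, local surjectivity provides $e_{i+1} \in \mout(v)$ with $\pp_E(e_{i+1}) = e_{i+1}'$, extending the partial run validly. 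The resulting $\rr = e_0 e_1 \cdots$ starts in $I_0$ and satisfies $\pp_E(\rr) = \rr'$.

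None of the steps is genuinely difficult; the argument is mostly bookkeeping. The two points that need care are (i) invoking the correct clause of local bijectivity in the base cases — the one about initial states versus the one about $\mout(v)$ — and (ii) the surjectivity construction, which produces an infinite run by an unbounded forward recursion and so implicitly appeals to dependent choice. I expect (ii) to be the main (though mild) subtlety: one must verify that each extension keeps the sequence a valid run and observe that the choices at distinct positions need not be coordinated, so no consistency condition has to be checked.
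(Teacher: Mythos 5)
Your proof is correct, and it is exactly the routine verification the authors had in mind: the paper states this as one of two ``simple, but key facts'' and omits the proof entirely, since the map $\pp_E$ together with a prefix-by-prefix induction (local injectivity for injectivity, local surjectivity plus dependent choice for surjectivity, and the acceptance clause of the morphism definition for preservation) is the evident argument. Nothing is missing; your two flagged subtleties are handled appropriately.
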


	
	\begin{fact}\label{Fact_LocSurjMorph_OntoAccessible}
		If $\pp$ is a "locally surjective morphism", then it is onto the "accessible part" of $\T'$. That is, for every "accessible" state $v'\in \T'$, there exists some state $v\in \T$ such that $\pp_V(v)=v'$. In particular if every state of $\T'$ is "accessible", $\pp$ is surjective.
	\end{fact}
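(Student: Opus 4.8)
The plan is to take a finite "run" in $\T'$ witnessing the accessibility of $v'$ and to lift it, edge by edge, to a run in $\T$ whose endpoint maps onto $v'$. So let $v'\in V'$ be "accessible", witnessed by a finite run $\rr'=e_0'e_1'\cdots e_{n-1}'\in \mrun_{\T'}$ with $\msource'(e_0')=v_0'\in I_0'$ and $\mtarget'(e_{n-1}')=v'$ (the case $v'\in I_0'$ is just the base case below).

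First I would use the local surjectivity of $\pp$ on initial states to obtain $v_0\in I_0$ with $\pp_V(v_0)=v_0'$. I would then construct vertices $v_0,v_1,\dots,v_n\in V$ and edges $e_0,e_1,\dots,e_{n-1}\in E$ by induction on $i$, maintaining the invariant $\pp_V(v_i)=\msource'(e_i')$ for $i<n$. In the inductive step, given $v_i$ with $\pp_V(v_i)=\msource'(e_i')$, the second clause of local surjectivity applies: its hypothesis $\msource'(e_i')=\pp_V(v_i)$ is exactly the invariant, so it yields an edge $e_i\in E$ with $\pp_E(e_i)=e_i'$ and $\msource(e_i)=v_i$. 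Setting $v_{i+1}:=\mtarget(e_i)$ and invoking the target-preservation axiom of morphisms gives $\pp_V(v_{i+1})=\pp_V(\mtarget(e_i))=\mtarget'(\pp_E(e_i))=\mtarget'(e_i')$, which equals $\msource'(e_{i+1}')$ (when $i+1<n$) since $\rr'$ is a run, thereby restoring the invariant. At the last step we obtain $\pp_V(v_n)=\mtarget'(e_{n-1}')=v'$, so $v_n$ is the desired preimage and the first assertion is proved. The final sentence follows at once: if every state of $\T'$ is accessible, every state has a preimage, hence $\pp_V$ is surjective.

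The argument is essentially bookkeeping and I do not expect any genuine obstacle. The only point requiring care is matching the source/target data so that the hypothesis $\msource'(e')=\pp_V(v)$ of local surjectivity is met at each stage; this is precisely what the invariant $\pp_V(v_i)=\msource'(e_i')$ secures, and it is sustained by the target-preservation clause of the definition of a "morphisms" together with the fact that $\rr'$ is a genuine run. Note that the acceptance-preservation clause plays no role here, since accessibility is a purely structural property; consequently the same proof works verbatim for morphisms of "labelled transition systems".
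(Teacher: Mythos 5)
Your proof is correct. The paper states this fact without proof (it is offered as one of two ``simple, but key facts''), and your edge-by-edge lifting of the witnessing run --- local surjectivity applied once at the initial state and then at each successive source vertex, with the invariant maintained by the target-preservation clause of morphisms --- is exactly the routine argument the authors intend.
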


%

	Intuitively, if we transform a "transition system" $\T_1$ into $\T_2$ ``without adding non-determinism'', we will have a locally bijective morphism $\pp: \T_2 \rightarrow \T_1$. In particular, if we consider the "composition" $\T_2=\B \lhd \T_1$ of $\T_1$ by some "deterministic automaton" $\B$, as defined in Section~\ref{sec:notations}, the projection over $\T_1$ gives a "locally bijective morphism" from $\T_2$ to $\T_1$.

	\begin{example}\label{Example_ProductAndLocBijMorphism}
		Let $\A$ be the "Muller automaton" presented in the Example~\ref{Fig_AutomataForL}, and $\Z_{\F_1}$ the "Zielonka tree automaton" for its Muller condition $\F_1=\{\{a\},\{b\}\}$ as in the Figure~\ref{Fig_ZielonkaTreeAutomaton1}. We show them in Figure~\ref{Fig_MullerAndZielonkaAutomata} and their "composition" $\ZF \lhd \A$ in Figure~\ref{Fig_Product_AxZF}.	If we name the states of $\A$ with the letters $A$ and $B$, and those of $\Z_{\F_1}$ with $\aa,\bb$, there is a locally bijective morphism $\pp: \ZF \lhd \A \rightarrow \A$ given by the projection on the first component
		\[ \pp_V((X,y))=X \; \text{ for } X\in \{A,B\},\, y\in \{\aa,\bb\} \] 
		and $\pp_E$ associates to each edge $e\in \mout(X,y)$ labelled by $a\in \{0,1\}$ the only edge in $\mout(X)$ labelled with $a$. 
	\begin{figure}[ht]
		\centering
			\begin{minipage}[b]{0.4\textwidth} 
			\begin{tikzpicture}[square/.style={regular polygon,regular polygon sides=4}, align=center,node distance=2cm,inner sep=2pt]
			
			\node at (0,2) [state, initial] (0) {A};
			\node at (2,2) [state] (1) {B};
			
			\path[->] 
			(0)  edge [in=70,out=110,loop] 	node[above] {$0:\textcolor{Green2}{a}$ }   (0)
			(0)  edge [in=150,out=30] 	node[above] {$1 : \textcolor{Green2}{b}$ }   (1)
			
			(1)  edge [in=-30,out=210]  node[below] {$1: \textcolor{Green2}{b}$ }   (0)
			(1)  edge [in=70,out=110,loop] 	node[above] {$0 : \textcolor{Green2}{c}$ }   (1);
			
			\end{tikzpicture}
			\caption*{Muller automaton $\A$ with accepting condition $\F_1=\{\{a\},\{b\}\}$.}
		\end{minipage}
		\hspace{0mm}
		\begin{minipage}[b]{0.3\textwidth} 
			\begin{tikzpicture}[square/.style={regular polygon,regular polygon sides=4}, align=center,node distance=2cm,inner sep=2pt]
			
			\node at (0,2) [state, initial] (0) {$\aa$};
			\node at (2,2) [state] (1) {$\bb$};
			
			\path[->] 
			(0)  edge [in=70,out=110,loop] 	node[above] {$a:\textcolor{Green2}{2}$ }   (0)
			(0)  edge [in=-70,out=-110,loop] 	node[below] {$c:\textcolor{Green2}{1}$ }   (0)
			(0)  edge [in=150,out=30] 	node[above] {$b : \textcolor{Green2}{1}$ }   (1)
			
			(1)  edge [in=-30,out=210]  node[below,pos=0.3] {$a,c: \textcolor{Green2}{1}$ }   (0)
			(1)  edge [in=70,out=110,loop] 	node[above] {$b : \textcolor{Green2}{2}$ }   (1);
			
			\end{tikzpicture}
			\caption*{\centering $\Z_{\F_1}$. }
		\end{minipage}
		\caption{Muller automaton and the Zielonka tree automaton of its acceptance condition.}
		\label{Fig_MullerAndZielonkaAutomata}
		\end{figure}
	
		\begin{figure}[ht]
		\centering
			\begin{tikzpicture}[square/.style={regular polygon,regular polygon sides=4}, align=center,node distance=2cm,inner sep=2pt]
			\node at (0,3) [state, initial] (Aa) {A,$\aa$};
			\node at (3,3) [state] (Bb) {B,$\bb$};
			\node at (0,0) [state] (Ab) {A,$\bb$};
			\node at (3,0) [state] (Ba) {B,$\aa$};
			
			\path[->] 
			(Aa)  edge [in=70,out=110,loop] 	node[above] {$0:\textcolor{Green2}{2}$ }   (Aa)
			(Aa)  edge [] 	node[above] {$1 : \textcolor{Green2}{1}$ }   (Bb)
			
			(Ab)  edge []  node[left] {$0: \textcolor{Green2}{1}$ }   (Aa)
			(Ab)  edge [out=60,in=210] 	node[left,pos=0.6] {$1 : \textcolor{Green2}{2}$ }   (Bb)
			
			(Ba)  edge [out=20,in=-20,loop] 	node[right] {$0:\textcolor{Green2}{1}$ }   (Ba)
			(Ba)  edge [] 	node[above, pos=0.4] {$1 : \textcolor{Green2}{1}$ }   (Ab)
			
			(Bb)  edge []  node[right] {$0: \textcolor{Green2}{1}$ }   (Ba)
			(Bb)  edge [out=240,in=30] 	node[right] {$1 : \textcolor{Green2}{2}$ }   (Ab);
			
%
%
%
%
			\end{tikzpicture}
			\caption{The composition $\ZF \lhd \A$.}
			\label{Fig_Product_AxZF}
	\end{figure}
	\end{example}

	\begin{remark*}
		We know that $\ZF$ is a minimal automaton recognising the "Muller condition" $\F$ (Theorem~\ref{Th_OptimalityZielonkaTree}). However, the "composition" $Z_{\F_1} \lhd \A$ has $4$ states, and in the Example~\ref{Example_AutomataForL} (Figure~\ref{Fig_AutomataForL}) we have shown a parity automaton recognising $\L(\A)$ with only $3$ states. Moreover, there is a "locally bijective" morphism from this smaller parity automaton to $\A$ (we only have to send the two states on the left to $A$ and the state on the right to $B$). In the next section we will show a transformation that will produce the parity automaton with only $3$ states starting from $\A$.
	\end{remark*}

	\paragraph*{Morphisms of automata and games}
	Before presenting the optimal transformation of Muller transition systems, we will state some facts about "morphisms" in the particular case of "automata" and "games". When we speak about a "morphism" between two automata, we always refer implicitly to the morphism between the corresponding "labelled transition systems", as explained in Example~\ref{Example_AutomataAsTransSyst}.

	\begin{fact}\label{Fact_MorphismDetAutomata-LocBijective}
		A "morphism" $\pp=(\pp_V,\pp_E)$  between two "deterministic automata" is always "locally bijective" and it is completely characterised by the map $\pp_V$.  
	\end{fact}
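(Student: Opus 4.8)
The plan is to exploit the defining property of deterministic automata recalled in example~\ref{Example_AutomataAsTransSyst}: for each state $v$ and each letter $a\in\Sigma$ there is exactly one edge $e\in\mout(v)$ with $l_E(e)=a$. Hence for every state $v$ the label map $l_E$ restricts to a bijection $\mout(v)\to\Sigma$, and likewise $\mout(\pp_V(v))\to\Sigma$ in the target automaton. Since a morphism of labelled transition systems preserves sources and edge labels, the image $\pp_E(e)$ of an edge $e\in\mout(v)$ carrying the letter $a$ is an edge with $\msource'(\pp_E(e))=\pp_V(v)$ and $l_E'(\pp_E(e))=a$; by determinism of the target automaton it is therefore \emph{the} unique edge of $\mout(\pp_V(v))$ labelled $a$. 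First I would record this observation, which identifies $\pp_E$, on each fibre $\mout(v)$, with the bijection obtained by passing from $\mout(v)$ to $\Sigma$ via labels and then from $\Sigma$ to $\mout(\pp_V(v))$ via labels.

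From this single observation both claims fall out. For local bijectivity I would verify the two fibre conditions of the definition. Local injectivity: if $e_1,e_2\in\mout(v)$ satisfy $\pp_E(e_1)=\pp_E(e_2)$, then $l_E(e_1)=l_E'(\pp_E(e_1))=l_E'(\pp_E(e_2))=l_E(e_2)$, so $e_1,e_2$ are two edges of $\mout(v)$ sharing a label, whence $e_1=e_2$ by determinism of the source automaton. Local surjectivity: given $e'\in\mout(\pp_V(v))$ with label $a$, the unique $e\in\mout(v)$ labelled $a$ satisfies $\pp_E(e)=e'$ by the observation above. The conditions on initial states are immediate, since a deterministic automaton has a singleton $I_0$: the restriction of $\pp_V$ to $I_0$ is a bijection onto $I_0'$, using that $I_0'$ is also a singleton so that $\pp_V(v_0)$ is its only element. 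Thus $\pp$ is locally bijective.

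For the characterization by $\pp_V$, the same observation shows that $\pp_E$ is forced: given $\pp_V$, the value $\pp_E(e)$ for $e\in\mout(v)$ labelled $a$ must be the unique edge of $\mout(\pp_V(v))$ labelled $a$, a choice depending only on $\pp_V(v)$ and $l_E(e)$. Hence two morphisms with the same vertex map agree on all edges, which is the asserted characterization.

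I do not expect a genuine obstacle: the only points requiring care are tracking which automaton's determinism is invoked at each step (surjectivity uses uniqueness in the target, injectivity uses uniqueness in the source), and noting that edge-label preservation is available precisely because automata are treated as labelled transition systems whose edge labels record the input letter, as in example~\ref{Example_AutomataAsTransSyst}. Everything else is a direct unfolding of the definitions.
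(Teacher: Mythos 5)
Your proof is correct and takes essentially the same approach as the paper, whose entire proof is the one-sentence observation that for each state and each input letter there is exactly one outgoing transition with that label; you have simply unfolded that observation into the fibrewise bijection $\mout(v)\to\Sigma\to\mout(\pp_V(v))$ and checked the injectivity, surjectivity, initial-state, and uniqueness-of-$\pp_E$ consequences explicitly.
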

	\begin{proof}
		For each letter of the input alphabet and each state, there must be one and only one outgoing transition labelled with this letter.
	\end{proof}

	
	\begin{proposition}\label{Prop3.6_MorphImpliesLanguages}
		Let $\A=(Q,\Sigma, I_0, \Gamma, \delta, \macc)$, $\A'=(Q',\Sigma, I_0', \Gamma, \delta', \macc')$ be two (possibly non-deterministic) "automata". If there is a "locally surjective morphism"  $\pp: \A \rightarrow \A'$, then $"\L(\A)"="\L(\A')"$.
	\end{proposition}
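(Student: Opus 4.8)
The plan is to prove the two inclusions $\L(\A) \subseteq \L(\A')$ and $\L(\A') \subseteq \L(\A)$ separately. The first inclusion holds for \emph{any} morphism and does not use local surjectivity; the hypothesis is needed only for the second, through an inductive lifting of runs. Throughout I use that $\pp$ is a "morphism of labelled transition systems", so it also preserves the input letters $l_E$.

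For $\L(\A) \subseteq \L(\A')$, I would take a word $u \in \L(\A)$ together with an accepting "run" $\rr$ over $u$ in $\A$, and consider its image $\pp_E(\rr)$. By the defining properties of a "morphism", $\pp_E(\rr)$ is a run of $\A'$ starting from an initial vertex (initial states and origins of edges are preserved), and since $\pp$ preserves input letters, $\pp_E(\rr)$ is a run over the same word $u$. Finally, acceptance is preserved, so $\rr \in \macc$ gives $\pp_E(\rr) \in \macc'$, witnessing $u \in \L(\A')$.

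For the converse $\L(\A') \subseteq \L(\A)$, I would start from $u \in \L(\A')$ with an accepting run $\rr' = e_0' e_1' e_2' \cdots$ over $u$ in $\A'$, beginning at some $v_0' \in I_0'$, and build a preimage run $\rr = e_0 e_1 e_2 \cdots$ in $\A$ by induction, maintaining the invariant that $\pp_V(v_i) = \msource'(e_i')$, where $v_i$ is the source of the edge $e_i$ to be chosen. Local surjectivity on initial states provides $v_0 \in I_0$ with $\pp_V(v_0) = v_0'$, establishing the invariant at step $0$. At step $i$, since $\msource'(e_i') = \pp_V(v_i)$, local surjectivity on edges yields some $e_i \in \mout(v_i)$ with $\pp_E(e_i) = e_i'$; setting $v_{i+1} = \mtarget(e_i)$ and combining $\mtarget'(\pp_E(e_i)) = \pp_V(\mtarget(e_i))$ with the fact that $\rr'$ is a run (so $\mtarget'(e_i') = \msource'(e_{i+1}')$) re-establishes the invariant for $i+1$. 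This produces a genuine run $\rr$ of $\A$ with $\pp_E(\rr) = \rr'$, which is over $u$ because labels are preserved and which is accepting because $\pp_E(\rr) = \rr' \in \macc'$ forces $\rr \in \macc$ by acceptance preservation. Hence $u \in \L(\A)$.

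The argument is essentially a routine diagram chase, and I do not expect a genuine obstacle: the run can be lifted greedily one edge at a time, so no compactness or König-type argument is required and one never has to backtrack. The only points demanding care are keeping track of the invariant $\pp_V(v_i) = \msource'(e_i')$ so that local surjectivity applies at every step, and invoking the labelled-morphism hypothesis (rather than the bare "transition system" morphism) to guarantee that the lifted run reads exactly the word $u$.
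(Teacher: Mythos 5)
Your proof is correct and follows essentially the same route as the paper: the forward inclusion is the image of an accepting run under the morphism, and the reverse inclusion lifts an accepting run of $\A'$ through the locally surjective morphism, using that the morphism condition on acceptance is an equivalence. The only difference is that the paper states the existence of the lift in one sentence, whereas you spell out the edge-by-edge inductive construction — a worthwhile elaboration, but not a different argument.
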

	
	\begin{proof}
		Let $u\in \Sigma^\oo$. If $u\in \L(\A)$ there is an accepting run, $\rr$, over $u$ in $\A$. By the definition of a "morphism of labelled transition systems", $\pp(\rr)$ is also an accepting "run over $u$" in $\A'$.

		Conversely, if $u\in \L(\A')$ there is an accepting "run over $u$" $\rr'$ in $\A'$. Since $\pp$ is locally surjective there is a run $\rr$ in $\A$, such that $\pp(\rr)=\rr'$, and therefore $\rr$ is an accepting run over $u$.
	\end{proof}
	
%
%
	\begin{remark*}
		The converse of the previous proposition does not hold: $"\L(\A)"=\L(\A')$ does not imply the existence of morphisms $\pp: \A \rightarrow \A'$ or $\pp: \A' \rightarrow \A$, even if $\A$ has minimal size among the Muller automata recognising $\L(\A)$. 
	\end{remark*}

	If $\A$ and $\A'$ are "non-deterministic automata" and $\pp:\A \rightarrow \A'$ is a "locally bijective morphism", then $\A$ and $\A'$ have to share some other important semantic properties. Two classes of automata that have been extensively studied are (strongly) unambiguous and good-for-games automata. An automaton is \AP""unambiguous"" if for every input word $w\in \SS^\oo$ there is at most one accepting "run over" $w$, and it is ""strongly unambiguous"" if there is at most one "run over" $w$. \AP""Good-for-games"" automata (GFG), first introduced by Henzinger and Piterman in~\cite{Henzinger2006SolvingGW}, are automata that can resolve the non-determinism depending only in the prefix of the word read so far. These types of automata have many good properties and have been used in different contexts (as for example in the model checking of LTL formulas~\cite{Couvreur2003AnOA} or in the theory of cost functions~\cite{Colcombet2009CostFunctions}). "Strongly unambiguous" automata can recognise $\oo$-regular languages using a "Büchi" condition (see~\cite{Carton2003UnambiguousBuchi}) and GFG automata have strictly more expressive power than deterministic ones, being in some cases exponentially smaller (see~\cite{Boker2013NondetUnknownFuture, KuperbergGFG}).
	
	We omit the proof of the next proposition, being a consequence of Fact~\ref{Fact_LocBijMorph_BijectionRuns} and of the argument from the proof of Proposition~\ref{Prop_GamesLocBijMorph}.

\begin{proposition}
	Let $\A$ and $\A'$ be two "non-deterministic automata". If $\pp:\A \rightarrow \A'$ is a "locally bijective morphism", then 
	\begin{itemize}
		
		\item $\A$ is unambiguous if and only if $\A'$ is unambiguous.
		\item $\A$ is strongly unambiguous if and only if $\A'$ is strongly unambiguous.
		\item $\A$ is GFG if and only if $\A'$ is GFG.
	\end{itemize} 
\end{proposition}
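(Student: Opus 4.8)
The plan is to derive both equivalences from a single observation about the run correspondence. Since automata morphisms are understood as morphisms of labelled transition systems, applying $\pp_E$ componentwise to a run preserves its input word, $l_E'(\pp_E(\rr))=l_E(\rr)$, and by Fact~\ref{Fact_LocBijMorph_BijectionRuns} it is a bijection $\mrun_{\A}\to\mrun_{\A'}$ preserving acceptance. Hence for every fixed $w\in\Sigma^\oo$ this bijection restricts to a bijection between the runs over $w$ in $\A$ and those over $w$ in $\A'$, and further to a bijection between the \emph{accepting} ones.

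Unambiguity is then immediate: $\A$ and $\A'$ have exactly the same number of accepting runs over each word $w$, so one of them admits at most one accepting run over every $w$ if and only if the other does, which is precisely the statement that $\A$ is unambiguous iff $\A'$ is.

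For the GFG equivalence I would first record that $\L(\A)=\L(\A')$ by Proposition~\ref{Prop3.6_MorphImpliesLanguages}, since a locally bijective morphism is in particular locally surjective. Recall that an automaton is GFG when a strategy depending only on the prefix read so far resolves its nondeterminism into a run that is accepting on every word of the language. If $\A$ is GFG, producing on input $a_0a_1\cdots$ a run $e_0e_1\cdots$, I would let the resolver of $\A'$ output $\pp_E(e_0)\pp_E(e_1)\cdots$: this is a run of $\A'$ over the same word, it is accepting iff $e_0e_1\cdots$ is, and since membership in $\L(\A')=\L(\A)$ makes the original resolver produce an accepting run, the new resolver also accepts; it is prefix-determined because $\pp_E$ is applied pointwise to prefix-determined choices. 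Conversely, local bijectivity lets me lift a resolver of $\A'$ to $\A$: the initial choice is transported through the bijection $\pp_V\rest_{I_0}:I_0\to I_0'$, and at each later step the edge chosen out of the current state of $\A'$ has a unique preimage out of the corresponding state of $\A$, so a run of $\A'$ lifts edge by edge to a run of $\A$ mapped onto it by $\pp$, with matching acceptance and language membership.

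The routine parts are the componentwise translations of runs; the delicate point, and the only real obstacle, is to phrase the GFG argument at the level of \emph{strategies} rather than individual runs, that is, to verify that the projected and lifted resolvers are genuinely functions of the prefix and that the initial nondeterminism is handled correctly through $\pp_V\rest_{I_0}$. This is exactly the translation carried out for games in the proof of Proposition~\ref{Prop_GamesLocBijMorph}, from which it can be imported; combined with the acceptance-preserving bijection of Fact~\ref{Fact_LocBijMorph_BijectionRuns}, it yields both implications.
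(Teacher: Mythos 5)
Your proposal is correct and takes essentially the same approach as the paper: the paper omits the proof of this proposition, stating that it follows from Fact~\ref{Fact_LocBijMorph_BijectionRuns} together with the argument in the proof of Proposition~\ref{Prop_GamesLocBijMorph}, which is exactly the combination you use. Your refinement of the run bijection to a word-by-word bijection (via label preservation) for unambiguity, and your projection/lifting of resolvers for GFG, are precisely the intended instantiation of those two ingredients.
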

	
	Having a "locally bijective morphism" between two games implies that the "winning regions" of the players are preserved.
	
	\begin{proposition}\label{Prop_GamesLocBijMorph}
		Let $\G=(V, E, \msource, \mtarget, v_0, \macc, l_V)$ and $\G'=\ab (V',\ab E', \ab \msource', \ab \mtarget', v_0', \macc', l_V')$ be two "games" such that there is a "locally bijective morphism" $\pp:\G \rightarrow \G'$. Let $P\in \{Eve, Adam\}$ be a player in those games. Then, $P$ wins $\G$ if and only if she/he wins $\G'$. Moreover, if $\pp$ is surjective, the "winning region" of $P$ in $\G'$ is the image by $\pp$ of her/his winning region in $\G$, $"\W_P"(\G')=\pp("\W_P"(\G))$. 
	\end{proposition}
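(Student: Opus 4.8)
The plan is to transfer strategies across the morphism, relying on the bijection between runs of Fact \ref{Fact_LocBijMorph_BijectionRuns}. Two structural facts drive everything. First, since $\pp$ is a morphism of labelled transition systems, it preserves vertex ownership, $l_V'(\pp_V(v))=l_V(v)$, so $v$ and $\pp_V(v)$ belong to the same player. Second, local bijectivity means that for each vertex $v$ the restriction of $\pp_E$ to $\mout(v)$ is a bijection onto $\mout(\pp_V(v))$, and $\pp_V(v_0)=v_0'$. From these I would deduce that $\pp$ induces a bijection between the (finite or infinite) plays of $\G$ from $v_0$ and those of $\G'$ from $v_0'$: a play $\rr'$ of $\G'$ has a unique \emph{lift} $\widehat{\rr'}$ in $\G$, obtained by repeatedly choosing the unique outgoing edge mapped by $\pp_E$ to the corresponding edge of $\rr'$, and conversely $\pp_E$ sends plays to plays with $\pp_E(\widehat{\rr'})=\rr'$.

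For the first claim, fix a player $P$ and suppose $P$ wins $\G$ through a strategy $S_P$. I would set $S_P'(\rr')=\pp_E(S_P(\widehat{\rr'}))$ for every finite play $\rr'$ ending in a $P$-vertex (and arbitrarily elsewhere). This is well defined: ownership preservation guarantees $\widehat{\rr'}$ also ends in a $P$-vertex, and $\pp_E(S_P(\widehat{\rr'}))$ lies in $\mout$ of the current vertex, hence is a legal move. A short induction shows that a play $\rr'$ is consistent with $S_P'$ exactly when $\widehat{\rr'}$ is consistent with $S_P$: at each $P$-controlled prefix the next edge of $\rr'$ equals $\pp_E(S_P(\widehat{\rr'}))$, whose unique lift is precisely $S_P(\widehat{\rr'})$. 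Since every play consistent with $S_P$ is winning for $P$, and since $\pp_E$ preserves membership in $\macc$ (the defining clause of a morphism, restated in Fact \ref{Fact_LocBijMorph_BijectionRuns}) --- hence preserves the predicate ``winning for $P$'' for both Eve and Adam --- every play consistent with $S_P'$ is winning for $P$. Thus $P$ wins $\G'$. The converse is symmetric, defining $S_P(\rr)$ as the unique lift of $S_P'(\pp_E(\rr))$, so $P$ wins $\G$ if and only if $P$ wins $\G'$.

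For the \emph{moreover} part, the key remark is that resetting the initial vertex leaves local bijectivity untouched: for any $v\in V$, the same pair of maps is a locally bijective morphism from the game $\G_v$ (with initial vertex $v$) to $\G'_{\pp_V(v)}$ (with initial vertex $\pp_V(v)$), since every defining condition is local apart from the initial-vertex condition, which now reads $\pp_V(v)=\pp_V(v)$. Applying the first part to these games gives, for every $v\in V$, that $v\in \W_P(\G)$ if and only if $\pp_V(v)\in \W_P(\G')$. This already yields $\pp_V(\W_P(\G))\subseteq \W_P(\G')$. For the reverse inclusion I would invoke surjectivity of $\pp$: given $v'\in \W_P(\G')$, choose $v$ with $\pp_V(v)=v'$; then $\pp_V(v)\in \W_P(\G')$ forces $v\in \W_P(\G)$, so $v'\in \pp_V(\W_P(\G))$. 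Hence $\W_P(\G')=\pp_V(\W_P(\G))$.

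The only delicate point is the bookkeeping that the strategy transfer respects both the ownership of vertices and the consistency of plays; once local bijectivity is used to lift and project plays, winning is preserved automatically by the acceptance-preservation clause. I would be careful to phrase ``winning for $P$'' uniformly (a play is winning for Eve iff it lies in $\macc$, and for Adam iff it does not), so that a single construction handles both players simultaneously.
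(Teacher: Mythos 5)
Your proof is correct and follows essentially the same route as the paper's: transferring strategies through the run bijection of Fact \ref{Fact_LocBijMorph_BijectionRuns} (your lift $\widehat{\rr'}$ is exactly the paper's $\pp^{-1}(\rr')$), and for the winning regions, observing that re-basing the initial vertices preserves local bijectivity and using surjectivity for the reverse inclusion. The only difference is that you spell out the details the paper dismisses as ``easy to verify'' --- ownership preservation and the consistency correspondence between plays --- which is a welcome addition but not a different argument.
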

	\begin{proof}
	Let $S_P: \mrun_{\G}\cap E^* \rightarrow E$ be a winning "strategy" for player $P$ in $\G$. Then, it is easy to verify that the strategy $S_P': \mrun_{\G'}\cap E'^* \rightarrow E'$ defined as
	$ S_P'(\rr') = \pp_E ( S_P(\pp^{-1}(\rr')))  $
	is a winning "strategy" for $P$ in $\G'$. (Remark that thanks to Fact~\ref{Fact_LocBijMorph_BijectionRuns}, the morphism $\pp$ induces a bijection over "runs", allowing us to use $\pp^{-1}$ in this case).
	
	Conversely, if $S_P': \mrun_{\G'}\cap E'^* \rightarrow E'$ is a winning "strategy" for $P$ in $\G'$, then $ S_P(\rr) = \pp_E^{-1} ( S_P'(\pp(\rr)))  $
	is a winning "strategy" for $P$ in $\G$. Here $ \pp_E^{-1} (e')$ is the only edge $e\in E$ in $"\mout"(\mtarget("\mlast"(\rr)))$ such that $\pp_E(e)=e'$.
	
	The equality $"\W_P"(\G')=\pp("\W_P"(\G))$ stems from the fact that if we choose a different initial vertex $v_1$ in $\G$, then $\pp$ is a "locally bijective morphism" to the game $\G'$ with initial vertex $\pp(v_1)$. Conversely, if we take a different initial vertex $v_1'$ in $\G'$, since $\pp$ is surjective we can take a vertex $v_1\in \pp^{-1}(v_1')$, and $\pp$ remains a locally bijective morphism between the resulting games.
	\end{proof}

	\subsection{The alternating cycle decomposition}\label{Subsection_ACD}
	 Most transformations of "Muller" into "parity" "transition systems" are based on the "composition" by some automaton converting the Muller condition into a parity one. These transformations act on the totality of the system uniformly, regardless of the local structure of the system and the "acceptance condition". 
	  The transformation we introduce in this section takes into account the interplay between the particular "acceptance condition" and the "transition system", inspired by the \emph{alternating chains} introduced in~\cite{wagner1979omega}.
	  
	  In the following we will consider "Muller transition systems" with the Muller acceptance condition using edges as colours. We can always suppose this, since given a transition system $\T$ with edges coloured by $\gamma: E \rightarrow C$ and a Muller condition $\F \subseteq \P(C)$, the condition $\widetilde{\F}\subseteq \P(E)$ defined as $A\in \widetilde{\F} \; \Leftrightarrow \gamma(A)\in \F$ is an "equivalent condition over" $\T$. However, the size of the representation of the condition $\F$ might change. Making this assumption corresponds to consider what are called \emph{explicit Muller conditions}. In particular, solving Muller games with explicit Muller conditions is in $\mathrm{PTIME}$~\cite{Horn2008Explicit}, while solving general Muller games is $\mathrm{PSPACE}$-complete~\cite{Dawar2005ComplexityBounds}.

	 \begin{definition}
	 	Given a "transition system" $\T=(V,E,\msource,\mtarget,I_0, \macc)$, a  \AP\intro{loop} is a subset of edges $l\subseteq E$ such that it exists $v\in V$ 
	 	 and a finite "run" $\rr\in \mrun_{T,v}$ such that $"\mfirst"(\rr)="\mlast"(\rr)=v$ and $\mocc(\rr)=l$. The set of "loops" of $\T$ is denoted $\mloop(\T)$.
	 	For a "loop" $l\in \mloop(\T)$ we write
	 	\[  ""\mathit{States}""(l):= \{ v\in V \; : \; \exists e\in l, \; \msource(e)=v \}.\]
	 \end{definition}
	
	Observe that there is a natural partial order in the set $\mloop(\T)$ given by set inclusion.

		\begin{remark*}
		If $l$ is a "loop" in $\mloop(\T)$, for every $q\in \mstate(l)$ there is a run $\rr \in \mrun_{\T,q}$ such that $"\mathit{Occ}"(\rr)=l$.
	\end{remark*}
	
	\begin{remark*}
		The maximal loops of $\mloop(\T)$ (for set inclusion) are disjoint and in one-to-one correspondence with the "strongly connected components" of $\T$.
	\end{remark*}
	
		\begin{definition}[Alternating cycle decomposition]\label{Def_ACD}
		Let $\T=(V,E,\msource,\mtarget,I_0, \F)$ be a "Muller transition system" with "acceptance condition" given by $\F\subseteq \P(E)$. The  \AP\intro{alternating cycle decomposition} (abbreviated ACD) of $\T$, noted $\ACD (T)$, is a family of "labelled trees" $(t_1, \nu_1),\dots, (t_r,\nu_r)$ with nodes labelled by "loops" in $\mloop(\T)$, $\nu_i: t_i\rightarrow \mloop(\T)$. We define it inductively as follows:
		\begin{itemize}
			\item Let $\{l_1,\dots, l_r\}$ be the set of maximal loops of $\mloop(\T)$. For each $i\in \{1,\dots, r\}$ we consider a "tree" $t_i$ and define $\nu_i(\varepsilon)=l_i$. 
			
			\item Given an already defined node $\tau$ of a tree $t_i$ we consider the maximal loops of the set 
			\[\{ l\subseteq \nu_i(\tau) \; : \; l\in \mloop(\T) \text{ and } l \in \F \; \Leftrightarrow \; \nu_i(\tau) \notin \F \}\]
			and for each of these loops $l$ we add a child to $\tau$ in $t_i$ labelled by $l$. 	
		\end{itemize}
	
		For notational convenience we add a special "tree" $(t_0,\nu_0)$ with a single node $\varepsilon$ labelled with the edges not appearing in any other tree of the forest, i.e., $\nu_0(\varepsilon)=E \setminus \bigcup_{i=1}^{r}l_i$ (remark that this is not a "loop").

		 We define $\mathit{States}(\nu_0(\varepsilon)):= V\setminus \bigcup_{i=1}^{r}\mathit{States}(l_i)$ (remark that this does not follow the general definition of $"\mathit{States}"()$ for loops).
		 
		 We call the trees $t_1,\dots, t_r$ the  \AP""proper trees"" of the "alternating cycle decomposition" of $\T$.   
		 Given a node $\tt$ of $t_i$, we note $\mstate_i(\tt):=\mstate(\nu_i(\tau))$.
		
	\end{definition}
	
	\begin{remark*}
		As for the "Zielonka tree", the "alternating cycle decomposition" of $\T$ is not unique, since it depends on the order in which we introduce the children of each node. This will not affect the upcoming results, and we will refer to it as ``the'' alternating cycle decomposition of $\T$. 
	\end{remark*}
	
	For the rest of the section we fix a "Muller transition system" $\T=(V,E,\msource,\mtarget, I_0, \F)$ with the "alternating cycle decomposition" given by $(t_0,\nu_0), (t_1,\nu_1),\dots, (t_r,\nu_r)$.
	
	\begin{remark*}
		The "Zielonka tree" for a "Muller condition" $\F$ over the set of colours $C$ can be seen as a special case of this construction, for the automaton with a single state, input alphabet $C$, a transition for each letter in $C$ and "acceptance condition" $\F$.
	\end{remark*}
	
	\begin{remark*}
		Each state and edge of $\T$ appears in exactly one of the "trees" of $\ACD (\T)$.
	\end{remark*}
	
	\begin{definition}
		The  \AP""index"" of a state $q\in V$  (resp. of an edge $e\in E$) in $\ACD (\T)$ is the only number $j\in \{0,1,\dots,r\}$ such that $q\in "\mathit{States}"_j(\varepsilon)$ (resp. $e \in \nu_j(\varepsilon)$).
	\end{definition}
	
	\begin{definition}
		For each state $q\in V$ of "index" $j$ we define the \AP""subtree associated to the state $q$"" as the "subtree" $t_q$ of $t_j$ consisting in the set of nodes $\{\tt \in t_j \; : \; q\in \mstate_j(\tt) \}$.
	\end{definition}
	
	We refer to Figures~\ref{Fig_ACD} and~\ref{Fig_Subtree_q4} for an example of $"t_q"$.
	

	\begin{definition}
		For each "proper tree" $t_i$ of $"\ACD" (\T)$ we say that $t_i$ is \intro(ACD){even} if $\nu_i(\varepsilon)\in \F$ and that it is \emph{\kl(ACD){odd}} if $\nu_i(\varepsilon)\notin \F$.
		
		We say that the "alternating cycle decomposition" of $\T$ is \emph{even} if all the trees of maximal "height" of $\ACD (\T)$ are even; that it is \emph{odd} if all of them are odd, and that it is \emph{\kl(ACD){ambiguous}} if there are even and odd trees of maximal "height".
	\end{definition}

	\begin{definition}
		For each $\tau \in t_i$, $i=1,\dots,r$, we define the \AP\intro(node){priority} of $\tt$ in $t_i$, written $p_i(\tau)$ as follows: 
		\begin{itemize}
			\item If $\ACD (\T)$ is \kl(ACD){even} or \kl(ACD){ambiguous}
			\begin{itemize}
				\item If $t_i$ is \kl(ACD){even} ($\nu_i(\varepsilon)\in \F$), then $p_i(\tau):=\mdepth(\tau)=|\tt|$.
				\item If $t_i$ is \kl(ACD){odd} ($\nu_i(\varepsilon)\notin \F$), then $p_i(\tau):=\mdepth(\tau)+1=|\tt|+1$.
			\end{itemize} 
			\item If $\ACD (\T)$ is \kl(ACD){odd}
			\begin{itemize}
				\item If $t_i$ is \kl(ACD){even} ($\nu_i(\varepsilon)\in \F$), then $p_i(\tau):=\mdepth(\tau)+2=|\tt|+2$.
				\item If $t_i$ is \kl(ACD){odd} ($\nu_i(\varepsilon)\notin \F$), then $p_i(\tau):=\mdepth(\tau)+1=|\tt|+1$.	
			\end{itemize} 
		\end{itemize}
	
		For $i=0$, we define $p_0(\varepsilon)=0$ if $"\ACD" (\T)$ is \kl(ACD){even} or \kl(ACD){ambiguous} and $p_0(\varepsilon)=1$ if $"\ACD" (\T)$ is \kl(ACD){odd}.
	\end{definition}
	
	The assignation of priorities to nodes produces a labelling of the levels of each tree. It will be used to determine the priorities needed by a parity "transition system" to simulate $\T$. The distinction between the cases $\ACD (\T)$ even or odd is added only to obtain the minimal number of priorities in every case.

	\begin{example}\label{Example_TransitionSystem}
		In Figure~\ref{Fig_TransitionSystem} we represent a "transition system" $\T=(V,E,\msource,\mtarget,q_0,\F)$ with $V=\{q_0,q_1,q_2,q_3,q_4,q_5\}$, $E=\{a,b,\dots,j,k\}$ and using the "Muller condition"
		\begin{flalign*}
			\F=\{\{c,d,e \},\{e \},
			\{ g,h,i \},\{l \},
			\{h,i,j,k \},\{j,k \}	\}.
		\end{flalign*}It has $2$ strongly connected components (with vertices $S_1=\{q_1,q_2\}, S_2=\{q_3,q_4,q_5\}$), and a vertex $q_0$ that does not belong to any strongly connected component.
		
		The "alternating cycle decomposition" of this transition system is shown in Figure~\ref{Fig_ACD}. It consists of two proper "trees", $t_1$ and $t_2$, corresponding to the strongly connected components of $\T$ and the tree $t_0$ that corresponds to the edges not appearing in the strongly connected components.
		
		We observe that $\ACD (\T)$ is \kl(ACD){odd} ($t_2$ is the highest tree, and it starts with a non-accepting "loop"). It is for this reason that we start labelling the levels of $t_1$ from $2$ (if we had assigned priorities $0,1$ to the nodes of $t_1$ we would have used $4$ priorities, when only $3$ are strictly necessary).
		
		In Figure~\ref{Fig_Subtree_q4} we show the "subtree associated to" $q_4$.
			\begin{figure}[ht]
			\centering 
			
			\begin{tikzpicture}[square/.style={regular polygon,regular polygon sides=4}, align=center,node distance=2cm,inner sep=2pt]
			
			\node at (0,1.5) [state, initial] (0) {$q_0$};
			\node at (3,1.5) [state] (1) {$q_1$};
			\node at (6,1.5) [state] (2) {$q_2$};
			\node at (0,0) [state] (3) {$q_3$};
			\node at (3,0) [state] (4) {$q_4$};
			\node at (6,0) [state] (5) {$q_5$};
			
			\path[->] 
			
			(0)  edge [] 	node[above] {$a$ }   (1)
			(0)  edge []  node[left] {$b$ }   (3)
			
			(1)  edge [in=160,out=20]  node[above] {$c$ }   (2)
			(1)  edge [] 	node[left] {$f$ }   (4)
			
			(2)  edge [in=-20,out=200]  node[above] {$d$ }   (1)
			(2)  edge [in=-30,out=30,loop] 	node[right] {$e$ }   (2)
			
			(3)  edge [in=210,out=150,loop]  node[left] {$g$ }   (3)
			(3)  edge [in=160,out=20]  node[above] {$h$ }   (4)
			
			(4)  edge [in=-20,out=200]  node[above] {$i$ }   (3)
			(4)  edge [in=160,out=20] 	node[above] {$j$ }   (5)
			
			(5)  edge [in=-20,out=200]  node[above] {$k$ }   (4)
			(5)  edge [in=-30,out=30,loop]  node[right] {$l$ }   (5);
			
			\end{tikzpicture}
			\caption{Transition system $\T$.}
			\label{Fig_TransitionSystem} 
		\end{figure}
		
		\begin{figure}[ht]
			\scalebox{1}{
			\hspace{-2mm}
				\begin{minipage}[b]{0.25\textwidth} 
				
				\begin{tikzpicture}[square/.style={regular polygon,regular polygon sides=4}, align=center,node distance=2cm,inner sep=3pt]
				
				\node at (0,2) [draw, diamond] (R) {a,b,f\\ $q_0$};
				
				\node at (1.5,2)  (p1) {1};
				\node at (0,0)  (pp) {};
				
				\node at (-1,1.5)  (neps) {$\textcolor{Violet2}{\langle \varepsilon \rangle}$};

				\end{tikzpicture}
				\caption*{\centering Tree $t_0$.}
			\end{minipage}
			\begin{minipage}[b]{0.22\textwidth} 
				
				\begin{tikzpicture}[square/.style={regular polygon,regular polygon sides=4}, align=center,node distance=2cm,inner sep=3pt]
				
				\node at (0,1.8) [draw, ellipse] (R) {c,d,e\\ $q_1,q_2$};
				
				\node at (0,0) [draw, rectangle] (0) {c,d \\ $q_1,q_2$};
				
				\node at (1,1.8)  (p2) {2};
				\node at (1,0)  (p3) {3};
				\node at (1,-1)  (pp) {};
				
				\node at (-1,1.5)  (neps) {$\textcolor{Violet2}{\langle \varepsilon \rangle}$};
				\node at (-0.9,-0.2)  (n0) {{\scriptsize$\textcolor{Violet2}{\langle 0 \rangle}$}};
				
				\draw   
				(R) edge (0);
				\end{tikzpicture}
				\caption*{\centering Tree $t_1$.}
			\end{minipage}
			\begin{minipage}[b]{0.4\textwidth} 
			\begin{tikzpicture}[square/.style={regular polygon,regular polygon sides=4}, align=center,node distance=2cm,inner sep=3pt]
			
			\node (rect) at (0,3.6) [draw,minimum width=1.2cm,minimum height=0.8cm] (R) {g,h,i,j,k,l \\ $q_3,q_4,q_5$};
			
			\node at (-2,1.8) [draw, ellipse] (0) {g,h,i \\ $q_3,q_4$};
			\node at (0,1.8) [draw, ellipse, minimum width=0.8cm] (1) {l \\ $q_5$};
			\node at (2,1.8) [draw, ellipse , minimum width=0.8cm] (2) {h,i,j,k \\ $q_3,q_4,q_5$};
			
			\node at (-2.5,0) [rectangle, draw, text width=0.8cm] (00) {g \\ $q_3$};
			\node at (-1,0) [rectangle, draw,minimum height=0.6cm] (01) {h,i \\ $q_3,q_4$};
			
			\node at (2,0) [rectangle, draw, text width=0.8cm] (20) {h,i \\ $q_3,q_4$};
			
			\node at (3.5,3.6)  (p1) {1};
			\node at (3.5,1.8)  (p2) {2};
			\node at (3.5,0) (p3) {3};
			
			\node at (-1.1,3.4)  (neps) {$\textcolor{Violet2}{\langle \varepsilon \rangle}$};
			\node at (-3,1.6)  (n0) {{\scriptsize$\textcolor{Violet2}{\langle 0 \rangle}$}};
			\node at (-0.6,1.6)  (n1) {{\scriptsize$\textcolor{Violet2}{\langle 1 \rangle}$}};
			\node at (0.9,1.3)  (n2) {{\scriptsize$\textcolor{Violet2}{\langle 2 \rangle}$}};
			
			\node at (-3.3,-0.1)  (n00) {{\scriptsize$\textcolor{Violet2}{\langle 0{,}0 \rangle}$}};
			\node at (-0.2,-0.1)  (n01) {{\scriptsize$\textcolor{Violet2}{\langle 0{,}1 \rangle}$}};
			
			\node at (1.1,-0.1)  (n20) {{\scriptsize$\textcolor{Violet2}{\langle 2{,}0 \rangle}$}};

			\draw   
			(R) edge (0)
			(R) edge (1)
			(R) edge (2)
			(0) edge (00)
			(0) edge (01)
			(2) edge (20);

			\end{tikzpicture}
			\caption*{\centering Tree $t_2$.}
          \end{minipage}
     }	
	\caption{"Alternating cycle decomposition" of $\T$. The priority assigned to the nodes of each level of the trees is indicated on the right. Nodes with an even priority are drawn as circles and those with an odd priority as rectangles (excepting the special node forming the root of $t_0$). Each node $\tt$ is labelled with $\nu_i(\tt)$ and with $\mstate_i(\tt)$. In \textcolor{Violet2}{violet} the names of the nodes.}
		\label{Fig_ACD} 
	\end{figure}

\begin{figure}[ht!]
	\centering 
	\hspace{-2mm}
	\scalebox{0.95}{
		\begin{tikzpicture}[square/.style={regular polygon,regular polygon sides=4}, align=center,node distance=2cm,inner sep=3pt]
		
		\node (rect) at (0,3.6) [draw,minimum width=1.3cm,minimum height=1.1cm] (R) {g,h,i,j,k,l};
		
		\node at (-2,1.8) [draw, ellipse,minimum height=1cm] (0) {g,h,i };
		\node at (2,1.8) [draw, ellipse , minimum width=0.8cm,minimum height=1cm] (2) {h,i,j,k};
		
		\node at (-2,0) [rectangle, draw,minimum width=1cm, minimum height=1cm] (01) {h,i };
		
		\node at (2,0) [rectangle, draw, minimum width=1cm, minimum height=1cm] (20) {h,i };
		
		\node at (3.5,3.6)  (p1) {1};
		\node at (3.5,1.8)  (p2) {2};
		\node at (3.5,0) (p3) {3};
		
		\node at (-1.1,3.4)  (neps) {$\textcolor{Violet2}{\langle \varepsilon \rangle}$};
		\node at (-3,1.6)  (n0) {{\scriptsize$\textcolor{Violet2}{\langle 0 \rangle}$}};
		\node at (1.1,1.4)  (n2) {{\scriptsize$\textcolor{Violet2}{\langle 2 \rangle}$}};
		
		\node at (-1,-0.1)  (n01) {{\scriptsize$\textcolor{Violet2}{\langle 0{,}1 \rangle}$}};
		
		\node at (1.1,-0.1)  (n20) {{\scriptsize$\textcolor{Violet2}{\langle 2{,}0 \rangle}$}};

		\draw   
		(R) edge (0)
		(R) edge (2)
		(0) edge (01)
		(2) edge (20);
		
		\end{tikzpicture}
	}
	\caption{"Subtree associated to" $q_4$, noted $t_{q_4}$.}
	\label{Fig_Subtree_q4} 
\end{figure}

\end{example}
	

	\subsection{The alternating cycle decomposition transformation}\label{Subsection_ACD-transformation}
	
	We proceed to show how to use the "alternating cycle decomposition" of a "Muller transition system" to obtain a "parity transition system". Let $\T=(V,E,\msource,\mtarget,I_0, \F)$ be a "Muller transition system" and $(t_0,\nu_0), (t_1, \nu_1),\dots, (t_r,\nu_r)$, its "alternating cycle decomposition". 
	
	First, we adapt the definitions of $\mathit{Supp}$ and $\mathit{Nextbranch}$ to the setting with multiple trees.

	\begin{definition}
		For an edge $e\in E$ such that $\mtarget(e)$ has "index" $j$, for $i\in \{0,1,\dots,r\}$  and a branch $\bb$ in some subtree of $t_i$, we define the  \AP""support"" of $e$ from $\tau$ as: 
		\[ \msupp(\bb,i,e)=
		\begin{cases}

		\text{The maximal node (for } \prefix \text{) } \tt\in \bb \text{ such that } e\in \nu_i(\tt),  \text{ if } i= j .\\[2mm]
		
		\text{The root } \varepsilon \text{ of } t_j,  \text{ if } i\neq j.
		
		\end{cases}  \]

	\end{definition}
	Intuitively, $\msupp(\bb,i,e)$ is the highest node we visit if we want to go from the bottom of the branch $\bb$ to a node of the tree that contains $e$ ``in an optimal trajectory'' (going up as little as possible). If we have to jump to another tree, we define $\msupp(\bb,i,e)$ as the root of the destination tree.

	\begin{definition}
		Let $i\in \{0,1,\dots,r\}$, $q$ be a state of "index" $i$, $\bb$ be a branch of some "subtree" of $t_i$ and $\tt\in \bb$ be a node of $t_i$ such that $q\in \mstate_i(\tt)$. If $\tt$ is not the deepest node of $\bb$, let $\ss_\bb$ be the unique child of $\tt$ in $t_i$ such that $\ss_\bb \in \bb$. We define:

		\[ \intro{\mathit{Nextchild}_{t_q}}(\bb,\tt)=
		\begin{cases}
		\tt , \text{ if } \tt \text{ is a leaf in } \kl{t_q}.\\[3mm]
		\parbox{8cm}{Smallest older sibling of $ \ss_\bb $ in $ \kl{t_q}, $ if $ \ss_\bb $ is defined and there is any such older sibling.}\\[3mm]
		\text{Smallest child of } \tt \text{ in } \kl{t_q} \text{ in any other case}.
			\end{cases} \]
		
	\end{definition}
	
	\begin{definition}
		Let $i\in \{0,1,\dots,r\}$ and $\bb$ be a branch of some "subtree" of $t_i$. For a state $q$ of "index" $j$ and a node $\tt$ such that $q\in \mstate_j(\tt)$ and such that $\tt\in \bb$ if $i=j$, we define:
		
		\[ \intro{\mathit{Nextbranch}_{t_q}}(\bb,i,\tt)= 
		\begin{cases}

		\text{ Leftmost branch in } \kl{t_q} \text{ below } \kl{\mathit{Nextchild}_{t_q}}(\bb,\tt),  \text{ if } i= j .\\[3mm]
		
		\text{The leftmost branch in  } \msubtree_{t_q}(\tt),  \text{ if } i\neq j.
		
		\end{cases}  \]

	\end{definition}
	
	
	\begin{definition}[ACD-transformation]\label{Def_TransformationMullerAutomata} 
		Let  $\T=(V,E,\msource,\mtarget,I_0, \F)$ be a "Muller transition system" with "alternating cycle decomposition" $\ACD (\T)= \{(t_0,\nu_0),\ab (t_1,\nu_1),\dots,(t_r,\nu_r)\}$. We define its  \AP""ACD-parity transition system"" (or \emph{ACD-transformation}) $\kl{\P_{\mathcal{ACD}(\T)}}=\ab (V_P,E_P,\ab \msource_P,\mtarget_P,I_0', p:E_P\rightarrow \NN)$ as follows: 
		
		\begin{itemize}
			\item $V_P=\{ (q,i,\bb ) \; : \;  q\in V \text{ of "index" } i \text{ and } \bb\in \mbranch(\kl{t_q}) \} $.
			
			\item For each node $(q,i,\bb)\in V_P$ and each edge $e\in \kl{\mout}(q)$ we define an edge $e_{i,\bb}$ from $(q,i,\bb )$. We set 
			\begin{itemize}
				\item  $\msource_P(e_{i,\bb})=(q,i,\bb)$, where $q=\msource(e)$.
				
				\item  $\mtarget_P(e_{i,\bb})=(q',k,\kl{\mathit{Nextbranch}_{t_{q'}}}(\bb,i,\tt))$, where $q'=\mtarget(e)$, $k$ is its "index" and $\tt=\msupp(\bb,i,e)$.
			
				\item $p(e_{i,\tt})=\kl(node){p_j}(\msupp(\bb,i,e))$, where $j$ is the "index" of $\msupp(\bb,i,e)$.
			\end{itemize} 	
				\item $I_0'=\{(q_0,i,\bb_0) \; : \; q_0\in I_0, \, i \text{ the index of } q_0$ and $\bb_0$ the leftmost branch in $\kl{t_{q_0}}\}$.
			\end{itemize} 
		If $\T$ is labelled by $l_V:V\rightarrow L_V$, $l_E:E\rightarrow L_E$, we label $\kl{\P_{\mathcal{ACD}(\T)}}$ by $l_V'((q,i,\bb))=l_V(q)$ and $l_E'(e_{i,\bb})=l_E(e)$.
	\end{definition}
	
	The set of states of $\P_{\mathcal {ACD}(\T )}$ is build as follows: for each state $q\in \T$ we consider the subtree of $\ACD(\T)$ consisting of the nodes with $q$ in its label, and we add a state for each branch of this subtree.
	
	Intuitively, to define transitions in the transition system $\kl{\P_{\mathcal{ACD}(\T)}}$ we move simultaneously in $\T$ and in $\ACD(\T)$. We start from $q_0\in I_0$ and from the leftmost branch of $\kl{t_{q_0}}$. When we take a transition $e$ in $\T$ while being in a branch $\bb$, we climb the branch $\bb$ searching a node $\tt$ with $q'=\mtarget(e)$ and $e$ in its label, and we produce the priority corresponding to the level reached. If no such node exists, we jump to the root of the tree corresponding to $q'$. Then, we move to the next child of $\tt$ on the right of $\bb$ in the tree $\kl{t_{q'}}$, and we pick the leftmost branch under it in $\kl{t_{q'}}$. If we had jumped to the root of $\kl{t_{q'}}$ from a different tree, we pick the leftmost branch of $\kl{t_{q'}}$. 
	
	\begin{remark*}
		The size of $\kl{\P_{\mathcal{ACD}(\T)}}$ is 
		\[ |\P_{\mathcal {ACD}(\T )}|=\sum\limits_{q\in V} |\mbranch(\kl{t_q})|. \]
		The number of priorities used by $\kl{\P_{\mathcal{ACD}(\T)}}$ is the "height" of a maximal tree of $\ACD(\T)$ if $\ACD(\T)$ is \kl(ACD){even} or \kl(ACD){odd}, and the "height" of a maximal tree plus one if $\ACD(\T)$ is \kl(ACD){ambiguous}.
	\end{remark*}
	
\begin{example}\label{Example_ParityTransitionSystem}
	In Figure~\ref{Fig_ParityTransitionSystem} we show the "ACD-parity transition system" $\kl{\P_{\mathcal{ACD}(\T)}}$ of the transition system of Example~\ref{Example_TransitionSystem} (Figure~\ref{Fig_TransitionSystem}). States are labelled with the corresponding state $q_j$ in $\T$, the tree of its "index" and a node $\tt\in t_i$ that is a leaf in $\kl{t_{q_j}}$ (defining a branch of it).
	
	We have tagged the edges of $\kl{\P_{\mathcal{ACD}(\T)}}$ with the names of edges of $\T$ (even if it is not an automaton). These indicate the image of the edges by the "morphism" $\pp: \P_{\mathcal{ACD}(\T)} \rightarrow \T$, and make clear the bijection between "runs" in $\T$ and in $\kl{\P_{\mathcal{ACD}(\T)}}$.
	
	In this example, we create one ``copy'' of states $q_0,q_1$ and $q_2$, three ``copies'' of the state $q_3$ and two``copies'' of states $q_4$ and $q_5$. The resulting "parity transition system" $\kl{\P_{\mathcal{ACD}(\T)}}$ has therefore $10$ states.
	
	\begin{figure}[ht]
		\centering 
		\begin{tikzpicture}[square/.style={regular polygon,regular polygon sides=4}, align=center,node distance=2cm,inner sep=2pt]
		
		\node[scale=0.8] at (0,1.5) [state, initial] (0) { $q_0,t_0$\\$\langle \varepsilon \rangle$};
		\node[scale=0.8] at (3,1.5) [state] (1) {$q_1,t_1$\\$\langle 0\rangle $};
		\node[scale=0.8] at (6,1.5) [state] (2) {$q_2,t_1$\\$\langle 0\rangle $};
		
		\node[scale=0.8] at (0,0) [state] (3n00) { $q_3,t_2$\\ $\langle 0{,}0\rangle $};
		\node[scale=0.8] at (1.5,-1.3) [state] (3n01) {$q_3,t_2$\\$\langle 0{,}1\rangle $};
		\node[scale=0.8] at (0,-2.5) [state] (3n20) {$q_3,t_2$\\$\langle 2{,}0\rangle $};
		
		\node[scale=0.8] at (3,0) [state] (4n01) {$q_4,t_2$\\$\langle 0{,}1\rangle $};
		\node[scale=0.8] at (3,-2.5) [state] (4n20) {$q_4,t_2$\\$\langle 2{,}0\rangle $};
		
		\node[scale=0.8] at (6,0) [state] (5n1) {$q_5,t_2$\\$\langle 1\rangle $};
		\node[scale=0.8] at (6,-2.5) [state] (5n2) {$q_5,t_2$\\$\langle 2\rangle $};
		\path[->] 
		
		(0)  edge [] 	node[above] {$a: \textcolor{Green2}{2}$ }   (1)
		(0)  edge []  node[left] {$b: \textcolor{Green2}{2}$ }   (3n00)
		
		(1)  edge [in=160,out=20]  node[above] {$c: \textcolor{Green2}{3}$ }   (2)
		(1)  edge [] 	node[left] {$f: \textcolor{Green2}{1}$ }   (4n01)
		
		(2)  edge [in=-20,out=200]  node[above] {$d: \textcolor{Green2}{3}$ }   (1)
		(2)  edge [in=-30,out=30,loop] 	node[right] {$e: \textcolor{Green2}{2}$ }   (2)
		
		(3n00)  edge [in=210,out=150,loop]  node[left] {$g: \textcolor{Green2}{3}$ }   (3n00)
		(3n00)  edge [in=160,out=20,pos=0.4]  node[above] {$h: \textcolor{Green2}{2}$ }   (4n01)
		
		(3n01)  edge []  node[] {$g: \textcolor{Green2}{2}$ }   (3n00)
		(3n01)  edge [out=20,in=250]  node[right, pos=0.3] {$\;\;h: \textcolor{Green2}{3}$ }   (4n01)
		
		(3n20)  edge []  node[left] {$g: \textcolor{Green2}{1}$ }   (3n00)
		(3n20)  edge [in=170,out=10]  node[above] {$h: \textcolor{Green2}{3}$ }   (4n20)
		
		(4n01)  edge [out=200,in=70]  node[above] {$i: \textcolor{Green2}{3}$ }   (3n01)
		(4n01)  edge [] 	node[above] {$j: \textcolor{Green2}{1}$ }   (5n1)
		
		(4n20)  edge [in=-20,out=200]  node[below] {$i: \textcolor{Green2}{3}$ }   (3n20)
		(4n20)  edge [out=20,in=160] 	node[above,pos=0.5] {$j: \textcolor{Green2}{2}$ }   (5n2)
		
		(5n2)  edge [out=200,in=-20]  node[below] {$k: \textcolor{Green2}{2}$ }   (4n20)
		(5n2)  edge []  node[right] {$l: \textcolor{Green2}{1}$ }   (5n1)
		
		(5n1)  edge []  node[left,pos=0.4] {$k: \textcolor{Green2}{1}$ }   (4n20)
		(5n1)  edge [in=-30,out=30,loop]  node[right] {$l: \textcolor{Green2}{2}$ }   (5n1);
		
		\end{tikzpicture}
		\caption{Transition system $\kl{\P_{\mathcal{ACD}(\T)}}$.}
		\label{Fig_ParityTransitionSystem} 
	\end{figure}
\end{example}

\begin{example}
	
	Let $\A$ be the "Muller automaton" of Example~\ref{Example_AutomataForL}. Its "alternating cycle decomposition" has a single tree that coincides with the "Zielonka tree" of its "Muller acceptance condition" $\F_1$ (shown in Figure~\ref{Fig_ZielonkaTree1}). However, its "ACD-parity transition system" has only $3$ states, less than the "composition" $\Z_{\F_1} \lhd \A $ (Figure~\ref{Fig_Product_AxZF}), as shown in Figure~\ref{Fig_ACDTransformationMullerAutomA}.

\begin{figure}[ht]
	\centering
	\begin{minipage}[b]{0.3\textwidth} 
		\begin{tikzpicture}[square/.style={regular polygon,regular polygon sides=4}, align=center,node distance=2cm,inner sep=2pt]
		
		\node at (0,2) [state, initial] (0) {A};
		\node at (2,2) [state] (1) {B};
		
		\path[->] 
		(0)  edge [in=70,out=110,loop] 	node[above] {$0:\textcolor{Green2}{a}$ }   (0)
		(0)  edge [in=150,out=30] 	node[above] {$1 : \textcolor{Green2}{b}$ }   (1)
		
		(1)  edge [in=-30,out=210]  node[below] {$1: \textcolor{Green2}{b}$ }   (0)
		(1)  edge [in=70,out=110,loop] 	node[above] {$0 : \textcolor{Green2}{c}$ }   (1);

		\end{tikzpicture}
		\caption*{ Muller automaton $\A$.\\  $\F_1=\{\{a\},\{b\}\}$.}
	\end{minipage}
	\hspace{0mm}
		\begin{minipage}[b]{0.32\textwidth} 
		
		\begin{tikzpicture}[square/.style={regular polygon,regular polygon sides=4}, align=center,node distance=2cm,inner sep=3pt]
		
		\node at (0,1.5) [draw, rectangle, text height=0.3cm, text width=1cm] (R) { a,b,c \\ A,B};
		
		\node at (-1,0) [draw, ellipse,text height=0.1cm, text width=0.8cm] (0) {a\\ A};
		\node at (1,0) [draw, ellipse,text height=0.1cm, text width=0.8cm] (1) {b\\ A,B};
		
		\node at (2,1.5)  (p2) {1};
		\node at (2,0)  (p3) {2};
		\draw   
		(R) edge (0)
		(R) edge (1);
		\end{tikzpicture}
		\caption*{\centering $\ACD(\A)$.}
	\end{minipage}
\begin{minipage}[b]{0.33\textwidth} 
	\begin{tikzpicture}[square/.style={regular polygon,regular polygon sides=4}, align=center,node distance=2cm,inner sep=2pt]
	
	\node at (0,2) [state, initial] (A1) {A,0};
	\node at (0,0) [state] (A2) {A,1};
	\node at (2,1) [state] (B) {B,1};

	\path[->] 
	(A1)  edge [in=70,out=110,loop] 	node[above] {$0 :\textcolor{Green2}{2}$ }   (A1)
	(A1)  edge 	node[ anchor=south west, pos=0.2] {$1:\textcolor{Green2}{1}$ }   (B)

	(B)  edge[in=70,out=110,loop]  node[above] {$0:\textcolor{Green2}{1}$ }   (B)
	(B)  edge [in=10,out=230]	node[below right] {$1:\textcolor{Green2}{2}$ }   (A2)
	
	(A2)  edge [in=190,out=50]  node[above] {$1:\textcolor{Green2}{2}$ }   (B)
	(A2)  edge []  node[left] {$0:\textcolor{Green2}{1}$ }  (A1);
	
	\end{tikzpicture}
	\caption*{\centering $\P_{\mathcal{ACD}(\A)}$.}
\end{minipage}
	\caption{Muller automaton $\A$, its "alternating cycle decomposition" and its "ACD-transformation".}
	\label{Fig_ACDTransformationMullerAutomA}
\end{figure}
	
\end{example}
%
	\begin{proposition}[Correctness]\label{Prop_Correctness-ACD}
		Let  $\T=(V, E, \msource, \mtarget, I_0,  \F)$ be a "Muller transition system" and $\kl{\P_{\mathcal{ACD}(\T)}}=(V_P, E_P, \msource_P, \mtarget_P, I_0',  p:E_P\rightarrow \NN)$ its "ACD-transition system". Then, there exists a "locally bijective morphism" $ \pp : \kl{\P_{\mathcal{ACD}(\T)}} \rightarrow \T$.
		Moreover, if $\T$ is a "labelled transition system", then  $\pp$ is a "morphism of labelled transition systems".
	\end{proposition}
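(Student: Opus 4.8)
The plan is to take for $\pp$ the projection onto the first component and to reduce the only non-trivial verification, the preservation of acceptance, to the single-tree argument already carried out for the Zielonka tree automaton in proposition~\ref{Prop_Correctness-Zielonka}.

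First I would define $\pp_V(q,i,\bb)=q$ and $\pp_E(e_{i,\bb})=e$, sending each state and edge of $\P_{\mathcal{ACD}(\T)}$ to the underlying state and edge of $\T$. The first three conditions of a morphism are then immediate from definition~\ref{Def_TransformationMullerAutomata}: the initial set $I_0'$ projects into $I_0$, and by construction $\msource_P(e_{i,\bb})=(\msource(e),i,\bb)$ while $\mtarget_P(e_{i,\bb})$ has first component $\mtarget(e)$, so origins and targets are preserved. Local bijectivity is equally direct: for a fixed state $(q,i,\bb)$ the construction produces exactly one edge $e_{i,\bb}$ for each $e\in\mout(q)$, hence $\pp_E$ restricts to a bijection from $\mout(q,i,\bb)$ onto $\mout(q)$; and $\pp_V$ restricts to a bijection from $I_0'$ onto $I_0$, since each $q_0\in I_0$ yields the single initial state $(q_0,i,\bb_0)$. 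When $\T$ is labelled, the labels are copied verbatim in definition~\ref{Def_TransformationMullerAutomata}, so $\pp$ is moreover a morphism of labelled transition systems.

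The substance of the proof lies in the remaining condition: for every run $\rr$ of $\P_{\mathcal{ACD}(\T)}$, the minimal priority produced infinitely often is even if and only if $\minf(\pp_E(\rr))\in\F$. Let $\rr'=\pp_E(\rr)$ and $l=\minf(\rr')$. As the set of edges seen infinitely often along an infinite path, $l$ is a loop of $\T$ lying inside a single strongly connected component, so $l\subseteq\nu_i(\varepsilon)$ for a unique proper tree $t_i$ (edges of index $0$ are transient and cannot occur in $l$); consequently $\rr$ eventually remains among states of index $i$, and the priorities emitted infinitely often are exactly the $p_i(\msupp(\bb,i,e))$. Recording the support node at each step yields an infinite sequence of nodes of $t_i$, and precisely as in proposition~\ref{Prop_Correctness-Zielonka}---using that the subtrees hanging below distinct siblings have disjoint sets of branches and that $\mnextb$ cycles through all children of a node---there is a unique highest node $\tt_p$ of $t_i$ visited infinitely often as a support, every other infinitely visited support being a descendant of it, and $p_i(\tt_p)$ is the minimal priority produced infinitely often.

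It then remains to match $\nu_i(\tt_p)$ with $l$ up to $\F$-membership. I would show $l\subseteq\nu_i(\tt_p)$ (any edge outside $\nu_i(\tt_p)$ forces the support strictly above $\tt_p$ and hence cannot appear infinitely often) and $l\not\subseteq\nu_i(\ss)$ for every child $\ss$ of $\tt_p$ (otherwise, by the same cycling argument, $\tt_p$ would cease to be revisited). Since the children of $\tt_p$ are, by definition~\ref{Def_ACD}, the maximal loops inside $\nu_i(\tt_p)$ of the opposite $\F$-status, a loop $l\subseteq\nu_i(\tt_p)$ contained in none of them must satisfy $l\in\F\iff\nu_i(\tt_p)\in\F$. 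Finally, the priority levels were assigned exactly so that $p_i(\tt)$ is even if and only if $\nu_i(\tt)\in\F$, a fact I would confirm by inspecting the even, odd and ambiguous cases; this yields $p_i(\tt_p)$ even $\iff\nu_i(\tt_p)\in\F\iff l\in\F$, the last being equivalent to acceptance of $\rr'$ by the Muller condition. The main obstacle is this acceptance-preservation step, and especially the bookkeeping required to transport the single-tree Zielonka reasoning to the multi-tree setting, with its per-state subtrees $t_q$ and the inter-tree jumps encoded by $\msupp$ and $\mnextb$.
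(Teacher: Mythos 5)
Your proposal is correct and follows essentially the same route as the paper's proof: the same projection morphism, the same verification of local bijectivity, and the same acceptance argument identifying the unique highest support node $\tt_p$ visited infinitely often and establishing $l\subseteq\nu_i(\tt_p)$ together with $l\nsubseteq\nu_i(\ss)$ for every child $\ss$, exactly as the paper does by transporting the Zielonka-tree reasoning of proposition~\ref{Prop_Correctness-Zielonka}. The only cosmetic difference is that you track a run of $\P_{\mathcal{ACD}(\T)}$ and its image under $\pp$, while the paper tracks a run of $\T$ and its preimage, which are interchangeable given the induced bijection on runs.
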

	\begin{proof}
	We define $\pp_V : V_P \rightarrow V$ by $\pp_V((q,i,\bb))=q$ and $\pp_E : E_P \rightarrow E$ by $\pp_E(e_{i,\tt})=e$.
	It is clear that this map preserves edges, initial states and labels. It is also clear that it is "locally bijective", since we have defined one initial state in $\kl{\P_{\mathcal{ACD}(\T)}}$ for each initial state in $\T$, and by definition the edges in $\kl{\mout}((q,i,\bb))$ are in bijection with $\mout(q)$. It induces therefore a bijection between the runs of the transition systems
	 (Fact~\ref{Fact_LocBijMorph_BijectionRuns}).
		
		Let us see that a "run" $\rr$ in $\T$ is accepted if and only if $\pp^{-1}(\rr)$ is accepted in $\P_{\mathcal {ACD}(\T )}$. First, we remark that any infinite run $\rr$ of $\T$ will eventually stay in a "loop" $l\in \mloop(\T)$ such that $\minf(\rr)=l$, and therefore we will eventually only visit states corresponding to the tree $t_i$ such that $l\subseteq \nu_i(\varepsilon)$ in the "alternating cycle decomposition". Let $p_{\min}$ be the smallest priority produced infinitely often in the run $\pp^{-1}(\rr)$ in $\P_{\mathcal {ACD}(\T )}$. As in the proof of Proposition~\ref{Prop_Correctness-Zielonka}, there is a unique node $\tt_p$ in $t_i$ visited infinitely often such that $\kl(node){p_i}(\tt_p)=p_{\min}$. Moreover, the states visited infinitely often in $\P_{\mathcal {ACD}(\T )}$ correspond to branches below $\tt_p$, that is, they are of the form
		$ (q,i,\bb)$, with $\bb \in \msubtree_{t_q}(\tt_p), \text{ for } q\in \mstate_i(\tt_p)$.
		 We claim that $\tt_p$ verifies:
		 
		\begin{itemize}
			\item $l\subseteq \nu_i(\tt_p)$.
			\item $l\nsubseteq \nu_i(\sigma)$ for every "child" $\sigma$ of $\tt_p$.
		\end{itemize}
	
	By definition of $\ACD(\T)$ this implies
	$$ l\in \F \; \Longleftrightarrow \; \nu_i(\tt_p)\in \F \; \Leftrightarrow \; p_{\min} \text{ is even.}$$
	
	We show that $l\subseteq \nu_i(\tt_p)$. For every edge $e\notin \nu_i(\tt_p)$ of "index" $i$ and for every branch $\bb \in \msubtree_{t_q}(\tt_p), \text{ for } q\in \mstate_i(\tt_p)$, we have that $\tt'=\msupp(\bb,i,e)$ is a strict "ancestor" of $\tt_p$ in $t_i$. Therefore, if $l$ was not contained in $\nu_i(\tt_p)$ we would produce infinitely often priorities strictly smaller than $p_{\min}$.
	
	Finally, we show that $l\nsubseteq \nu_i(\sigma)$ for every "child" $\sigma$ of $\tt_p$. Since we reach $\tt_p$ infinitely often, we take transitions $e_{i,\bb}$ such that $\tt_p=\msupp(\bb,i,e)$ infinitely often. Let us reason by contradiction and let us suppose that there is some child $\ss$ of $\tt_p$ such that $l\subseteq \nu_i(\sigma)$. Then for each edge $e\in l$, $\mtarget(e)\in \mstate_i(\ss)$, and therefore $\ss\in t_q$ for all $q\in \mstate(l)$ and for each transition $e_{i,\bb}$ such that $\tt_p=\msupp(\bb,i,e)$, some branches passing through $\ss$ are considered as destinations. Eventually, we will go to some state $(q,i,\bb')$, for some branch $\bb'\in \msubtree_{t_q}(\ss)$. But since $l\subseteq \nu_i(\sigma)$, then for every edge $e\in l$ and branch $\bb'\in \msubtree_{t_q}(\ss)$ it is verified that $\msupp(\bb',i,e)$ is a "descendant" of $\ss$, so we would not visit again $\tt_p$ and all priorities produced infinitely often would be strictly greater than $p_{\min}$.  
	\end{proof}

	From the remarks at the end of Section~\ref{Section_LocallyBijectiveMorphism}, we obtain:
	\begin{corollary}
		If $\A$ is a "Muller automaton" over $\Sigma$, the automaton $\kl{\P_{\mathcal{ACD}(\A)}}$ is a "parity automaton" recognising $\L(\A)$. Moreover,
		\begin{itemize}
			\item $\A$ is "deterministic" if and only if $\P_{\mathcal{ACD}(\A)}$ is deterministic.
			\item $\A$ is "unambiguous" if and only if $\P_{\mathcal{ACD}(\A)}$ is unambiguous.
			\item $\A$ is "GFG" if and only if $\P_{\mathcal{ACD}(\A)}$ is GFG.
		\end{itemize} 
	\end{corollary}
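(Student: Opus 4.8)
The plan is to derive every clause from the single ingredient provided by Proposition~\ref{Prop_Correctness-ACD}: a "locally bijective morphism" $\pp : \P_{\mathcal{ACD}(\A)} \rightarrow \A$. First I would record the structural facts. Since $\A$ is a "Muller automaton", it is a "labelled transition system" whose edges carry input letters, and by Definition~\ref{Def_TransformationMullerAutomata} the ACD-transformation outputs a "parity transition system" while transporting the edge labelling through $l_E'(e_{i,\bb})=l_E(e)$; hence $\P_{\mathcal{ACD}(\A)}$ is a "parity automaton" over $\Sigma$. Moreover, the last clause of Proposition~\ref{Prop_Correctness-ACD} tells us that $\pp$ is a "morphism of labelled transition systems", so it preserves input letters, which is what we will need for the determinism clause.

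For the language equality, a "locally bijective" morphism is in particular "locally surjective", so Proposition~\ref{Prop3.6_MorphImpliesLanguages} applied to $\pp$ gives $\L(\P_{\mathcal{ACD}(\A)})=\L(\A)$ at once. The equivalences for "unambiguous" and "GFG" automata are then immediate: the proposition stated just before Proposition~\ref{Prop_GamesLocBijMorph} asserts precisely that a "locally bijective morphism" between "non-deterministic automata" transfers both properties in both directions, and I would simply instantiate it with $\pp$.

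The one clause not handed to us verbatim is the determinism equivalence, which is where the small amount of genuine work lies. Here I would combine local bijectivity with label preservation: for every state $v$ of $\P_{\mathcal{ACD}(\A)}$ the map $\pp_E$ restricts to a bijection from $\mout(v)$ onto $\mout(\pp(v))$ respecting input letters, so for each $a\in\Sigma$ the $a$-labelled edges leaving $v$ correspond bijectively to those leaving $\pp(v)$; similarly local bijectivity makes $\pp_V$ a bijection from $I_0$ of $\P_{\mathcal{ACD}(\A)}$ onto $I_0$ of $\A$, so one initial set is a singleton exactly when the other is. To run this in both directions I must check that $\pp_V$ is onto \emph{all} of $V$, not merely the "accessible part" covered by Fact~\ref{Fact_LocSurjMorph_OntoAccessible}: this holds because each state $q$ of "index" $j$ satisfies $q\in\mstate_j(\varepsilon)$ by the very definition of the index, so the root of $t_j$ belongs to $t_q$, the subtree $t_q$ is nonempty, and $q$ therefore has a preimage. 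With surjectivity in hand, $\A$ has a single initial state and a unique $a$-successor at each state precisely when $\P_{\mathcal{ACD}(\A)}$ does, giving the determinism biconditional.

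The main obstacle, modest as it is, is exactly this determinism equivalence, since it is the only item not available as a black box. The two subtleties to handle carefully are that the local bijection between out-edges must be used as a \emph{label-respecting} bijection (which is legitimate only because $\pp$ is a morphism of \emph{labelled} systems), and that $\pp_V$ must be shown surjective onto every state rather than only the accessible ones, a point that rests on every state of $\A$ sitting in the label of the root of its own tree.
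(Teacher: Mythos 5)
Your proof is correct and follows essentially the same route as the paper, which derives the corollary directly from the locally bijective morphism of Proposition~\ref{Prop_Correctness-ACD} together with the general facts about such morphisms (Proposition~\ref{Prop3.6_MorphImpliesLanguages} for the language, and the proposition on unambiguous/GFG preservation). Your explicit treatment of the determinism clause --- using label preservation plus the observation that every state $q$ lies in $\mstate_j(\varepsilon)$ for its index $j$, so $t_q$ is nonempty and $\pp_V$ is surjective onto all states, not just accessible ones --- is exactly the detail the paper leaves implicit, and it is handled correctly.
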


	\begin{corollary}
		If $\G$ is a "Muller game", then $\P_{\mathcal {ACD}(\G )}$ is a "parity game" that has the same winner than $\G$.
		The "winning region" of $\G$ for a player $P\in \{Eve, Adam\}$ is $\kl{\W_P}(\G)=\pp("\W_P"(\P_{\mathcal {ACD}(\G )}))$, being $\pp$ the morphism of the proof of Proposition~\ref{Prop_Correctness-ACD}.
	\end{corollary}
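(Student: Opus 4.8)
The plan is to obtain this corollary as a direct specialisation of the two game-related results already established, namely Proposition~\ref{Prop_Correctness-ACD} and Proposition~\ref{Prop_GamesLocBijMorph}. First I would check that $\P_{\mathcal{ACD}(\G)}$ is genuinely a "parity game". By the definition of the "ACD-transformation" it is a "parity transition system", and since $\G$ is a "game" it carries a vertex labelling $l_V:V\rightarrow\{Eve,Adam\}$ which the transformation transports to $\P_{\mathcal{ACD}(\G)}$ through $l_V'((q,i,\bb))=l_V(q)$. Hence each vertex $(q,i,\bb)$ is controlled by the same player that owns $q$ in $\G$, so $\P_{\mathcal{ACD}(\G)}$ is a well-defined parity game, with a single initial vertex inherited from that of $\G$.

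Next I would invoke Proposition~\ref{Prop_Correctness-ACD}, which provides a "locally bijective morphism" $\pp:\P_{\mathcal{ACD}(\G)}\rightarrow\G$; because $\G$ is labelled, this $\pp$ is moreover a "morphism of labelled transition systems", and therefore preserves the ownership of vertices. With $\pp$ in hand I would apply Proposition~\ref{Prop_GamesLocBijMorph} to the games $\P_{\mathcal{ACD}(\G)}$ and $\G$. Its first conclusion states that a player $P\in\{Eve,Adam\}$ wins $\P_{\mathcal{ACD}(\G)}$ if and only if $P$ wins $\G$, which is precisely the assertion that the two games have the same winner.

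For the equality of winning regions I would use the second part of Proposition~\ref{Prop_GamesLocBijMorph}, which requires $\pp$ to be surjective. To verify surjectivity I would take an arbitrary vertex $q\in V$ of "index" $j$ and observe that the root $\varepsilon$ of the tree $t_j$ satisfies $q\in\mstate_j(\varepsilon)$, so $\varepsilon$ lies in the subtree $t_q$; thus $t_q$ is non-empty, has at least one branch $\bb$, and yields a vertex $(q,j,\bb)\in V_P$ with $\pp_V((q,j,\bb))=q$. Consequently $\pp_V$ is onto $V$ and $\pp$ is surjective, and Proposition~\ref{Prop_GamesLocBijMorph} then delivers $\W_P(\G)=\pp(\W_P(\P_{\mathcal{ACD}(\G)}))$ exactly as stated.

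I do not anticipate a genuine obstacle, since the real content lies in Propositions~\ref{Prop_Correctness-ACD} and~\ref{Prop_GamesLocBijMorph}; the only point that demands a little care is the surjectivity of $\pp$. The winning-region statement ranges over all vertices (each one may be taken as initial), so it is \emph{not} enough to know, via Fact~\ref{Fact_LocSurjMorph_OntoAccessible}, that $\pp$ is merely onto the "accessible" part of $\G$: one really needs $\pp_V$ to hit every vertex, which is exactly what the non-emptiness of each $t_q$ guarantees.
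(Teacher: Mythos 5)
Your proposal is correct and follows exactly the route the paper intends: the corollary is stated without proof as an immediate combination of Proposition~\ref{Prop_Correctness-ACD} (the locally bijective morphism of labelled transition systems, which transports the Eve/Adam labelling) with Proposition~\ref{Prop_GamesLocBijMorph}. Your explicit verification that $\pp_V$ is surjective (every $t_q$ contains the root of the tree of the index of $q$, hence has a branch) is a point the paper leaves implicit, and it is indeed the right way to justify applying the winning-region part of Proposition~\ref{Prop_GamesLocBijMorph}.
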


	\subsection{Optimality of the alternating cycle decomposition transformation}\label{Section_OptimalityACD}
	
	In this section we prove the strong optimality of the "alternating cycle decomposition transformation", both for number of priorities (Proposition~\ref{Prop_OptimalityACD_Priorities}) and for size (Theorem~\ref{Th_OptimalityACDTransformation}). We use the same ideas as for proving the optimality of the "Zielonka tree automaton" in Section~\ref{Section_OptimalityZielonkaTree}.

	\begin{proposition}[Optimality of the number of priorities]\label{Prop_OptimalityACD_Priorities}
		Let $\T$ be a "Muller transition system" such that all its states are "accessible" and let $\P_{\mathcal{ACD}(\T)}$ be its "ACD-transition system". If $\P$ is another "parity transition system" such that there is a "locally bijective morphism" $\pp:\P \rightarrow \T$, then $\P$ uses at least the same number of priorities than $\kl{\P_{\mathcal{ACD}(\T)}}$.
	\end{proposition}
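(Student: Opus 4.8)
The plan is to follow the strategy of proposition~\ref{Prop_OptimalPrioritiesZielonka}, transported from cycling words to cycling runs through the locally bijective morphism $\pp$. Recall first that $\P_{\mathcal{ACD}(\T)}$ uses $h$ priorities when $\ACD(\T)$ is even or odd, and $h+1$ when it is ambiguous, where $h$ is the maximal $\mheight$ of a proper tree of $\ACD(\T)$. By lemma~\ref{Lemma_PrioritiesInRange} I may assume that $\P$ uses every priority of an interval $[\mu',\eta']$, so that its number of priorities equals $\eta'-\mu'+1$; it then suffices to lower-bound $\eta'-\mu'$. Since all states of $\T$ are accessible and $\pp$ is locally bijective, $\pp$ is surjective (fact~\ref{Fact_LocSurjMorph_OntoAccessible}) and lifts runs uniquely: fixing an accessible preimage of a state and matching outgoing edges one at a time, each run of $\T$ from that state has a unique $\pp$-preimage, carrying the same acceptance status, because parity and Muller acceptance are tail properties and $\pp$ preserves acceptance of runs from initial states (fact~\ref{Fact_LocBijMorph_BijectionRuns}).

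Next I would fix a proper tree $t_i$ of maximal height $h$ and a maximal branch of it, labelled from leaf to root by loops $l_0\subsetneq l_1\subsetneq\dots\subsetneq l_{h-1}=\nu_i(\varepsilon)$ which alternate membership in $\F$, by construction of the $\ACD$. The engine of the argument is the following: I want to produce in $\P$ a state $p$ and cycles $C_0\subseteq C_1\subseteq\dots\subseteq C_{h-1}$ through $p$, nested as sets of edges, such that each $C_t$ projects under $\pp$ onto a closed run of $\T$ with $\mapp$ equal to $l_t$. Granting this, the infinite run $C_t^\oo$ lifts the loop $l_t$ and so is accepting in $\P$ iff $l_t\in\F$, that is iff $t$ has a fixed parity; hence the least priority $m_t$ occurring on $C_t$ has the parity of $t$. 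As $C_t\subseteq C_{t+1}$, the sequence $m_0\ge m_1\ge\dots\ge m_{h-1}$ is non-increasing, and since consecutive terms have opposite parities it is strictly decreasing. This exhibits $h$ pairwise distinct priorities in $\P$ and pins the parity of the extreme one $m_{h-1}$, exactly as in proposition~\ref{Prop_OptimalPrioritiesZielonka}; hence $\eta'-\mu'\ge h-1$, so $\P$ uses at least $h$ priorities.

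The step I expect to be the main obstacle is producing these nested cycles $C_0\subseteq\dots\subseteq C_{h-1}$ through a common state of $\P$. This is where the structural lemmas behind the Zielonka case reappear. Writing $S$ for the strongly connected component carrying $l_{h-1}$, one chooses $q\in\mstate(l_0)$ and a preimage of $q$ lying in a terminal strongly connected part of $\pp^{-1}(S)$ (onto which $\pp$ restricts to a locally bijective morphism); then, for each $t$, one lifts a closed run of $\T$ at $q$ with $\mapp=l_t$ and, using finiteness of $\P$ together with local bijectivity, pumps it into an actual cycle of $\P$ projecting onto $l_t$. The delicate point is to arrange all these cycles through a single state and genuinely nested, which is the transition-system analogue of the accessibility and disjointness arguments of lemmas~\ref{Lemma_ExistenceSCC} and~\ref{Lemma_Disjoint_X-SCC}. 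Finally, when $\ACD(\T)$ is ambiguous I would apply the construction to two maximal-height proper trees of opposite polarity (one even, one odd): each yields $h$ distinct priorities whose maximum is at most $\eta'$, hence a used priority at most $\eta'-(h-1)$, and the parity of this extreme priority is opposite for the two trees. Since $\P$ uses every priority of $[\mu',\eta']$, two used priorities of opposite parities both at most $\eta'-(h-1)$ force $\mu'\le\eta'-h$, so $\P$ uses at least $h+1$ priorities, matching $\P_{\mathcal{ACD}(\T)}$.
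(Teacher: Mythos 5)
Your overall architecture is sound and genuinely different from the paper's proof. The paper fixes exactly the same data (a tree of maximal height $h$, a maximal branch with its nested alternating loops, a state $q$ in the leaf loop, and closed runs $w_j\in\mathit{Loop}_\T(q)$ with $\mathit{App}(w_j)=l_j$), but it never builds cycles inside $\P$: it analyses the lifted runs $\pp^{-1}\bigl((w_1\cdots w_k v)^\oo\bigr)$ and redoes, by induction on $k$ with a universally quantified tail $v\in\mathit{Loop}_\T(q)$, the priority-bounding argument of proposition \ref{Prop_OptimalPrioritiesZielonka}, concluding with lemma \ref{Lemma_PrioritiesInRange}; the ambiguous case is handled with two branches of opposite polarity, just as you do. Your replacement of that induction by a chain of nested cycles through a common state is cleaner once the cycles exist: the comparison of priorities across levels becomes set inclusion of edge sets plus alternation of parities, and in the even/odd case you exhibit $h$ distinct priorities outright, so lemma \ref{Lemma_PrioritiesInRange} is not even needed there. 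Your final counting, including the ambiguous case, is correct.

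The gap is precisely in the step you flag, and the tools you point to are not quite the right ones. Pumping a fixed closed run $w_t$ from a preimage of $q$ until a state of $\P$ repeats does give, for each $t$, a cycle projecting onto a closed run with $\mathit{App}$ equal to $l_t$, but its base point is some preimage of $q$ that depends on $t$; nothing makes a single state of $\P$ simultaneously recurrent for all the maps sending a preimage of $q$ to the endpoint of the lift of $w_t$, and joining cycles at different base points by connecting paths inside $\pp^{-1}(S)$ would enlarge the $\mathit{App}$ of the projection beyond $l_t$ and destroy the alternation. Moreover, lemma \ref{Lemma_Disjoint_X-SCC} is irrelevant here (disjointness is only needed for the size bound of theorem \ref{Th_OptimalityACDTransformation}); the relevant tool is lemma \ref{Lemma_ExistanceSCC_SurjMorph}. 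Concretely: first restrict $\P$ to its accessible part (this keeps the morphism locally bijective, can only decrease the set of used priorities, and is needed so that acceptance of runs through your chosen state transfers through $\pp$, since acceptance preservation is only guaranteed for runs from initial states); your terminal SCC of $\pp^{-1}(S)$ taken there is an $l_{h-1}$-SCC in the sense of definition \ref{Def_l-SCC}; now apply lemma \ref{Lemma_ExistanceSCC_SurjMorph} iteratively along the branch to obtain nested subgraphs $C'_0\subseteq C'_1\subseteq\cdots\subseteq C'_{h-1}$ where $C'_t$ is an $l_t$-SCC. Pick $p\in C'_0$. For each $t$, property \eqref{Eq_l-SCC_property-star} together with local injectivity forces the unique lift of a closed run covering $l_t$ to stay inside $C'_t$, and strong connectedness of $C'_t$ closes it into a cycle $C_t$ through $p$ whose projection has $\mathit{App}$ exactly $l_t$. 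Finally, the nesting you need comes for free by passing to cumulative concatenations $D_0=C_0$ and $D_{t+1}=D_t\cdot C_{t+1}$, which are cycles through $p$ with $\mathit{App}(\pp(D_{t+1}))=l_t\cup l_{t+1}=l_{t+1}$. With this substitution your argument is complete.
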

	\begin{proof}
		We distinguish 3 cases depending on whether $\ACD(\T)$ is \kl(ACD){even}, \kl(ACD){odd} or \kl(ACD){ambiguous}.
		
		We treat simultaneously the cases $\ACD(\T)$  \kl(ACD){even} and \kl(ACD){odd}. In these cases, the number $h$ of priorities used by $\kl{\P_{\mathcal{ACD}(\T)}}$ coincides with the maximal "height" of a tree in $\ACD(\T)$. Let $t_i$ be a tree of maximal "height" $h$ in $\ACD(\T)$, $\bb=\{\tt_1,\dots,\tt_{h}\}\in \mbranch(t_i)$ a branch of $t_i$ of maximal length (ordered as $\tt_1 \kl{\sqsupseteq} \tt_2 \kl{\sqsupseteq} \dots \kl{\sqsupseteq} \tt_{h}=\varepsilon$)  and $l_j=\nu_i(\tt_j)$, $j=1,\dots, h$. We fix $q\in \mstate_i(\tt_1)$, where $\tt_1$ is the leaf of $\bb$, and we write \[\mathit{Loop}_\T(q)=\{w\in \mrun_{T,q}\cap E^* \; : \; \kl{\mfirst}(w)=\kl{\mlast}(w)=q \},\] and for each $j=1,\dots, h$ we choose $w_j \in \mathit{Loop}_\T(q)$ such that $\mocc(w_j)=l_j$. Let $\eta'$ be the maximal priority appearing in $\P$.  We show as in the proof of Proposition~\ref{Prop_OptimalPrioritiesZielonka} that for every $v\in \mathit{Loop}_\T(q)$, the "run" $\pp^{-1}((w_1\dots w_k v)^\oo)$ must produce a priority smaller or equal to $\eta'-k+1$. Taking $k=h$, the "run" $\pp^{-1}((w_1\dots w_h)^\oo)$ produces a priority smaller or equal to $\eta'-h+1$ and even if and only if $\ACD(\T)$ is \kl(ACD){even}. By Lemma~\ref{Lemma_PrioritiesInRange} we can suppose that $\P$ uses all priorities in $[\eta'-h+1, \eta']$. We conclude that $\P$ uses at least $h$ priorities, so at least as many as $\kl{\P_{\mathcal{ACD}(\T)}}$.

		In the case $\ACD(\T)$ \kl(ACD){ambiguous}, if $h$ is the maximal "height" of a tree in $\ACD(\T)$, then $\kl{\P_{\mathcal{ACD}(\T)}}$ uses $h+1$ priorities. We can repeat the previous argument with two different maximal branches of respective maximal \kl(ACD){even} and \kl(ACD){odd} trees. We conclude that $\P$ uses at least priorities in a range $[\mu,\mu+h]\cup [\eta,\eta+h]$, with $\mu$ even and $\eta$ odd, so it uses at least $h+1$ priorities.
	\end{proof}
	
	A similar proof, or an application of the results from~\cite{niwinskiwalukievicz1998Relating} gives the following result:
	\begin{proposition}\label{Prop_OptimalityACD_Priorities-Languages}
		If $\A$ is a deterministic automaton, the accessible part of $\kl{\P_{\mathcal{ACD}(\A)}}$ uses the optimal number of priorities to recognise $\L(\A)$.
	\end{proposition}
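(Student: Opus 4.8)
The plan is to mirror the lower-bound argument of Proposition~\ref{Prop_OptimalPrioritiesZielonka} (and of Proposition~\ref{Prop_OptimalityACD_Priorities}), replacing the locally bijective morphism, which is not available here, by the twin hypotheses that $\A$ is deterministic and that $\L(\P)=\L(\A)$. First I would reduce to the accessible part of $\A$, which changes neither $\L(\A)$ nor the priorities of $\P_{\mathcal{ACD}(\A)}$, and let $\P$ be an arbitrary deterministic parity automaton with $\L(\P)=\L(\A)$; by Lemma~\ref{Lemma_PrioritiesInRange} I may assume that $\P$ uses every priority of its range $[\mu',\eta']$. Recall that the corollary to Proposition~\ref{Prop_Correctness-ACD} gives that $\P_{\mathcal{ACD}(\A)}$ is a deterministic parity automaton recognizing $\L(\A)$, and that its number of priorities is the maximal height $h$ of a tree of $\ACD(\A)$ (plus one in the ambiguous case). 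The goal is therefore to show that $\P$ uses at least that many priorities, with matching parities at the ends of the range.

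Next I would extract from $\ACD(\A)$ the alternating chain witnessing $h$: pick a tree $t_i$ of maximal height, a maximal branch with loops $l_1 \subsetneq \dots \subsetneq l_h$ alternating membership in $\F$, and fix a leaf state $q\in \mstate_i(\tt_1)$. The crucial translation step uses determinism of $\A$: since $q$ is accessible there is a finite input word $x$ reaching $q$, and for each $j$ there is a finite word $a_j$ with $\delta(q,a_j)=q$ whose run traverses exactly the edges of $l_j$. Then for any loop word $b$ at $q$ the infinite input $x(a_1\cdots a_k b)^\oo$ induces in $\A$ a run whose set of infinitely repeated edges is $l_k\cup\mapp(b)$; choosing $b$ with edges inside $l_k$ (or $b=\varepsilon$) makes this set equal to $l_k$, so that the word lies in $\L(\A)$ iff $l_k\in\F$. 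Since $\L(\P)=\L(\A)$ and $\P$ is deterministic, the unique run of $\P$ on this input has even minimal priority produced infinitely often iff $l_k\in\F$.

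With this dictionary in place I would run the induction of Proposition~\ref{Prop_OptimalPrioritiesZielonka}, now carried out on the runs of $\P$ over the words $x(a_1\cdots a_k b)^\oo$: by induction on $k$ these runs force the minimal priority produced infinitely often to drop by at least one at each level while keeping the parity dictated by $l_k\in\F$, so that $\P$ must realize $h$ distinct priorities (and $h+1$, of the two prescribed parities, in the ambiguous case, by repeating the argument along maximal branches of two trees of opposite parity). Comparing with the priority count of $\P_{\mathcal{ACD}(\A)}$ yields the claim.

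The step I expect to be the main obstacle is the faithful transfer of the nesting from $\A$ to $\P$. In Proposition~\ref{Prop_OptimalityACD_Priorities} the locally bijective morphism turned loops of $\A$ into genuine runs of $\P$ with controlled acceptance; here the run of $\P$ on $a_j$ need not return to the state it started from, so one cannot literally concatenate loops inside $\P$. The point to argue carefully is that the induction only ever speaks of the minimal priority produced \emph{infinitely often} on a \emph{fixed} infinite input word --- a quantity that is well defined for the deterministic automaton $\P$ regardless of whether its run is periodic with the same period as the input --- and that the alternating chain is realized at the level of input words (via determinism of $\A$) rather than at the level of $\P$'s state space. An alternative, shorter route is to invoke the results of~\cite{niwinskiwalukievicz1998Relating}: the longest alternating chain of accepting and rejecting loops of a deterministic automaton equals the parity index of its language, and this chain is exactly what the height of $\ACD(\A)$ records, so optimality of $\P_{\mathcal{ACD}(\A)}$ is immediate.
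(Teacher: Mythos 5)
Your proposal is correct and is essentially the paper's own argument: the paper dispatches this proposition in a single line, stating that it follows either from ``a similar proof'' to Proposition~\ref{Prop_OptimalityACD_Priorities} --- which is exactly what you carry out, replacing the locally bijective morphism by the dictionary you describe (determinism of $\A$ plus $\L(\P)=\L(\A)$ turns the maximal alternating chain of nested loops at an accessible state into input words on which the acceptance of $\P$'s unique run is forced, so the induction of Proposition~\ref{Prop_OptimalPrioritiesZielonka} goes through at the level of input words) --- or from the results of \cite{niwinskiwalukievicz1998Relating}, which is precisely the alternative route you mention. Both of your routes thus coincide with the two the paper itself intends.
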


	Finally, we state and prove the optimality of $\kl{\P_{\mathcal {ACD}(\A )}}$ for size.
	
		\begin{theorem}[Optimality of the number of states]\label{Th_OptimalityACDTransformation}
		Let $\T$ be a (possibly "labelled") "Muller transition system" such that all its states are "accessible" and let $\P_{\mathcal{ACD}(\T)}$ be its "ACD-transition system". If $\P$ is another "parity transition system" such that there is a "locally bijective morphism" $\pp:\P \rightarrow \T$, then
		$ |"\P_{\mathcal{ACD}(\T)}"|\leq |\P| $.
	\end{theorem}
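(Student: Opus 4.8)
The plan is to reduce the inequality to a per-state count and to generalise the two lemmas behind Theorem~\ref{Th_OptimalityZielonkaTree} from $X$-SCC to loops. Since $\pp_V\colon V_P\to V$ is a total map, its fibres $\pp_V^{-1}(q)$ partition $V_P$, so $|\P|=\sum_{q\in V}|\pp_V^{-1}(q)|$; and by the remark after Definition~\ref{Def_TransformationMullerAutomata} we have $|\P_{\mathcal{ACD}(\T)}|=\sum_{q\in V}|\mbranch(t_q)|$. As all states of $\T$ are "accessible" and $\pp$ is "locally surjective", $\pp$ is surjective (fact~\ref{Fact_LocSurjMorph_OntoAccessible}), so every fibre is non-empty. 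Hence it suffices to prove, for every $q\in V$,
\[ |\pp_V^{-1}(q)|\ \geq\ |\mbranch(t_q)|. \]

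The combinatorial tool is the notion of an $l$-SCC, the exact analogue of an $X$-SCC in which the role of ``letters in $X$'' is played by ``edges of the "loop" $l$'': an $l$-SCC is a "strongly connected" set $S$ of states of $\P$ that is closed under the unique (by "local bijectivity") lifts of the edges of $l$ and is traversed by them. Using local bijectivity to lift each edge of $l$ uniquely, one shows, exactly as in Lemma~\ref{Lemma_ExistenceSCC}, that inside any $l$-closed region of $\P$ reachable by $l$-edges there is an $l$-SCC, and that any $l$-SCC projects onto $\mstate(l)$. The disjointness statement is the key step: if $l_1,l_2$ are the labels of two distinct children of a node $\tt$ of $\ACD(\T)$ in a "proper tree" $t_i$ (so $l_1,l_2\subseteq\nu_i(\tt)$ are maximal "loops" with $l_j\in\F\Leftrightarrow\nu_i(\tt)\notin\F$), then any $l_1$-SCC $P_1$ and $l_2$-SCC $P_2$ are disjoint. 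Indeed, a common state $v$ would project to some $q'\in\mstate(l_1)\cap\mstate(l_2)$; lifting through $\pp$ loops on $q'$ realising $l_1$ and $l_2$ yields runs $\rho_1,\rho_2$ in $\P$ from $v$ to $v$ whose minimal priorities are both odd (assuming $\nu_i(\tt)\in\F$, so $l_1,l_2\notin\F$), whereas $l_1\cup l_2$ is again a loop based at $q'$ and, by maximality of the children, $l_1\cup l_2\in\F$; thus $\pp((\rho_1\rho_2)^\oo)$ is accepting, forcing an even minimal priority on $(\rho_1\rho_2)^\oo$, which is the minimum of the two odd minima, a contradiction (the case $\nu_i(\tt)\notin\F$ is symmetric). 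This mirrors Lemma~\ref{Lemma_Disjoint_X-SCC}, with $l_1\cup l_2$ playing the role of $A\cup B$.

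With these lemmas, I would prove the per-state bound by induction on the height of $\tt$ in $t_q$, establishing the stronger statement: for a node $\tt$ of $t_i$ with label $l=\nu_i(\tt)$, for any $l$-SCC $P_l$ of $\P$, and every $q\in\mstate_i(\tt)$,
\[ |\{v\in P_l \,:\, \pp_V(v)=q\}|\ \geq\ |\mbranch(\msubtree_{t_q}(\tt))|. \]
In the base case $\tt$ is a "leaf" of $t_q$, $\msubtree_{t_q}(\tt)$ has a single branch, and the bound holds since $P_l$ projects onto $\mstate(l)\ni q$. For the inductive step, let $\tt_1,\dots,\tt_m$ be the children of $\tt$ in $t_q$, with labels $l_1,\dots,l_m\subseteq l$; applying the existence lemma inside $P_l$ (legitimate since $P_l$ is $l$-closed and $l_j\subseteq l$) gives an $l_j$-SCC $P_{l_j}\subseteq P_l$ for each $j$. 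The $P_{l_j}$ are pairwise disjoint by the disjointness lemma, each contains at least $|\mbranch(\msubtree_{t_q}(\tt_j))|$ states over $q$ by the induction hypothesis, and $|\mbranch(\msubtree_{t_q}(\tt))|=\sum_j|\mbranch(\msubtree_{t_q}(\tt_j))|$; summing yields the bound. Applying this with $\tt=\varepsilon$ and $P_l$ an accessible $l_i$-SCC (which exists and projects onto all index-$i$ states) gives $|\pp_V^{-1}(q)|\geq|\mbranch(t_q)|$ for every $q$ of index $i\geq1$, while index-$0$ states have $t_q=\{\varepsilon\}$ and the bound is mere surjectivity. Summing over $q\in V$ gives $|\P_{\mathcal{ACD}(\T)}|\leq|\P|$. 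The main obstacle I anticipate is the rigorous set-up of $l$-SCCs and of the disjointness lemma: one must argue carefully that local bijectivity lets loops of $\T$ be lifted and concatenated in $\P$, so that the union $l_1\cup l_2$, of opposite $\F$-status to its parts, forces the contradicting parity exactly as in the single-state Zielonka argument.
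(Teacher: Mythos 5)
Your proof is correct, and it rests on the same machinery as the paper's: the notion of an $l$-SCC (your definition matches Definition~\ref{Def_l-SCC}), the existence of an $l'$-SCC inside an $l$-SCC whenever $l'\subseteq l$ (Lemma~\ref{Lemma_ExistanceSCC_SurjMorph}), and the disjointness of SCCs attached to distinct children of a node of the alternating cycle decomposition (Lemma~\ref{Lemma_Disjoint_SCC_BijMorphism}), followed by an induction down the trees. Where you genuinely depart from the paper is in the accounting. The paper counts globally: it introduces $\Psi_{\tau,i}=\sum_{q}\psi_{\tau,i,q}$ and needs the telescoping identities $\Psi_{\tau,i}=\sum_{\sigma\in\mathit{Children}(\tau)}\Psi_{\sigma,i}$ and $\Psi_{\sigma,i}=1$ for leaves; these fail in general (a state of $\mathit{States}_i(\tau)$ need not occur in any child of $\tau$, and a leaf of $t_i$ may carry several states), so the paper first normalises both $\mathcal{T}$ and $\mathcal{P}$ by adjoining two self-loops $e_{q,1},e_{q,2}$ to every state, extending the acceptance condition and the morphism accordingly, and must then verify that this surgery changes neither $|\mathcal{P}_{\mathcal{ACD}(\mathcal{T})}|$ nor $|\mathcal{P}|$. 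Your fibre-wise induction makes that preprocessing unnecessary: because your statement is per state $q$ and your base case is ``$\tau$ is a leaf of $t_q$'' rather than ``$\tau$ is a leaf of $t_i$'', the problematic states are exactly those where the induction bottoms out immediately, and the bound $\geq 1$ there is just the fact that an $l$-SCC projects onto all of $\mathit{States}(l)$. In fact, the per-fibre inequality you establish, $|\varphi_V^{-1}(q)|\geq |\mathit{Branch}(t_q)|$, is precisely the proposition the paper states \emph{after} the theorem (proved there by the same induction), so you obtain the stronger statement directly and the theorem as a one-line corollary; this is arguably the cleaner organisation, at the price of carrying the quantification over all $q\in\mathit{States}_i(\tau)$ through the induction.

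One step to tighten: in the disjointness argument you assert that lifting loops of $\mathcal{T}$ realising $l_1$ and $l_2$ from a common state $v$ yields cycles $\rho_1,\rho_2$ in $\mathcal{P}$ from $v$ back to $v$. A lifted run starting at $v$ need not return to $v$; as in the paper's Lemma~\ref{Lemma_Disjoint_SCC_BijMorphism}, you must close each lifted run back to $v$ by a path inside the corresponding $l_j$-SCC (available by strong connectedness), and observe that this closing path projects into $l_j$, so the projected cycle still visits exactly the edges of $l_j$ and the parity contradiction with $l_1\cup l_2$ goes through. You flagged this yourself as the delicate point; with it spelled out, the argument is complete.
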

	

	\paragraph*{Proof of Theorem~\ref{Th_OptimalityACDTransformation}}
	We follow the same steps as for proving Theorem~\ref{Th_OptimalityZielonkaTree}. We will suppose that all states of the "transition systems" considered are "accessible".
	

	\begin{definition}\label{Def_l-SCC}
		Let $\T_1$, $\T_2$ be "transition systems" such that there is a "morphism of transition systems" $\varphi: \T_1 \rightarrow \T_2$. Let $l\in \kl{\mathpzc{Loop}}(\T_2)$ be a "loop" in $\T_2$. An  \AP""$l$-SCC"" of $\T_1$ (with respect to $\pp$) is a non-empty "strongly connected subgraph" $(V_l,E_l)$ of the subgraph $(\pp_V^{-1}(\mstate(l)),\pp_E^{-1}(l) )$ such that 
		\begin{align}\label{Eq_l-SCC_property-star}
			 \nonumber& \text{for every } q_1\in V_l \text{ and every } e_2\in "\mout"(\pp(q_1))\cap l\\ 
			 \tag{$\star$} &\text{there is an edge } e_1\in \pp^{-1}(e_2)\cap "\mout"(q_1) \text{ such that } e_1\in E_l. 
		\end{align}
	\end{definition}
	
	That is, an $l$-SCC is a "strongly connected subgraph" of $\T_1$ in which all states and transitions correspond via $\pp$ to states and transitions appearing in the "loop" $l$. Moreover, given a "run" staying in $l$ in $\T_2$ we can simulate it in the $l$-SCC of $\T_1$ (property \eqref{Eq_l-SCC_property-star}).
%
	
	\begin{lemma}\label{Lemma_ExistanceSCC_SurjMorph}	
		Let $\T_1$
		 and $ \T_2$
		  be two "transition systems" such that there is a "locally surjective" "morphism" $\pp: \T_1 \rightarrow \T_2$. Let $l\in "\mathpzc{Loop}"(\T_2)$ and $C_l=(V_l,E_l)$ be a non-empty "$l$-SCC" in $\T_1$. Then, for every "loop" $l'\in "\mathpzc{Loop}"(\T_2)$ such that $l'\subseteq l$ there is a non-empty "$l'$-SCC" in $C_l$.
	\end{lemma}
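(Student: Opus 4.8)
The plan is to mimic the strategy used for Lemma~\ref{Lemma_ExistenceSCC}: first produce a subgraph of $C_l$ that projects into $l'$ and already satisfies the completeness property \eqref{Eq_l-SCC_property-star}, but that need not be strongly connected, and then extract from it a strongly connected piece in which completeness is preserved. A crucial observation is that, since $l'\subseteq l$, every $l'$-edge is an $l$-edge, so property \eqref{Eq_l-SCC_property-star} of $C_l$ can be applied verbatim to lift $l'$-edges.

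First I would check that $C_l$ projects onto the whole of $\mstate(l)$, so that the relevant vertex set is non-empty. Since $l$ is a "loop", the subgraph of $\T_2$ induced by $l$ is "strongly connected", hence from the image $\pp(q_1)$ of any $q_1\in V_l$ there is an $l$-path to any $q_2\in \mstate(l)$. Using property \eqref{Eq_l-SCC_property-star} of $C_l$ repeatedly, this path lifts to a "run" inside $C_l$ (each edge of $l$ leaving the current projection admits a lift in $E_l$, whose target again lies in $V_l$), witnessing $q_2\in \pp_V(V_l)$. Therefore $\pp_V(V_l)=\mstate(l)\supseteq \mstate(l')$, and the set $U:=V_l\cap \pp_V^{-1}(\mstate(l'))$ is non-empty.

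Next I would build the candidate graph. Let $F=\{\, e\in E_l \;:\; \msource(e)\in U \text{ and } \pp_E(e)\in l' \,\}$ and consider $G=(U,F)$. For every $q_1\in U$, the state $\pp(q_1)\in \mstate(l')$ has an outgoing $l'$-edge, so by property \eqref{Eq_l-SCC_property-star} of $C_l$ there is at least one lift in $F$; thus every vertex of $G$ has out-degree at least one. Moreover every edge of $F$ lands in $U$, because the target of an $l'$-edge is again the source of some $l'$-edge (as $l'$ is a "loop"), hence lies in $\mstate(l')$, while edges of $E_l$ stay inside $V_l$. So $(U,F)$ is a finite graph with no sinks.

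The key step is to pass to a strongly connected core without losing completeness. I would take a terminal "strongly connected component" $W$ of the condensation of $G$, that is, an SCC with no edge of $G$ leaving it, exactly as $Q_q$ is chosen in Lemma~\ref{Lemma_ExistenceSCC}. Because every vertex has out-degree at least one, $W$ carries at least one internal edge (a self-loop if $|W|=1$), so $(W,E_{l'})$ with $E_{l'}=\{\, e\in F : \msource(e)\in W \,\}$ is a non-empty "strongly connected subgraph" of $C_l$ whose vertices and edges project into $\mstate(l')$ and $l'$ respectively; and since $W$ is terminal, every $e\in E_{l'}$ has its target in $W$. Finally I would verify that it is an "$l'$-SCC": completeness \eqref{Eq_l-SCC_property-star} for $l'$ holds because for $q_1\in W$ and $e_2\in \mout(\pp(q_1))\cap l'$ the lift granted by property \eqref{Eq_l-SCC_property-star} of $C_l$ lies in $F$ with source $q_1\in W$, hence in $E_{l'}$. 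The only delicate point is this last preservation of completeness, which is precisely what the terminality of $W$ guarantees.
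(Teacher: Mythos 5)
Your proof is correct and follows essentially the same route as the paper's: both construct the subgraph of $C_l$ lying over $\mstate(l')$ and $l'$ (your $(U,F)$ coincides with the paper's $(V',E')=(V_l,E_l)\cap(\pp_V^{-1}(\mstate(l')),\pp_E^{-1}(l'))$, since every edge of $E_l$ projecting into $l'$ automatically has its source in $U$), both establish non-emptiness by lifting an $l$-path into $C_l$ via property \eqref{Eq_l-SCC_property-star}, and both then extract a strongly connected piece in which property \eqref{Eq_l-SCC_property-star} for $l'$ is preserved. The only difference is presentational: the paper performs this last extraction by induction on the number of vertices (repeatedly restricting to reachability sets), whereas you take a terminal SCC of the condensation directly — with your explicit handling of the singleton/self-loop case being, if anything, slightly more careful than the paper's.
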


	\begin{proof}
		Let $(V',E')=(V_l,E_l)\cap (\pp_V^{-1}(\mstate(l')),\pp_E^{-1}(l'))$. We first prove that $(V',E')$ is non-empty. Let $q_1\in V_l \subseteq \mstate(l)$. Let $\rr \in \mrun_{T_2,\pp(q)}$ be a finite run in $\T_1$ from $\pp(q_1)$, visiting only edges in $l$ and ending in $q_2\in \mstate(l')$. From the local surjectivity, we can obtain a run in $\pp^{-1}(\rr)$ that will stay in $(V',E')$ and that will end in a state in $\pp_V^{-1}(\mstate(l'))$. The subgraph $(V',E')$ clearly has property \eqref{Eq_l-SCC_property-star} (for $l'$).
		
		We prove by induction on the size that any non-empty subgraph $(V',E')$ verifying the property \eqref{Eq_l-SCC_property-star} (for $l'$) admits an $l'$-SCC. If $|V'|=1$, then $(V',E')$ forms by itself a "strongly connected graph". If $|V'|>1$ and $(V',E')$ is not strongly connected, then there are vertices $q,q'\in V'$ such that there is no path from $q$ to $q'$ following edges in $E'$. We let
		\[ V'_q=\{p\in V' \; : \; \text{there is a path from } q \text{ to } p \text{ in  } (V',E')\} \; ; \; E'_q=E'\cap "\mout"(V'_q)\cap "\mIn"(V'_q) .\]
		
		Since $q'\notin V'_q$, the size $|V'_q|$ is strictly smaller than $|V'|$. 
		Also, the subgraph $(V'_q,E'_q)$ is non-empty since $q\in V'_q$.
		The property \eqref{Eq_l-SCC_property-star} holds from the definition of $(V'_q,E'_q)$. We conclude by induction hypothesis.
		\end{proof}

	\begin{lemma}\label{Lemma_Disjoint_SCC_BijMorphism}
		Let $\T$ be a "Muller transition system" with acceptance condition $\F$ and let $\P$ be a "parity transition system" such that there is a "locally bijective morphism" $\pp: \P \rightarrow \T$. Let $t_i$ be a "proper tree" of $\ACD(\T)$ and $\tt,\ss_1,\ss_2\in t_i$ nodes in $t_i$ such that $\ss_1,\ss_2$ are different "children" of $\tt$, and let $l_1=\nu_i(\ss_1)$ and $l_2=\nu_i(\ss_2)$. If $C_1$ and $C_2$ are two "$l_1 $-SCC" and "$l_2$-SCC" in $\P$, respectively, then $C_1\cap C_2= \emptyset$. 
	\end{lemma}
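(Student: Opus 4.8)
The plan is to mimic the proof of Lemma~\ref{Lemma_Disjoint_X-SCC}, replacing finite words over the input alphabet by closed runs of $\P$ that project via $\pp$ onto closed runs of $\T$ covering the loops $l_1$ and $l_2$. Assume towards a contradiction that there is a vertex $\hat q$ shared by $C_1$ and $C_2$, and set $q=\pp_V(\hat q)$. Since $C_1$ and $C_2$ project into $\mstate(l_1)$ and $\mstate(l_2)$ respectively, we get $q\in\mstate(l_1)\cap\mstate(l_2)$, so both $l_1$ and $l_2$ are loops through $q$ and their union $l_1\cup l_2$ is again a "loop" of $\T$ (concatenate a closed $l_1$-run and a closed $l_2$-run at $q$).

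First I would isolate the combinatorial ingredient. By the construction of $\ACD(\T)$ (Definition~\ref{Def_ACD}), $l_1$ and $l_2$ are two \emph{distinct maximal} loops contained in $\nu_i(\tt)$ with $l_k\in\F\Leftrightarrow\nu_i(\tt)\notin\F$. If $l_1\cup l_2$ also satisfied $l_1\cup l_2\in\F\Leftrightarrow\nu_i(\tt)\notin\F$, then maximality of $l_1$ would force $l_1=l_1\cup l_2$, and symmetrically $l_2=l_1\cup l_2$, giving $l_1=l_2$, a contradiction. Hence $l_1\cup l_2\in\F\Leftrightarrow\nu_i(\tt)\in\F$. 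Exchanging the roles of even and odd priorities if necessary, I may assume $\nu_i(\tt)\in\F$, so that $l_1,l_2\notin\F$ while $l_1\cup l_2\in\F$.

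The core step is to lift loops of $\T$ into closed runs confined to the $l$-SCCs. For $k\in\{1,2\}$, using the remark following the definition of "loop" I pick a closed run $w_k\in\mrun_{\T,q}$ with $\mapp(w_k)=l_k$. Starting from $\hat q\in C_k$ and reading $w_k$ edge by edge, property~\eqref{Eq_l-SCC_property-star} provides at each step a lift of the current edge lying in the edge set of $C_k$; this yields a run of $\P$ inside $C_k$ whose $\pp$-image is $w_k$ and which ends at some vertex projecting to $q$. Appending, via "strong connectivity" of $C_k$, a run back to $\hat q$ inside $C_k$, I obtain a closed run $\rho_k\in\mrun_{\P,\hat q}$ staying in $C_k$ with $\mapp(\pp_E(\rho_k))=l_k$ (the appended part only uses edges projecting into $l_k$). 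Let $p_k$ be the least priority occurring along $\rho_k$. Prepending a finite access run to $\hat q$ (legitimate since all states are "accessible") and using that $\pp$ is a "morphism" preserving acceptance, the periodic run $\rho_k^\oo$ is accepting in $\P$ iff $\pp_E(\rho_k)^\oo$ is accepting in $\T$; since $\minf(\pp_E(\rho_k)^\oo)=l_k\notin\F$ this run is rejecting, so $p_k$, which is the minimal priority seen infinitely often by $\rho_k^\oo$, is odd.

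Finally I combine the two runs. As $\rho_1$ and $\rho_2$ are both closed at $\hat q$, the concatenation $\rho_1\rho_2$ is a closed run at $\hat q$ with $\mapp(\pp_E(\rho_1\rho_2))=l_1\cup l_2\in\F$. By the same access-plus-morphism argument, $(\rho_1\rho_2)^\oo$ is accepting, hence the least priority it visits infinitely often is even; but that priority equals $\min(p_1,p_2)$, which is odd, a contradiction. Therefore $C_1\cap C_2=\emptyset$. The hard part will be the lifting step: one must check carefully that property~\eqref{Eq_l-SCC_property-star} together with strong connectivity of the $l_k$-SCC genuinely allows reproducing inside $C_k$ an arbitrary closed $l_k$-run of $\T$ and then returning to $\hat q$; once the runs $\rho_1,\rho_2$ are in hand, the remaining argument is routine parity/Muller bookkeeping, entirely parallel to Lemma~\ref{Lemma_Disjoint_X-SCC} and using Fact~\ref{Fact_LocBijMorph_BijectionRuns}.
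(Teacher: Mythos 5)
Your proof is correct and follows essentially the same route as the paper's: assume a shared state, lift closed runs covering $l_1$ and $l_2$ into $C_1$ and $C_2$ via property ($\star$), close them up inside each SCC by strong connectivity, and derive a contradiction from the parity of the minimal priorities, using that $l_1\cup l_2$ has the opposite accepting status to $l_1$ and $l_2$. You even make explicit two details the paper leaves implicit, namely the maximality argument showing $l_1\cup l_2\in\F\Leftrightarrow\nu_i(\tt)\in\F$ and the prepended access run needed to invoke acceptance preservation of the morphism.
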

	
	\begin{proof}
		Suppose there is a state $q\in C_1\cap C_2$. Since $\pp_V(q)\in \mstate(l_1)\cap \mstate(l_2)$, and $l_1, l_2$ are "loops" there are finite "runs" $\rr_1,\rr_2 \in \mrun_{\T,\pp_V(q)}$ such that $\mocc(\rr_1)=l_1$ and $"\mathit{Occ}"(\rr_2)=l_2$. We can ``simulate'' these runs in $C_1$ and $C_2$ thanks to property \eqref{Eq_l-SCC_property-star}, producing runs $\pp^{-1}(\rr_1)$ and $\pp^{-1}(\rr_2)$ in $\mrun_{\P,q}$ and arriving to $q_1="\mlast"(\pp^{-1}(\rr_1))$ and $q_2=\mlast(\pp^{-1}(\rr_1))$. 
		Since $C_1, C_2$ are "$l_1,l_2$-SCC", there are finite runs $w_1\in \mrun_{\P,q_1}$, $w_2\in \mrun_{\T,q_2}$ such that $\mathit{Last}(w_1)=\mathit{Last}(w_2)=q$, so the runs $\pp^{-1}(\rr_1)w_1$ and $\pp^{-1}(\rr_2)w_2$ start and end in $q$. We remark that in $\T$ the runs  $\pp(\pp^{-1}(\rr_1)w_1)=\rr_1\pp_E(w_1)$ and $\pp(\pp^{-1}(\rr_2)w_2)=\rr_2\pp_E(w_2)$
		start and end in $\pp_V(q)$ and
		visit, respectively, all the edges in $l_1$ and $l_2$. 
		From the definition of $\ACD(\T)$ we have that $l_1\in \F \; \Leftrightarrow \; l_2 \in \F \; \Leftrightarrow \; l_1 \cup l_2 \notin \F$. Since $\pp$ preserves the "acceptance condition", the minimal priority produced by $\pp^{-1}(\rr_1)w_1$ has the same parity than that of $\pp^{-1}(\rr_2)w_2$, but concatenating both runs we must produce a minimal priority of the opposite parity, arriving to a contradiction.
	\end{proof}
	
	\begin{definition}
		 Let $\T$ be a "Muller transition system" and $"\P_{\mathcal{ACD}(\T)}"$	
		its "ACD-parity transition system". For each tree $t_i$ of $\ACD(\T)$, each node $\tt\in t_i$ and each state $q\in \mstate_i(\tt)$ we write:
		\[ \psi_{\tt,i,q}=|\mbranch(\msubtree_{t_q}(\tt))|=|\{ (q,i,\bb)\in "\P_{\mathcal{ACD}(\T)}" \; : \; \bb  \, \text{ passes through } \tt \}|.\] 
		\vspace{-3mm}
		\[\Psi_{\tt,i}=\sum\limits_{q\in \mstate_i(\tt)}\psi_{\tt,i,q}=|\{ (q,i,\bb)\in "\P_{\mathcal{ACD}(\T)}" \; : \;q\in V \text{ of "index" } i \text{ and } \bb  \, \text{ passes through } \tt \}| .\]
	\end{definition}

	\begin{remark*}
		If we consider the root of the trees in $\ACD(\T)$, then each $\Psi_{\varepsilon,i}$ is the number of states in $"\P_{\mathcal{ACD}(\T)}"$ associated to this tree, i.e., $\Psi_{\varepsilon,i}=|\{ (q,i,\bb)\in "\P_{\mathcal{ACD}(\T)}" \; : \; q\in V,\; \bb\in \mbranch("t_q")\}|$. Therefore
		\[ |"\P_{\mathcal{ACD}(\T)}"|=\sum\limits_{i=0}^{r}\Psi_{\varepsilon,i} .\]
	\end{remark*}
	
	\begin{proof}[\textbf{Proof of Theorem~\ref{Th_OptimalityACDTransformation}}]
		Let $\T=(V,E,\msource,\mtarget,I_0,\F)$ be a "Muller transition system", $"\P_{\mathcal{ACD}(\T)}"$ the "ACD-parity transition system" of $\T$ and $\P=(V',E',\msource',\mtarget',I_0',p':E':\rightarrow \NN)$ a parity transition system such that there is a "locally bijective morphism" $\pp: \P \rightarrow \T$.
		
		First of all, we construct two modified transition systems $\widetilde{\T}=(V,\widetilde{E},\mathit{So\hspace{-0.3mm}\widetilde{urc}\hspace{-0.3mm}e},\mathit{Ta\hspace{-0.3mm}\widetilde{rge}\hspace{-0.3mm}t},I_0,\widetilde{\F})$ and $\widetilde{\P}=(V',\widetilde{E'},\mathit{So\hspace{-0.3mm}\widetilde{urc}\hspace{-0.3mm}e}',\mathit{Ta\hspace{-0.3mm}\widetilde{rge}\hspace{-0.3mm}t}', I_0', \widetilde{p'}:\widetilde{E'}:\rightarrow \NN)$, such that
		
		\begin{enumerate}
			\item Each vertex of $V$ belongs to a "strongly connected component".
			\item All leaves $\tt\in t_i$ verify $|\mstate_i(\tt)|=1$, for every $t_i\in \ACD(\widetilde{\T})$.
			\item Nodes $\tt\in t_i$ verify $\mstate_i(\tau)=\bigcup_{\ss \in \mathit{Children}(\tau)}\mathit{States}_i(\ss)$, for every $t_i\in \ACD(\widetilde{\T})$.
			\item There is a "locally bijective morphism" $\widetilde{\pp}: \widetilde{\P} \rightarrow \widetilde{\T}$.
			\item $|\P_{\mathcal {ACD}(\widetilde{\T} )}|\leq |\widetilde{\P}| \; \Rightarrow \; |\P_{\mathcal {ACD}(\T )}|\leq |\P|$.
		\end{enumerate}

		We define the transition system $\widetilde{\T}$ by adding for each $q\in V$ two new edges, $e_{q,1}, e_{q,2}$ with $\mathit{So\hspace{-0.3mm}\widetilde{urc}\hspace{-0.3mm}e}(e_{q,j})=\mathit{Ta\hspace{-0.3mm}\widetilde{rge}\hspace{-0.3mm}t}(e_{q,j})=q$, for $j=1,2$. 
		The modified "acceptance condition" $\widetilde{\F}$ is given by: let $C\subseteq \widetilde{E}$
		\begin{itemize}
			\item If $C\cap E\neq \emptyset$, then $C\in \widetilde{\F} \; \Leftrightarrow \; C\cap E \in \F$ (the occurrence of edges $e_{q,j}$ does not change the "acceptance condition").
			\item If $C\cap E = \emptyset$, if there are edges of the form $e_{q,1}$ in $C$, for some $q\in V$, then $C\in \widetilde{\F}$. If all edges of $C$ are of the form $e_{q,2}$, $C\notin \F$.
		\end{itemize} 
		It is easy to verify that the "transition system" $\widetilde{\T}$ and $\ACD(\widetilde{\T})$ verify conditions 1,2 and 3.
		We perform equivalent operations in $\P$, obtaining $\widetilde{\P}$: 
		we add a pair of edges $e_{q,1}, e_{q,2}$ for each vertex in $\P$, and we assign them priorities $\widetilde{p}(e_{q,1})=\eta+\epsilon$ and $\widetilde{p}(e_{q,2})=\eta+\epsilon+1$, where $\eta$ is the maximum of the priorities in $\P$ and $\epsilon=0$ if $\eta$ is even, and $\epsilon=1$ if $\eta$ is odd.  We extend the "morphism" $\pp$ to $\widetilde{\pp}: \widetilde{\P} \rightarrow \widetilde{\T} $ conserving the "local bijectivity" by setting $\widetilde{\pp}_E(e_{q,j})=e_{\pp(q),j}$ for $j=1,2$. Finally, it is not difficult to verify that the underlying graphs of $\P_{\mathcal {ACD}(\widetilde{\T} )}$ and $\widetilde{\P}_{\mathcal {ACD}(\T )}$ are equal (the only differences are the priorities associated to the edges $e_{q,j}$), so in particular $|\P_{\mathcal {ACD}(\widetilde{\T} )}|=|\widetilde{\P}_{\mathcal {ACD}(\T )}|=|\P_{\mathcal {ACD}(\T )}|$. Consequently,  $|\P_{\mathcal {ACD}(\widetilde{\T} )}|\leq |\widetilde{\P}|$ implies $|\P_{\mathcal {ACD}(\T)}|\leq |\widetilde{\P}|=|\P|$.

		Therefore, it suffices to prove the theorem for the modified systems $\widetilde{\T}$ and $\widetilde{\P}$. From now on, we take $\T$ verifying the conditions 1, 2 and 3 above. In particular, all trees are "proper trees" in $\ACD(\T) $. It also holds that for each $q\in V$ and $\tt\in t_i$ that is not a leaf, $\psi_{\tt,i,q}=\sum\limits_{\ss\in \mathit{Children}(\tt)}\psi_{\ss,i,q}$. Therefore, for each $\tt\in t_i$ that is not a leaf $$\Psi_{\tt,i}=\sum\limits_{\ss\in \mathit{Children}(\tt)}\Psi_{\ss,i},$$ and for each leaf $\ss\in t_i$ we have $\Psi_{\ss,i}=1$.
		
		 Vertices of $V'$ are partitioned in the equivalence classes of the preimages by $\pp$ of the roots of the trees $\{t_1,\dots,t_r\}$ of $\ACD(\T)$:
		\[ V'= \bigcup\limits_{i=1}^r\pp_V^{-1}( \mstate_i(\varepsilon)) \quad \text{ and } \quad \pp_V^{-1}( \mstate_i(\varepsilon)) \cap \pp_V^{-1}( \mstate_j(\varepsilon))=\emptyset \text{ for } i\neq j .\]

		\begin{claim*}\label{Lemma_C_tau_bigger_Psi}
			For each $i=1,\dots,r$ and each $\tt\in t_i$, if $C_\tt$ is a non-empty "$\nu_i(\tt)$-SCC", then
			$|C_\tt|\geq \Psi_{\tt,i}$.
		\end{claim*}
		
		
		 Let us suppose this claim holds. In particular $(\pp_V^{-1}( \mstate_i(\varepsilon),\pp_E^{-1}( \nu_i(\varepsilon))$ verifies the property \eqref{Eq_l-SCC_property-star} from Definition~\ref{Def_l-SCC}, so from the proof of Lemma~\ref{Lemma_ExistanceSCC_SurjMorph} we deduce that it contains a $\nu_i(\varepsilon)$-SCC and therefore $|\pp_V^{-1}( \mstate_i(\varepsilon))| \geq \Psi_{\varepsilon,i}$, so \[|\P|=\sum\limits_{i=1}^r |\pp_V^{-1}( \mstate_i(\varepsilon))|\geq \sum\limits_{i=1}^r \Psi_{\varepsilon,i}=|"\P_{\mathcal{ACD}(\T)}"|,\] concluding the proof.

	\begin{claimproof}[Proof of the claim]\renewcommand{\qedsymbol}{} 
		
		Let $C_\tt$ be a "$\nu_i(\tt)$-SCC". Let us prove $|C_\tt|\geq \Psi_{\tt,i}$ by induction on the "height of the node" $\tt$. If $\tt$ is a leaf (in particular if its height is $1$), $\Psi_{\tt,i}=1$ and the claim is clear.
		 If $\tt$ of height $h>1$ is not a leaf, then it has children $\ss_1,\dots, \ss_k$, all of them of height $h-1$. Thanks to Lemmas~\ref{Lemma_ExistanceSCC_SurjMorph} and~\ref{Lemma_Disjoint_SCC_BijMorphism}, for $j=1,\dots,k$, there exist disjoint "$\nu_i(\ss_j)$-SCC" included in $C_\tt$, named $C_1,\dots,C_k$, so by induction hypothesis
		\[ |C_\tt| \geq \sum\limits_{j=1}^k |C_j|  \geq \sum\limits_{j=1}^k \Psi_{\ss_j,i}= \Psi_{\tt,i}. \qedhere\]
	
	\end{claimproof}
	\end{proof}

		\begin{remark*}
			From the hypothesis of Theorem~\ref{Th_OptimalityACDTransformation} we cannot deduce that there is a "morphism" from $\P$ to $"\P_{\mathcal{ACD}(\T)}"$ or vice-versa. To produce a counter-example it is enough to remember the ``non-determinism'' in the construction of $\P_{\mathcal{ACD}(\T )}$. Two different orderings in the nodes of the trees of $\ACD(\T)$ will produce two incomparable, but minimal in size parity transition systems that admit a "locally bijective morphism" to $\T$.		
		\end{remark*}
	However, we can prove the following result:
	\begin{proposition}
		If $\pp_1: "\P_{\mathcal{ACD}(\T)}" \rightarrow \T$ is the "locally bijective morphism" described in the proof of Proposition~\ref{Prop_Correctness-ACD}, then for every state $q$ in $\T$ of "index" $i$:
		\[ |\pp_1^{-1}(q)|=\psi_{\varepsilon,i,q}\leq |\pp^{-1}(q)| \;, \; \text{ for every "locally bijective morphism" } \pp: \P \rightarrow \T .\]
	\end{proposition}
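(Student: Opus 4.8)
The equality $|\pp_1^{-1}(q)|=\psi_{\varepsilon,i,q}$ holds by construction: the states of $\P_{\mathcal{ACD}(\T)}$ sent to $q$ by the projection $\pp_1$ are exactly the triples $(q,i,\bb)$ with $\bb\in\mbranch(t_q)$, and $\msubtree_{t_q}(\varepsilon)=t_q$, so there are $\psi_{\varepsilon,i,q}$ of them. The content of the statement is thus the inequality $\psi_{\varepsilon,i,q}\leq|\pp^{-1}(q)|$, which I would obtain as a fibre-wise refinement of the Claim inside the proof of Theorem~\ref{Th_OptimalityACDTransformation}. First I would isolate a simple covering fact: \emph{any non-empty $l$-SCC $C$ of $\P$ satisfies $\pp_V(C)=\mstate(l)$}. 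Indeed, taking any $v\in C$ with $\pp_V(v)=q_0$ and any target state $q\in\mstate(l)$, the loop $l$ is strongly connected, so there is a path in $l$ from $q_0$ to $q$; using property~\eqref{Eq_l-SCC_property-star} repeatedly this path lifts to a run inside $C$ (each lifted edge lies in the edge set of $C$, hence its target lies in $C$), ending at a vertex over $q$.

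The heart of the proof is the following claim, proved by induction on the height of the node: \emph{for every $i\in\{1,\dots,r\}$, every $\tt\in t_i$, every $q\in\mstate_i(\tt)$ and every non-empty $\nu_i(\tt)$-SCC $C_\tt$ of $\P$, one has $|C_\tt\cap\pp_V^{-1}(q)|\geq\psi_{\tt,i,q}$.} If $\tt$ is a leaf of $t_q$ (in particular if $\tt$ is a leaf of $t_i$), then $\psi_{\tt,i,q}=1$, and the covering fact gives a vertex of $C_\tt$ over $q$, so $|C_\tt\cap\pp_V^{-1}(q)|\geq 1$. Otherwise $\tt$ has children $\ss_1,\dots,\ss_k$ in $t_i$; since $\nu_i(\ss_j)\subseteq\nu_i(\tt)$, Lemma~\ref{Lemma_ExistanceSCC_SurjMorph} provides a non-empty $\nu_i(\ss_j)$-SCC $C_j\subseteq C_\tt$ for each $j$, and Lemma~\ref{Lemma_Disjoint_SCC_BijMorphism} makes the $C_j$ pairwise disjoint. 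Recalling that the children of $\tt$ in $t_q$ are exactly the $\ss_j$ with $q\in\mstate_i(\ss_j)$, so that $\psi_{\tt,i,q}=\sum_{j:\,q\in\mstate_i(\ss_j)}\psi_{\ss_j,i,q}$, the induction hypothesis and disjointness yield
\[ |C_\tt\cap\pp_V^{-1}(q)|\ \geq\ \sum_{j:\,q\in\mstate_i(\ss_j)}|C_j\cap\pp_V^{-1}(q)|\ \geq\ \sum_{j:\,q\in\mstate_i(\ss_j)}\psi_{\ss_j,i,q}\ =\ \psi_{\tt,i,q}, \]
where the children not carrying $q$ contribute nothing since $C_j\subseteq\pp_V^{-1}(\mstate_i(\ss_j))$.

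Finally I would apply the claim at the root of $t_i$. As already noted in the proof of Theorem~\ref{Th_OptimalityACDTransformation}, the subgraph $(\pp_V^{-1}(\mstate_i(\varepsilon)),\pp_E^{-1}(\nu_i(\varepsilon)))$ satisfies property~\eqref{Eq_l-SCC_property-star}, so the inductive construction in the proof of Lemma~\ref{Lemma_ExistanceSCC_SurjMorph} extracts from it a non-empty $\nu_i(\varepsilon)$-SCC $C_\varepsilon$. Since $q$ has index $i$ we have $q\in\mstate_i(\varepsilon)$, and $C_\varepsilon\cap\pp_V^{-1}(q)\subseteq\pp^{-1}(q)$, so the claim gives $\psi_{\varepsilon,i,q}\leq|C_\varepsilon\cap\pp_V^{-1}(q)|\leq|\pp^{-1}(q)|$, as desired. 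I expect the only genuinely delicate points to be the covering fact (which relies on the lifts provided by~\eqref{Eq_l-SCC_property-star} staying inside the edge set of the SCC) and the exact matching between the branch decomposition of $t_q$ and the children of $\tt$ carrying $q$; these are precisely what separates this fibre-wise statement from the coarser count of the Claim, where only the totals $\Psi_{\tt,i}=\sum_{q}\psi_{\tt,i,q}$ were needed, and where the reduction to the modified systems $\widetilde{\T},\widetilde{\P}$ was invoked. Here that reduction can be dispensed with, since the covering fact handles directly the nodes that are leaves of $t_q$ but internal in $t_i$.
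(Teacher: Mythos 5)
Your proposal is correct and follows essentially the same route as the paper's proof: the paper's one-line argument consists precisely of your ``covering fact'' (any $\nu_i(\tt)$-SCC of $\P$ meets $\pp^{-1}(q)$ whenever $q\in\mstate_i(\tt)$) together with the fibre-wise induction establishing $\psi_{\tt,i,q}\leq |C_\tt\cap\pp^{-1}(q)|$ ``as in the proof of the claim'' of theorem~\ref{Th_OptimalityACDTransformation}. Your write-up simply fills in the details the paper leaves implicit, including the correct observation that the fibre-wise decomposition $\psi_{\tt,i,q}=\sum_{j:\,q\in\mstate_i(\ss_j)}\psi_{\ss_j,i,q}$ holds without invoking the reduction to the modified systems $\widetilde{\T},\widetilde{\P}$.
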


	\begin{proof}
		It is enough to remark that if $q \in \mstate_i(\tt)$, then any "$\nu_i(\tt)$-SCC" $C_\tt$ of $\P$ will contain some state in $\pp^{-1}(q)$. We prove by induction as in the proof of the claim that $\psi_{\tt,i,q} \leq |C_\tt \cap \pp^{-1}(q)|$.
	\end{proof}

\newpage

\section{Applications}
\label{sec:applications}

\subsection{Determinisation of Büchi automata}\label{Section_DeterminisationBuchi}
In many applications, such as the synthesis of reactive systems for $LTL$-formulas, we need to have "deterministic" automata. For this reason, the determinisation of automata is usually a crucial step. Since McNaughton showed in~\cite{mcNaughton1966Testing} that Büchi automata can be transformed into Muller deterministic automata recognising the same language, much effort has been put into finding an efficient way of performing this transformation. The first efficient solution was proposed by Safra in~\cite{safra1988onthecomplexity}, producing a deterministic automaton using a "Rabin condition". Due to the many advantages of "parity conditions" (simplicity, easy complementation of automata, they admit memoryless strategies for games, closeness under union and intersection...), determinisation constructions towards parity automata have been proposed too. In~\cite{piterman2006fromNDBuchi}, Piterman provides a construction producing a parity automaton that in addition improves the state-complexity of Safra's construction. In~\cite{schewe2009tighter}, Schewe breaks down Piterman's construction in two steps: the first one from a non deterministic Büchi automaton $\B$ towards a "Rabin automaton" ($"\R_\B"$) and the second one gives Piterman's parity automaton ($"\P_\B"$). 

In this section we prove that there is a "locally bijective morphism" from $"\P_\B"$ to $"\R_\B"$, and therefore we would obtain a smaller parity automaton applying the "ACD-transformation" in the second step. We provide an example (Example~\ref{Example_Buchi_betterACD}) in which the "ACD-transformation" provides a strictly better parity automaton.

\paragraph*{From non-deterministic Büchi to deterministic Rabin automata}

In~\cite{schewe2009tighter}, Schewe presents a construction of a deterministic "Rabin automaton" $""\R_\B""$ from a non-deterministic "Büchi automaton" $\B$. The set of states of the automaton $\R_\B$ is formed of what he calls ""history trees"". The number of history trees for a Büchi automaton of size $n$ is given by the function $\mathit{hist}(n)$, that is shown to be in $o((1.65n)^n)$ in~\cite{schewe2009tighter}. This construction is presented starting from a state-labelled Büchi automaton. A construction starting from a transition-labelled Büchi automaton can be found in~\cite{varghese2014PhD}. In~\cite{colcombetz2009tight}, Colcombet and Zdanowski proved the worst-case optimality of the construction.

\begin{proposition}[\cite{schewe2009tighter}]
	Given a non-deterministic "Büchi automaton" $\B$ with $n$ states, there is an effective construction of a deterministic "Rabin automaton" $\R_\B$ with $\mathit{hist}(n)$ states and using $2^{n-1}$ Rabin pairs that recognises the language $"\L(\B)"$.
\end{proposition}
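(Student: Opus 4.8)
The statement is a construction of Schewe \cite{schewe2009tighter}, so the plan is to recall how $\R_\B$ is built and to indicate how its correctness is established. Write $Q$ for the set of $n$ states of $\B$. The states of $\R_\B$ will be the "history trees" over $Q$: ordered "labelled trees" whose nodes are labelled by non-empty subsets of $Q$, with the label of each node strictly containing the disjoint union of the labels of its "children", and with the "root" labelled by the set of reachable states. The intended meaning is that the root runs the subset construction on $\B$, while each proper subtree keeps track of the set of partial runs that have crossed an accepting transition since the node was created; a node thus records progress towards the infinitely-many-final-states requirement of the "Büchi condition".

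First I would define the deterministic, "$\Sigma $-complete" transition function as the successive application of the update steps on a history tree upon reading a letter $a$: propagate every label through the powerset transition of $\B$; spawn a new youngest "child" of each node, labelled by the states reached via an accepting transition; perform a horizontal pruning deleting from each label the states already present in an older "sibling"; perform a vertical pruning which, whenever a node's label equals the union of the labels of its children (a \emph{breakpoint}), declares the node "stable" and erases its "descendants"; and finally delete nodes with empty label and reindex. Since each step preserves the defining constraints on history trees, this is well defined, and the number of states of $\R_\B$ is exactly the number $\mathit{hist}(n)$ of history trees.

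Next I would fix the "Rabin condition". For each name $v$ that a node may carry I introduce a "Rabin pair" $(E_v,F_v)$, putting in $F_v$ the transitions that delete $v$ (or move its name) and in $E_v$ the transitions at which $v$ is declared "stable". A run of $\R_\B$ is accepting exactly when some name $v$ is, from some point on, never in $F_v$ yet lies in $E_v$ infinitely often; as the number of admissible names is bounded by $2^{n-1}$, this uses $2^{n-1}$ Rabin pairs. It then remains to prove $\L(\R_\B)=\L(\B)$. For $\L(\R_\B)\subseteq\L(\B)$, a satisfied pair $(E_v,F_v)$ gives a node that eventually stabilises on a fixed subtree and reaches breakpoints infinitely often; since between consecutive breakpoints every partial run tracked under $v$ visits an accepting transition, König's lemma applied to the finitely branching tree of those partial runs yields an infinite "run" of $\B$ through infinitely many accepting transitions, so $\B$ accepts. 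For the reverse inclusion, from an accepting run of $\B$ one follows the node that eventually tracks it and argues, using that there are finitely many names and bounded depth, that this node is eventually never deleted and is forced to reach a breakpoint infinitely often.

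The hard part will be this completeness direction, together with the verification that the five update steps genuinely maintain the invariants that make the Rabin pairs mean what they should: one must rule out that the run witnessing acceptance in $\B$ is perpetually passed between siblings or pushed ever deeper without triggering a breakpoint at a stabilising node. This is precisely the combinatorial core of Schewe's analysis in \cite{schewe2009tighter}; combined with the counting bounds $\mathit{hist}(n)\in o((1.65n)^n)$ on the states and $2^{n-1}$ on the names, it yields the stated result.
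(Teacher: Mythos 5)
This proposition is not proved in the paper at all: it is imported verbatim from Schewe's work, cited as \cite{schewe2009tighter}, and the surrounding text only recalls that the states of $\R_\B$ are "history trees" and that their number is $\mathit{hist}(n)$. So there is no in-paper argument to compare yours against; the only fair comparison is with the cited construction itself, and on that score your reconstruction is faithful. You correctly recall the definition of history trees (disjoint sibling labels whose union is a proper subset of the parent's label), the five-step deterministic update (label propagation, spawning, horizontal pruning, breakpoint/vertical pruning, cleanup), the Rabin pairs indexed by node positions with $F_v$ the deleting transitions and $E_v$ the breakpoint transitions, and the right shape of both correctness directions (K\"onig's lemma for soundness, the ``deepest eventually-stable host node'' argument for completeness). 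Two points are asserted rather than argued, and you flag them honestly: the completeness direction and the two counting facts ($\mathit{hist}(n)$ states, $2^{n-1}$ pairs — the latter following from the fact that positions realizable in ordered trees with at most $n$ nodes correspond to compositions of integers at most $n-1$). Since the statement in the paper is itself a citation, deferring exactly those parts to \cite{schewe2009tighter} puts your proposal on the same footing as the paper's treatment; the only mild imprecision is that ``the root is labelled by the set of reachable states'' is a run-invariant of the automaton, not part of the static definition of history tree that the count $\mathit{hist}(n)$ refers to.
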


\begin{proposition}[\cite{colcombetz2009tight}]
	For every $n\in \NN$ there exists a non-deterministic Büchi automaton $B_n$ of size $n$ such that every deterministic Rabin automaton recognising $\L(\B_n)$ has at least $\mathit{hist}(n)$ states.
\end{proposition}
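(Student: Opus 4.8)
The plan is to prove this lower bound by the \emph{full automaton technique} combined with a Myhill--Nerode--style argument that is insensitive to the acceptance type. First I would fix, for each $n$, an alphabet $\Sigma_n$ whose letters encode all possible transition profiles on the state set $\{1,\dots,n\}$: a letter is a binary relation on states together with a marking of which of its transitions are accepting. I take $B_n$ to be the \emph{full} non-deterministic "Büchi automaton" over $\Sigma_n$, with $n$ states, whose language $\L = \L(B_n)$ consists of the words describing a generic transition structure that admits a run visiting accepting transitions infinitely often. This is precisely the family for which Schewe's $\R_\B$ has $\mathit{hist}(n)$ reachable states, so it is the natural candidate to match the upper bound.

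The second ingredient is a lower bound valid for \emph{any} "deterministic" $\oo$-automaton, whatever its acceptance condition. Consider the right-congruence $u \sim_\L v$, meaning $uw \in \L \Leftrightarrow vw \in \L$ for all $w \in \Sigma_n^\oo$. If a deterministic automaton reached the same state after two prefixes $u$ and $v$, then by determinism the runs on $uw$ and on $vw$ would share the same infinite suffix for every $w$, hence the same set $\minf$ and the same acceptance; so $u \sim_\L v$. Contrapositively, pairwise $\sim_\L$-inequivalent prefixes force pairwise distinct states. Hence every deterministic "Rabin" automaton recognising $\L$ has at least as many states as the index of $\sim_\L$, and it suffices to show this index is at least $\mathit{hist}(n)$.

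The heart of the argument, and the step I expect to be the main obstacle, is to produce $\mathit{hist}(n)$ pairwise $\sim_\L$-inequivalent prefixes. I would realise each \emph{history tree} $t$ by a concrete finite word $w_t$ that drives $\R_\B$ into state $t$, and then, for every pair of distinct history trees $t_1 \neq t_2$, construct an $\oo$-continuation $z$ accepted after $w_{t_1}$ but rejected after $w_{t_2}$. The continuation must be read off from the combinatorial difference between the trees: a history tree records, in its nested structure, which subsets of states survive together, plus the freshness/ordering data governing when acceptance is witnessed, and two distinct trees differ either in this subset structure or in the ordering data. In the first case a continuation keeping alive exactly the states of the distinguishing node separates them; in the second, a continuation forcing an accepting run precisely along the branch where the trees diverge does the job. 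Establishing that such a $z$ always exists, using the full generality of $\Sigma_n$ to realise any prescribed pattern of surviving states and accepting transitions, is the delicate part.

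Combining the universal lower bound with this distinguishability lemma yields $\mathrm{index}(\sim_\L) \ge \mathit{hist}(n)$, so every deterministic Rabin automaton recognising $\L(B_n)$ has at least $\mathit{hist}(n)$ states, matching Schewe's upper bound and giving worst-case optimality. I would add the remark that, since the argument never inspects the shape of the "acceptance condition", the very same prefixes prove the bound for deterministic "parity" and even "Muller" automata; the stronger parity-specific lower bounds of Schewe--Varghese therefore genuinely require a different, acceptance-aware argument rather than a mere count of right-congruence classes.
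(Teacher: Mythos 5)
Your proposal has a genuine gap, and it sits exactly where you placed your main effort: the distinguishability lemma you hope to prove is false, because the right-congruence lower bound you rely on caps out far below $\mathit{hist}(n)$. For \emph{any} non-deterministic Büchi automaton $\B$ with state set $Q$, $|Q|=n$, and any finite word $u$, whether $uw\in\L(\B)$ depends only on the set $R(u)\subseteq Q$ of states reachable on $u$: indeed $uw\in\L(\B)$ if and only if some $q\in R(u)$ admits a run on $w$ from $q$ with infinitely many accepting transitions, since the finitely many accepting transitions consumed on the prefix are irrelevant to the Büchi condition. Hence $R(u)=R(v)$ implies $u\sim_{\L}v$, and the index of $\sim_{\L}$ is at most $2^{n}$. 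This applies in particular to your full automaton $B_n$. But $\mathit{hist}(n)$ grows superexponentially (roughly like $n^{\Theta(n)}$), so for large $n$ there simply do not exist $\mathit{hist}(n)$ pairwise $\sim_{\L}$-inequivalent prefixes. Concretely, two distinct history trees typically have the same root label, i.e.\ the same reachable set $R(u)$ --- the structure below the root records the \emph{history} of accepting transitions, not reachability --- and any two such trees are right-congruent: no continuation $z$ of the kind you want can exist. Your closing remark is in fact the clearest symptom of the problem: an argument that never inspects the acceptance condition can never prove a bound exceeding the index of $\sim_{\L}$, hence never more than $2^{n}$ here.

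What the cited proof of Colcombet and Zdanowski actually does is essentially infinitary and acceptance-aware. Assuming a deterministic Rabin automaton recognising $\L(B_n)$ with fewer than $\mathit{hist}(n)$ states, two prefixes corresponding to distinct history trees must reach the same state; the contradiction is then obtained not from a single distinguishing continuation, but by building one infinite word that alternates, infinitely often, between finite segments tailored to each of the two merged trees. The run of the deterministic automaton returns to the merged state infinitely often, and the analysis of which Rabin pairs can be satisfied on the cycles created by this interleaving --- combined with the combinatorics of how history trees evolve under the full automaton's alphabet --- yields a word that is misclassified. This use of infinite repetition through a merged state, and of the structure of the Rabin pairs on the resulting loops, is precisely what lets the bound break through the $2^{n}$ ceiling that constrains any Myhill--Nerode-style argument; it is also why the bound is specific to the acceptance type rather than acceptance-insensitive as you suggest.
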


\paragraph*{From non-deterministic Büchi to deterministic parity automata}

In order to build a deterministic "parity automaton" $""\P_\B""$ that recognises the language of a given "Büchi automaton" $\B$, Schewe transforms the automaton $"\R_\B"$ into a parity one using what he calls a \emph{later introduction record} (LIR). The LIR construction can be seen as adding an ordering (satisfying some restrictions) to the nodes of the "history trees". States of $\P_\B$ are therefore pairs of history trees with a LIR. In this way we obtain a similar parity automaton that with the Piterman's determinisation procedure~\cite{piterman2006fromNDBuchi}.
The worst-case optimality of this construction was proved in~\cite{Varghese14,varghese2014PhD}, generalising the methods of~\cite{colcombetz2009tight}.

\begin{proposition}[\cite{schewe2009tighter}]
	Given a non-deterministic "Büchi automaton" $\B$ with $n$ states, there is an effective construction of a deterministic "parity automaton" $"\P_\B"$ with $O(n!(n-1)!)$ states and using $2n$ priorities that recognises the language $"\L(\B)"$.
\end{proposition}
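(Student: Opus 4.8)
The plan is to obtain $\P_\B$ by refining the "Rabin automaton" $\R_\B$ of the previous proposition, whose states are the "history trees" of $\B$, following Schewe's \emph{later introduction record} (LIR) construction. Recall that the "Rabin condition" on $\R_\B$ is governed by the dynamics of the nodes of the history trees along a run: a run is accepting exactly when some node position is reset (its subtree destroyed) only finitely often while being accepting and stable infinitely often. The difficulty in turning this into a "parity condition" is that there are up to $2^{n-1}$ such positions competing independently. The key idea is that these positions can be linearly ordered \emph{by the age of their most recent creation}, and that this order suffices to serialise the Rabin pairs into a single priority scale.

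First I would make the LIR refinement precise: to each history tree I attach an injective stamping of its nodes by an initial segment of $\{1,\dots,n\}$ recording the order in which the nodes were introduced, subject to the restrictions that make the stamping evolve deterministically (a freshly spawned node receives the largest unused stamp, and when nodes are removed the remaining stamps are compacted while preserving their relative order). A state of $\P_\B$ is then a pair consisting of a history tree and such a stamping, and its transition function is inherited from $\R_\B$ together with this deterministic bookkeeping. The priority of a transition is read off the stamps: it is determined by the smallest stamp of a node that is reset on the transition (producing an odd priority) or that is accepting and stable (producing an even priority), normalised to lie in $[1,2n]$.

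The correctness statement $\L(\P_\B)=\L(\B)$ reduces, via the previous proposition, to showing that the least priority produced infinitely often along a run is even exactly when one of the Rabin pairs of $\R_\B$ is satisfied. This is precisely where the restriction on the ordering is used: because stamps order nodes by creation age, an infinitely stable node never decreases its stamp, so the least even stamp occurring infinitely often pins down the genuinely stable accepting position, and no spuriously smaller odd stamp coming from a reset can recur below it infinitely often. For the bounds, a history tree on the $n$ states of $\B$ has at most $n$ nodes and hence at most $n$ distinct stamps, giving $2n$ priorities; and counting the pairs (history tree, stamping)---by distributing the states across an ordered forest and then ordering the resulting nodes---yields the asymptotic $O(n!\,(n-1)!)$. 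I expect the main obstacle to be exactly this correctness argument: one must check that the single deterministic LIR update rule simultaneously keeps the stamps well-defined and bounded and makes the parity of the least recurrent stamp coincide with Rabin acceptance, i.e.\ that collapsing the $2^{n-1}$ independent Rabin pairs onto one linearly ordered priority scale loses no information about the limiting behaviour of runs.
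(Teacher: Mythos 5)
This proposition is quoted from Schewe's paper and is not proved in the present paper, whose surrounding text only sketches the construction: states of $\P_\B$ are pairs consisting of a history tree of $\R_\B$ and a later introduction record (an ordering of the tree's nodes by introduction time), with priorities determined by the oldest node undergoing a reset (odd) or an accepting, stable event (even) — your proposal reconstructs exactly this construction, including the $O(n!(n-1)!)$ state count and the $2n$ priorities, so it follows essentially the same approach. The only slip is directional: the stamp of a persisting node never \emph{increases} (it can only decrease, by compaction when older-stamped nodes are deleted); what your correctness argument actually needs, and what holds, is that such a stamp is eventually constant, and with that correction your sketch matches Schewe's argument.
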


\begin{proposition}[\cite{Varghese14,varghese2014PhD}]
	For every $n\in \NN$ there exists a non-deterministic Büchi automaton $B_n$ of size $n$ such that $"\P_\B"$ has less than $1.5$ times as many states as a minimal deterministic parity automaton recognising $\L(\B_n)$.
\end{proposition}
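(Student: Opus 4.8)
The statement is a worst-case lower bound, so the plan is to fix one explicit family $\{B_n\}_{n\in\NN}$ and to determine the size of a \emph{minimal} deterministic parity automaton for $L=\L(B_n)$ \emph{exactly}, so that it can be compared with the $O(n!(n-1)!)$ states of Schewe's automaton $\P_\B$. The point is that the machinery of this section already pins down that minimal size. I would take for $B_n$ the ``full'' Büchi automata of Colcombet and Zdanowski witnessing worst-case optimality of $\R_\B$; their defining property is that all $\mathit{hist}(n)$ history trees are reachable and pairwise inequivalent as residuals, so that the minimal deterministic transition structure $M_n$ of $L$, namely the quotient by the right congruence $u\equiv v\iff u^{-1}L=v^{-1}L$, coincides as a transition system graph with the underlying graph of $\R_\B$. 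In particular $M_n$ carries the Rabin condition of $\R_\B$, hence is a Muller transition system recognising $L$, and its alternating cycle decomposition is well defined.

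First I would show that $|\P_{\mathcal{ACD}(M_n)}|$ is exactly the minimal number of states. For the upper bound, $\P_{\mathcal{ACD}(M_n)}$ admits a locally bijective morphism onto $M_n$ (proposition \ref{Prop_Correctness-ACD}), hence recognises $L$ by proposition \ref{Prop3.6_MorphImpliesLanguages}, so it is a legitimate deterministic parity automaton for $L$. For the lower bound, let $\mathcal{D}$ be any deterministic parity automaton for $L$; mapping the state of $\mathcal{D}$ reached by a prefix $u$ to the residual $u^{-1}L$ is well defined, because in a deterministic automaton the reached state determines the residual, and it yields a morphism $\mathcal{D}\to M_n$ preserving sources, targets, input labels and, since $M_n$ also recognises $L$, acceptance. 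By fact \ref{Fact_MorphismDetAutomata-LocBijective} a morphism between deterministic automata is automatically locally bijective, and all states of $M_n$ are accessible, so theorem \ref{Th_OptimalityACDTransformation} gives $|\P_{\mathcal{ACD}(M_n)}|\le|\mathcal{D}|$. Combining the two bounds, a minimal deterministic parity automaton for $L$ has exactly $|\P_{\mathcal{ACD}(M_n)}|$ states.

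It remains to compare $|\P_\B|$ with $|\P_{\mathcal{ACD}(M_n)}|$. Since the Rabin condition of $\R_\B$ and the residual Muller condition on $M_n$ recognise the same language on the same deterministic structure, they are equivalent over it (proposition \ref{Prop_EquivaleceDetAutomata_ImpliesEqCondition}), so $\P_{\mathcal{ACD}(M_n)}=\P_{\mathcal{ACD}(\R_\B)}$. Because $\P_\B$ admits a locally bijective morphism onto $\R_\B$ (the second step of Schewe's construction), optimality already gives $|\P_{\mathcal{ACD}(\R_\B)}|\le|\P_\B|$; the target inequality $|\P_\B|<1.5\,|\P_{\mathcal{ACD}(\R_\B)}|$ therefore amounts to showing that, on this particular family, the \textit{later introduction record} inflates the automaton by a factor strictly below $\tfrac32$ relative to the branch count $\sum_{q}|\mathit{Branch}(t_q)|$ of the alternating cycle decomposition. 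This is the step where Varghese's explicit enumeration of ordered history trees against the branches of the $\mathcal{ACD}$-trees enters, and where the constant $1.5$ is produced.

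The main obstacle is exactly this last combinatorial estimate: the reductions above are structural and follow cleanly from the optimality theorem, but controlling the ratio of the two counts requires understanding, for the full family, the precise shape of the trees $t_q$ of $\mathcal{ACD}(\R_\B)$ and matching their branches to LIR orderings of history trees. I expect the bulk of the work to be a careful double count, showing that every ordered history tree contributes to some branch of some $t_q$ while the over-counting introduced by the LIR stays bounded by a factor below $\tfrac32$; the subsidiary points to verify are the fullness of $B_n$ (so that $M_n=\R_\B$) and the fact that this residual structure genuinely supports the Muller condition, both of which are features of the Colcombet--Zdanowski family rather than general phenomena.
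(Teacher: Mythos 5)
The paper does not prove this proposition at all; it is imported from Schewe--Varghese \cite{Varghese14,varghese2014PhD}, so there is no internal proof to compare against, and your proposal must stand on its own. It does not: its foundation fails. Everything rests on the claim that, for the full automata $\B_n$, the $\mathit{hist}(n)$ history trees are pairwise inequivalent as residuals, so that the right-congruence quotient $M_n$ coincides with the underlying graph of $\R_\B$ and inherits its Rabin condition, allowing you to map an arbitrary deterministic parity automaton $\mathcal{D}$ onto the Muller transition system $M_n$ and invoke theorem \ref{Th_OptimalityACDTransformation}. Both halves of that claim are false. For any non-deterministic Büchi automaton, $u^{-1}L=\bigcup_{q\in \mathrm{Reach}(u)}L_q$, because the finite prefix of a run contributes nothing to the limit condition; hence residuals of $\L(\B_n)$ are determined by reachable state sets, $M_n$ has at most $2^n$ states, and it cannot coincide with $\R_\B$, whose $\mathit{hist}(n)$ states form a super-exponentially larger set. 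Worse, for $n\ge 2$ the automaton $M_n$ carries \emph{no} Muller condition whatsoever recognizing $\L(\B_n)$. Take the two letters of $\B_2$ given by $\alpha$ with edges $1\to 1$ (accepting), $1\to 2$ and $2\to 1$ (non-accepting), and $\beta$ with edges $1\to 2$, $2\to 1$, $2\to 2$, none accepting. Both letters map the reachable set $\{1,2\}$ to itself, so the words $(\alpha\beta)^\oo$ and $(\alpha\beta\beta)^\oo$ eventually loop on the same residual state of $M_2$ traversing exactly the same two edges infinitely often; yet $(\alpha\beta\beta)^\oo\in\L(\B_2)$ (a run takes the accepting loop at $1$, is forced to $2$ by the first $\beta$, returns to $1$ on the second $\beta$, and repeats), while $(\alpha\beta)^\oo\notin\L(\B_2)$ (once a run takes the accepting loop, the next $\beta$ forces it to $2$ and the next $\alpha$ forces it back to $1$ without acceptance, locking it in that phase forever, so every run sees at most one accepting edge). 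Since a Muller condition can only depend on the set of edges visited infinitely often, no choice of $\F$ on $M_2$ works. Consequently $M_n$ is not a Muller transition system recognizing $\L(\B_n)$, your morphism $\mathcal{D}\to M_n$ cannot preserve acceptance, and theorem \ref{Th_OptimalityACDTransformation} is simply not applicable; nor is there any canonical morphism from $\mathcal{D}$ to $\R_\B$ itself, since a state of $\mathcal{D}$ with a given residual determines no particular history tree.

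This is precisely the conceptual boundary the paper is careful not to cross: theorem \ref{Th_OptimalityACDTransformation} gives minimality \emph{among parity systems admitting a locally bijective morphism onto a fixed Muller system}, not minimality among all deterministic parity automata for the language, and that distinction is exactly why the present proposition has to be cited from \cite{Varghese14,varghese2014PhD} rather than derived from the paper's own optimality results. A sanity check exposes the problem: by the remark closing section \ref{Section_DeterminisationBuchi}, $\P_{\B_n}$ and $\P_{\mathcal{ACD}(\R_{B_n})}$ coincide on the full family, so your two claims combined would make $\P_\B$ \emph{exactly} minimal, a statement strictly stronger than what Schewe and Varghese prove and one that your residual-based argument cannot support, since minimal automata may merge states in ways invisible to any morphism onto $\R_\B$. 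Finally, even setting all of this aside, your last paragraph openly defers the actual quantitative content of the proposition --- the lower bound that every deterministic parity automaton for $\L(\B_n)$ has more than $\tfrac{2}{3}\,|\P_\B|$ states --- to ``Varghese's explicit enumeration''; so what remains after the structural error is removed is a restatement of the cited result, not a proof of it.
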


\paragraph*{A locally bijective morphism from $\P_\B$ to $\R_\B$ }

\begin{proposition}\label{Prop_LocBijMorphFromP_B-to-R_B}
	Given a "Büchi automaton" $\B$ and its determinisations to Rabin and parity automata $"\R_\B"$ and $"\P_\B"$, there is a "locally bijective morphism" $\pp: \P_\B \rightarrow \R_\B$.
\end{proposition}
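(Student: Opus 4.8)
The plan is to exhibit $\pp$ as the natural projection that forgets the ordering carried by the LIR. Recall that the states of $\R_\B$ are exactly the history trees of $\B$, whereas the states of $\P_\B$ are pairs consisting of a history tree together with a later introduction record (an ordering of its nodes subject to the LIR constraints). I would therefore set $\pp_V(t,\ell) = t$, sending a state $(t,\ell)$ of $\P_\B$, with $t$ a history tree and $\ell$ its LIR, to the history tree $t$, which is precisely a state of $\R_\B$. Since both automata are determinisations of the same $\B$ read as morphisms of the underlying labelled transition systems, the goal is to check that this projection is a morphism and then to invoke determinism.

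The first thing to verify is that $\pp_V$ is compatible with the transition structure, so that it induces an edge map $\pp_E$. The crucial structural observation is that in Schewe's construction the history-tree component of a state of $\P_\B$ evolves under each input letter exactly as the corresponding state of $\R_\B$ does: the LIR is a decoration updated as a function of the transition, but it does not itself alter how the underlying history tree is transformed. Consequently, for every letter $a$ and every state $(t,\ell)$ of $\P_\B$, the $a$-transition of $\P_\B$ reaches some $(t',\ell')$ with $t \xrightarrow{a} t'$ the $a$-transition of $\R_\B$. This gives at once that $\pp_V$ sends the initial state of $\P_\B$ to that of $\R_\B$, and that defining $\pp_E$ to map each edge of $\P_\B$ to the edge of $\R_\B$ with the same input label and projected endpoints yields a well-defined map preserving sources, targets and input labels. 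Hence $\pp=(\pp_V,\pp_E)$ is a morphism of labelled transition systems, once acceptance is settled.

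For the acceptance clause I would avoid unpacking the precise relation between the parity condition of $\P_\B$ and the Rabin condition of $\R_\B$, and instead argue at the level of runs via determinism, exactly in the spirit of Proposition \ref{Prop_EquivaleceDetAutomata_ImpliesEqCondition} transported across $\pp$. Both $\P_\B$ and $\R_\B$ are deterministic and complete, and both recognize $\L(\B)$. Take any run $\rho \in \mrun_{\P_\B}$; it is the unique run of $\P_\B$ over some word $w$, and by the structural compatibility above $\pp_E(\rho)$ reads the same $w$ and is therefore the unique run of $\R_\B$ over $w$. Then $\rho$ is (parity-)accepting iff $w \in \L(\P_\B) = \L(\B)$, while $\pp_E(\rho)$ is (Rabin-)accepting iff $w \in \L(\R_\B) = \L(\B)$; the two conditions coincide, which is exactly the acceptance clause required of a morphism of transition systems.

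Finally, local bijectivity is immediate: since $\P_\B$ and $\R_\B$ are both deterministic, Fact \ref{Fact_MorphismDetAutomata-LocBijective} ensures that any morphism between them is automatically locally bijective and fully determined by $\pp_V$. The only genuinely delicate point, and the one I would write out with care, is the structural compatibility of the second paragraph: it requires returning to the definition of the LIR and checking that attaching and updating the ordering leaves the history-tree dynamics untouched. Everything else is formal, and by Fact \ref{Fact_LocBijMorph_BijectionRuns} we moreover obtain a bijection between the runs of $\P_\B$ and $\R_\B$ preserving acceptance, so that applying the ACD-transformation to $\R_\B$ can only shrink Schewe's parity automaton.
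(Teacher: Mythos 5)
Your proof is correct and follows essentially the same route as the paper: project a state $(T,\chi)$ of $\P_\B$ onto its history-tree component $T$, and invoke the fact that any morphism between deterministic automata is automatically locally bijective and determined by $\pp_V$ (Fact \ref{Fact_MorphismDetAutomata-LocBijective}). The paper compresses the verification that this projection is a morphism into ``it is easy to verify''; your determinism-plus-language-equality argument for the acceptance clause (in the spirit of Proposition \ref{Prop_EquivaleceDetAutomata_ImpliesEqCondition}) is precisely a clean way to discharge that step.
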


\begin{proof}
	Observing the construction of $\R_\B$ and $\P_\B$ in~\cite{schewe2009tighter}, we see that the states of $\P_\B$ are of the form $(T,\chi)$ with $T$ an state of $\R_B$ (a "history tree"), and $\chi: T \rightarrow \{1,\dots,|B|\}$ a LIR (that can be seen as an ordering of the nodes of $T$).
	
	It is easy to verify that the mapping $\pp_V((T,\chi))=T$ defines a morphism $\pp: \R_\B \rightarrow \P_\B$ (from Fact~\ref{Fact_MorphismDetAutomata-LocBijective} there is only one possible definition of $\pp_E$). Since the automata are deterministic, $\pp$ is a "locally bijective morphism".
	
\end{proof}

\begin{theorem}\label{Th_ACD-Better than Schewe}
	Let $\B$ be a "Büchi automaton" and $"\R_\B"$, $"\P_\B"$ the deterministic Rabin and parity automata obtained by applying the Piterman-Schewe construction to $\B$. Then, the parity automaton $\P_{\mathcal{ACD}(\R_B)}$ verifies
	$ |\P_{\mathcal{ACD}(\R_B)}| \leq |\P_\B| $
	and  $"\P_{\mathcal{ACD}(\R_B)}"$ uses a smaller number of priorities than $"\P_\B"$.
\end{theorem}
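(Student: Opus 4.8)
The plan is to reduce the statement directly to the two optimality results of Section~\ref{Section_OptimalityACD}, using the morphism of Proposition~\ref{Prop_LocBijMorphFromP_B-to-R_B} as the only nontrivial ingredient. The starting observation is that $\R_\B$ is a deterministic \emph{Rabin} automaton, and since a Rabin condition is a particular case of a Muller condition, $\R_\B$ can be regarded as a Muller transition system in the sense of Section~\ref{sec:acd}. Hence the ACD-transformation applies to it, and by the corollary following Proposition~\ref{Prop_Correctness-ACD} the automaton $\P_{\mathcal{ACD}(\R_B)}$ is a deterministic parity automaton recognising $\L(\R_\B)=\L(\B)$, which moreover admits a locally bijective morphism onto $\R_\B$. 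Symmetrically, $\P_\B$ is a deterministic parity automaton, so it qualifies as a candidate parity transition system $\P$ in the hypotheses of Theorem~\ref{Th_OptimalityACDTransformation} and Proposition~\ref{Prop_OptimalityACD_Priorities}.

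Before invoking those results I would dispose of their accessibility hypothesis: both require every state of the target to be accessible. I therefore replace $\R_\B$ by its accessible part, which changes neither its language nor its alternating cycle decomposition, since every loop of $\R_\B$ lies inside the accessible part; likewise I replace $\P_\B$ by its accessible part. Proposition~\ref{Prop_LocBijMorphFromP_B-to-R_B} supplies a locally bijective morphism $\pp\colon \P_\B \to \R_\B$ (which, being a morphism of deterministic automata, is automatically locally bijective by Fact~\ref{Fact_MorphismDetAutomata-LocBijective}). Restricting $\pp$ to the accessible parts preserves local bijectivity: by Fact~\ref{Fact_LocSurjMorph_OntoAccessible} a locally surjective morphism is already onto the accessible part of its target, so the accessible part of $\P_\B$ maps onto the now fully accessible $\R_\B$, and the local injectivity and surjectivity conditions restrict without change.

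With these hypotheses in place the conclusion is immediate. Applying Theorem~\ref{Th_OptimalityACDTransformation} with $\T=\R_\B$ and $\P=\P_\B$ gives $|\P_{\mathcal{ACD}(\R_B)}|\leq |\P_\B|$, because $\P_\B$ is a parity transition system admitting a locally bijective morphism to $\R_\B$ while $\P_{\mathcal{ACD}(\R_B)}$ is precisely the ACD-transformation of $\R_\B$. Applying Proposition~\ref{Prop_OptimalityACD_Priorities} to the same pair shows that $\P_\B$ uses at least as many priorities as $\P_{\mathcal{ACD}(\R_B)}$, which is exactly the bound on the number of priorities asserted in the theorem.

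The argument is short because the genuine work lives in Section~\ref{Section_OptimalityACD}; the only real care needed is in the accessibility reduction and in matching the object named in the statement. I expect the subtlest point to be confirming that restricting to accessible parts is harmless on the ACD side, namely that the accessible part of $\P_{\mathcal{ACD}(\R_B)}$ coincides with the ACD-transformation of the accessible part of $\R_\B$. This holds because $\mathcal{ACD}(\R_\B)$ depends only on the loops of $\R_\B$, all of which are contained in its accessible part, and because the states of the ACD-transformation are indexed by branches of the subtrees $t_q$ attached to states $q$, so that only accessible states $q$ give rise to reachable states of $\P_{\mathcal{ACD}(\R_B)}$. Once this identification is made, the size and priority bounds produced by the optimality results bound exactly the quantities in the statement.
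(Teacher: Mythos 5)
Your proof is correct and takes essentially the same route as the paper, whose entire proof is the one-line observation that the theorem follows directly from Proposition~\ref{Prop_LocBijMorphFromP_B-to-R_B}, Proposition~\ref{Prop_OptimalityACD_Priorities} and Theorem~\ref{Th_OptimalityACDTransformation}. Your extra care with the accessibility hypothesis addresses a detail the paper silently glosses over, and your way of discharging it (restricting both automata to their accessible parts and checking the morphism and the ACD-transformation behave well under this restriction) is sound.
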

\begin{proof}
	It is a direct consequence of Propositions~\ref{Prop_LocBijMorphFromP_B-to-R_B},~\ref{Prop_OptimalityACD_Priorities} and Theorem~\ref{Th_OptimalityACDTransformation}.
\end{proof}

\begin{remark*}
	Furthermore, after  Proposition~\ref{Prop_OptimalityACD_Priorities-Languages}, $"\P_{\mathcal{ACD}(\R_B)}"$ uses the optimal number of priorities to recognise $"\L(\B)"$, and we directly obtain this information from the "alternating cycle decomposition" of $\R_\B$, $\ACD(\R_\B)$.
	
\end{remark*}

In Example~\ref{Example_Buchi_betterACD} we show a case in which $|\P_{\mathcal{ACD}(\R_B)}| < |\P_\B|$ and for which the gain in the number of priorities is clear.

\begin{remark*}
	In~\cite{colcombetz2009tight} and~\cite{varghese2014PhD}, the lower bounds for the determinisation of "Büchi automata" to Rabin and parity automata where shown using the family of  \AP""full Büchi automata"", $\{\B_n\}_{n\in \NN}$, $|\B_n|=n$. The automaton $\B_n$ can simulate any other Büchi automaton of the same size. For these automata, the constructions $\P_{\B_n}$ and $\P_{\mathcal{ACD}(\R_{B_n})}$ coincide.
\end{remark*}
\newcommand*{\boldcheckmark}{%
	\textpdfrender{
		TextRenderingMode=FillStroke,
		LineWidth=.5pt, 
	}{\checkmark}%
}
\begin{example}\label{Example_Buchi_betterACD}
	We present a non-deterministic Büchi automaton $\B$ such that the "ACD-parity automaton" of $"\R_\B"$ has strictly less states and uses strictly less priorities than $"\P_\B"$.
	
	In Figure~\ref{Fig_Buchi_betterACD} we show the automaton $\B$ over the alphabet $\Sigma=\{a,b,c\}$. Accepting transitions for the "Büchi condition" are represented with a black dot on them. An accessible "strongly connected component" $\R_\B'$ of the determinisation to a "Rabin automaton" $"\R_\B"$ is shown in Figure~\ref{Fig_Buchi_betterACD-Rabin}. It has 2 states that are "history trees" (as defined in~\cite{schewe2009tighter}). There is a "Rabin pair" $(E_\tt,F_\tt)$ for each node appearing in some "history tree" (four in total), and these are represented by an array with four positions. We assign to each transition and each position $\tt$ in the array the symbols $\textcolor{Green2}{\boldcheckmark}$, $\textcolor{Red2}{\mathbf{X}}$ ,  or $\textcolor{Orange2}{\bullet}$ depending on whether this transition belongs to $E_\tt$, $F_\tt$ or neither of them, respectively (we can always suppose $E_\tt\cap F_\tt = \emptyset$).

	In Figure~\ref{Fig_Buchi_betterACD-ACD} there is the "alternating cycle decomposition" corresponding to $\R_\B'$. We observe that the tree of $\ACD(\R_\B')$ has a single branch of height $3$. 
	This is, the Rabin condition over $\R_\B'$ is already a "$[1,3]$-parity condition" and $\P_{\mathcal{ACD}(\R_B')}=\R_\B'$. In particular it has $2$ states, and uses priorities in $[1,3]$.
	
	On the other hand, in Figure~\ref{Fig_Buchi_betterACD-parity} we show the automaton $\P_\B'$, that has $3$ states and uses priorities in $[3,7]$. The whole automata $\R_\B$ and $\P_\B$ are too big to be pictured in these pages, but the three states shown in Figure~\ref{Fig_Buchi_betterACD-parity} are indeed accessible from the initial state of $\P_\B$.
	\begin{figure}[h]
		\centering 
		\scalebox{0.93}{
		\begin{tikzpicture}[square/.style={regular polygon,regular polygon sides=4}, align=center,node distance=2cm,inner sep=2pt]
		
		\node at (0,0) [state, initial] (0) {$q_0$};
		\node at (2,1) [state] (1) {$q_1$};
		\node at (2,-1) [state] (2) {$q_2$};
		\node at (4,0) [state] (3) {$q_3$};

		\path[->] 
		
		(0)  edge [out=115,in=65,loop] 	node[above] { a,b,c }   (0)
		(0)  edge []  node[above] {a \\[-3mm]} node[scale=2]{ \textbullet }   (1)
		(0)  edge []  node[above] {a,c}   (2)
		
		(1)  edge [out=105,in=55,loop, distance=10mm] 	node[left, pos=0.2] { a,b,c  }    (1)
		(1)  edge [thick]  node[above] { b \\[-3mm] } node[scale=2]{ \textbullet }  (3)
		
		(2)  edge []  node[right,pos=0.4] {c}   (1)
		(2)  edge [out=165,in=200,loop] node[left] { a }	node[scale=2] {\textbullet }   (2)
		(2)  edge [thick,out=20,in=-20,loop] node[right] { \hspace{1mm}b }	   (2)
		
		(3)  edge [thick,out=115,in=65,loop] node[above] { a,b,c }   (3);
		
		\end{tikzpicture}
	}
		\caption{Büchi automaton $\B$ over the alphabet $\Sigma=\{a,b,c\}$.}
		\label{Fig_Buchi_betterACD}
	\end{figure}

	\begin{figure}[ht]
				\centering 
		\scalebox{0.93}{

		\begin{tikzpicture}[square/.style={regular polygon,regular polygon sides=4}, align=center,node distance=2cm,inner sep=3pt]
		
		\node (STATE 1) at (0,0) [draw,circle]{
			\begin{tikzpicture}[scale=0.7, every node/.style={scale=0.9}]
			
			\node at (0,2) [draw,ellipse,text width=1.4cm,minimum height=0.6cm] (aR) {$\hspace{-2mm}q_0,q_1,q_2,q_3$};
		
			\node at (-0.8,1) [draw, ellipse,text width=0.55cm,minimum height=0.6cm] (a0) {$\hspace{-2mm}q_1,q_3$ };
			\node at (0.5,1) [draw, ellipse,minimum width=0.7cm,minimum height=0.6cm] (a1) {$q_2$};
		
			\node at (-0.8,0) [ellipse, draw,minimum width=0.7cm,minimum height=0.6cm] (a00) {$q_3$};

			\draw   
			(aR) edge (a0)
			(aR) edge (a1)
			(a0) edge (a00);
			\end{tikzpicture}
	};
		\node (STATE 2) at (6,0) [draw,circle]{
			\begin{tikzpicture}[scale=0.7, every node/.style={scale=0.9}]
		\node at (5,2) [draw,ellipse,text width=1.4cm,minimum height=0.6cm] (bR) {$\hspace{-2mm}q_0,q_1,q_2,q_3$};
		
		\node at (5,1) [draw, ellipse,text width=0.55cm,minimum height=0.6cm] (b0) {$\hspace{-2mm}q_1,q_3$ };

		\node at (5,0) [ellipse, draw,minimum width=0.7cm,minimum height=0.6cm] (b00) {$q_3$};
		
		
		\draw
		(bR) edge (b0)
		(b0) edge (b00);
			\end{tikzpicture}
	};

	\node at (0.5,-0.8) [scale=1.5, color=red] (state1) {1};
	\node at (6.8,-0.8) [scale=1.5, color=red] (state2) {2};

		\path[->] 
		
		(STATE 1) edge [out=145,in=170,loop, distance=1.5cm] node[above,pos=0.2]{a}  node[left,pos=0.3,scale=0.8]{$\left(
			\begin{array}{c}
			\textcolor{Orange2}{\bullet} \\
			\textcolor{Orange2}{\bullet} \\
			\textcolor{Green2}{\boldcheckmark} \\
			\textcolor{Orange2}{\bullet} 
			\end{array} \right) \hspace{2mm}$} (STATE 1)
		
		(STATE 1) edge [out=190,in=215,loop, distance=1.5cm] node[below,pos=0.8]{b} node[left,pos=0.7,scale=0.8]{$\left(
			\begin{array}{c}
			\textcolor{Orange2}{\bullet} \\
			\textcolor{Orange2}{\bullet} \\
			\textcolor{Orange2}{\bullet} \\
			\textcolor{Orange2}{\bullet} 
			\end{array} \right) \hspace{2mm}$} (STATE 1)
		
		(STATE 1) edge [out=30,in=150] node[above,pos=0.4]{c} node[above,pos=0.6,scale=0.8]{$\left(
			\begin{array}{c}
			\textcolor{Orange2}{\bullet} \\
			\textcolor{Orange2}{\bullet} \\
			\textcolor{Red2}{\mathbf{X}} \\
			\textcolor{Orange2}{\bullet} 
			\end{array} \right) $} (STATE 2)
		
		(STATE 2) edge [out=35,in=10,loop, distance=1.5cm] node[above,pos=0.2]{b} node[right,pos=0.3,scale=0.8]{$\hspace{2mm} \left(
			\begin{array}{c}
			\textcolor{Orange2}{\bullet} \\
			\textcolor{Orange2}{\bullet} \\
			\textcolor{Red2}{\mathbf{X}} \\
			\textcolor{Orange2}{\bullet}
			\end{array} \right) \hspace{2mm}$} (STATE 2)
		
		(STATE 2) edge [out=-10,in=-35,loop , distance=1.5cm] node[below,pos=0.8]{c} node[right,pos=0.7,scale=0.8]{$\hspace{2mm} \left(
			\begin{array}{c}
			\textcolor{Orange2}{\bullet} \\
			\textcolor{Orange2}{\bullet} \\
			\textcolor{Red2}{\mathbf{X}} \\
			\textcolor{Orange2}{\bullet}
			\end{array} \right) \hspace{2mm}$} (STATE 2)
		
		(STATE 2) edge [out=210,in=-30, scale=0.6] node[above,pos=0.6]{a} node[above,pos=0.4,scale=0.8]{$\left(
			\begin{array}{c}
			\textcolor{Orange2}{\bullet} \\
			\textcolor{Orange2}{\bullet} \\
			\textcolor{Orange2}{\bullet} \\
			\textcolor{Orange2}{\bullet} 
			\end{array} \right) $} (STATE 1);

		\end{tikzpicture}
	}
		\caption{ Component $\R_\B'$ of the automaton $\R_\B$. States are "history trees" and transitions are labelled with the input letter and an array representing the Rabin condition. There is one component in the arrays for each node appearing in some history tree, taking the order  \resizebox{17mm}{!}{$\left(
				\begin{array}{c}
				\text{Root} \; \varepsilon \\
				\text{Node }0\\
				\text{Node }1 \\
				\text{Node }00 
				\end{array} \right)$}.\\			
		The components of the arrays can host a green checkmark $\textcolor{Green2}{\boldcheckmark}$ (if the node is in $E_\tt$), a red cross $\textcolor{Red2}{\mathbf{X}}$ (if the node is in $F_\tt$) or an  orange dot $\textcolor{Orange2}{\bullet}$ (if the node is not in $E_\tt$ or $F_\tt$). } 
		\label{Fig_Buchi_betterACD-Rabin}
	\end{figure}
	
	\begin{figure}[ht!]
		\centering
		
		\begin{tikzpicture}[square/.style={regular polygon,regular polygon sides=4}, align=center,node distance=2cm,inner sep=3pt]
		
		\node at (0,3) [draw, rectangle, text height=0.3cm, text width=3cm] (R) { $a_1,b_1,c_1,a_2,b_2,c_2$};
		
		\node at (0,2) [draw, ellipse,text height=0.2cm, text width=0.8cm] (0) {$a_1,b_1$};
		
		\node at (0,1) [draw, rectangle, text height=0.3cm, text width=0.5cm] (1) { $b_1$};

		\draw   
		(R) edge (0)
		(0) edge (1);
		
		\node at (2,3) {$1$};
		\node at (1.2,2) {$2$};
		\node at (1.2,1) {$3$};
		
		\end{tikzpicture}
		\caption{The "alternating cycle decomposition" of $\R_\B'$, $\ACD(\R_\B')$. The labels $a_1,b_1,c_1$ correspond to transitions leaving the state 1 on Figure~\ref{Fig_Buchi_betterACD-Rabin} and $a_2,b_2,c_2$ those leaving state 2. We observe that this allows to substitute the "Rabin condition" on $\R_\B'$ for an equivalent $[1,3]$-parity condition on the same underlying automaton.}
		\label{Fig_Buchi_betterACD-ACD}
		
	\end{figure}
	\begin{figure}[ht]
		\hspace{-10mm} 
		\begin{tikzpicture}[square/.style={regular polygon,regular polygon sides=4}, align=center,node distance=2cm,inner sep=3pt]
		
		\node (STATE 1) at (0,0) [draw,circle]{
			\begin{tikzpicture}[scale=0.7, every node/.style={scale=0.82}]
			
			\node at (0,2.2) [draw,ellipse,text width=1.4cm,text height=0.12cm] (aR) {$\hspace{-2mm}q_0,q_1,q_2,q_3$\\[-0.5mm] \textcolor{Violet2}{$\quad \,$ 0}\vspace{-2mm}};
			
			\node at (-0.8,1) [draw, ellipse,text width=0.55cm,minimum height=0.6cm] (a0) {$\hspace{-2mm}q_1,q_3$\\[-0.5mm]\textcolor{Violet2}{$\;\;$1}\vspace{-2mm} };
			\node at (0.5,1) [draw, ellipse,minimum width=0.7cm,minimum height=0.6cm] (a1) {$q_2$\\[-0.5mm]\textcolor{Violet2}{2}\vspace{-2mm}};
			
			\node at (-0.8,-0.2) [ellipse, draw,minimum width=0.7cm,minimum height=0.6cm] (a00) {$q_3$\\[-0.5mm]\textcolor{Violet2}{3}\vspace{-2mm}};
		\draw   
		(aR) edge (a0)
		(aR) edge (a1)
		(a0) edge (a00);
		\end{tikzpicture}
	};
		\node (STATE 2) at (4.6,0) [draw,circle]{
			\begin{tikzpicture}[scale=0.7, every node/.style={scale=0.82}]
			\node at (4,2.2) [draw,ellipse,text width=1.4cm,minimum height=0.6cm] (bR) {$\hspace{-2mm}q_0,q_1,q_2,q_3$\\[-0.5mm]\textcolor{Violet2}{$\quad \,$ 0}\vspace{-2mm}};
			
			\node at (4,1) [draw, ellipse,text width=0.55cm,minimum height=0.6cm] (b0) {$\hspace{-2mm}q_1,q_3$ \\[-0.5mm]\textcolor{Violet2}{$\;$ 1}\vspace{-2mm}};

			\node at (4,-0.2) [ellipse, draw,minimum width=0.7cm,minimum height=0.6cm] (b00) {$q_3$\\[-0.5mm]\textcolor{Violet2}{2}\vspace{-2mm}};
			
	\draw   
	(bR) edge (b0)
	(b0) edge (b00);
	\end{tikzpicture}
};
		
		\node (STATE 3) at (9.4,0) [draw,circle]{
		\begin{tikzpicture}[scale=0.7, every node/.style={scale=0.82}]
			\node at (8.5,2.2) [draw,ellipse,text width=1.4cm,minimum height=0.6cm] (cR) {$\hspace{-2mm}q_0,q_1,q_2,q_3$\\[-0.5mm]\textcolor{Violet2}{$\quad \,$ 0}\vspace{-2mm}};
			
			\node at (7.7,1) [draw, ellipse,text width=0.55cm,minimum height=0.6cm] (c0) {$\hspace{-2mm}q_1,q_3$\\[-0.5mm]\textcolor{Violet2}{$\;$ 1}\vspace{-2mm} };
			\node at (9,1) [draw, ellipse,minimum width=0.7cm,minimum height=0.6cm] (c1) {$q_2$\\[-0.5mm]\textcolor{Violet2}{3}\vspace{-2mm}};
			
			\node at (7.7,-0.2) [ellipse, draw,minimum width=0.7cm,minimum height=0.6cm] (c00) {$q_3$\\[-0.5mm]\textcolor{Violet2}{2}\vspace{-2mm}};
	
		\draw   
		(cR) edge (c0)
		(cR) edge (c1)
		(c0) edge (c00);
		\end{tikzpicture}
	};
		
%
%
		
		\path[->] 
		
		(STATE 1) edge [out=145,in=170,loop, distance=1.5cm] node[above,pos=0.2]{a : \textcolor{Green2}{$4$}}  (STATE 1)
		
		(STATE 1) edge [out=190,in=215,loop, distance=1.5cm] node[above,pos=0.2]{b : \textcolor{Green2}{$7 \;\quad$}} (STATE 1)
		
		(STATE 1) edge [] node[above,pos=0.4]{c : \textcolor{Green2}{$3$}} (STATE 2)
		
		(STATE 2) edge [out=105,in=75,loop, distance=1.5cm] node[above,pos=0.5]{b,c : \textcolor{Green2}{$5$}} (STATE 2)
		
		(STATE 2) edge [out=30,in=150] node[above,pos=0.5]{a : \textcolor{Green2}{$7$}} (STATE 3)
		
		(STATE 3) edge [out=210,in=-30, scale=0.6] node[below,pos=0.5]{c : \textcolor{Green2}{$5$}}  (STATE 2)
		
		(STATE 3) edge [out=35,in=10,loop, distance=1.5cm] node[above,pos=0.2]{a : \textcolor{Green2}{$6$}} (STATE 3)
		
		(STATE 3) edge [out=-10,in=-35,loop, distance=1.5cm] node[above,pos=0.2]{$\quad$ b : \textcolor{Green2}{$7$}} (STATE 3);

%
%
%
%
%
%
%
%

		\end{tikzpicture}
		\caption{Component $\P_\B'$ of the automaton $\P_\B$. States are ordered "history trees", with the order labels in \textcolor{Violet2}{violet}. It has three different states, since two different orders for the same "history tree" occur. } 
		\label{Fig_Buchi_betterACD-parity}
	\end{figure}
	
\end{example}

\subsection{On relabelling of transition systems by acceptance conditions}\label{Section_4.2.StructuralParity}

 In this section we use the information given by the "alternating cycle decomposition" to provide characterisations of "transition systems" that can be labelled with "parity", "Rabin", "Streett" or $"\mathit{Weak}_k"$ conditions, generalising the results of~\cite{zielonka1998infinite}.
 As a consequence, these yield simple proofs of two results about the possibility to define different classes of acceptance conditions in a deterministic automaton. Theorem~\ref{Th_Rabin+Streett->Parity}, first proven in~\cite{kupferman2010ParityizingRA}, asserts that if we can define a Rabin and a Streett condition on top of an underlying automaton $\A$ such that it recognises the same language $L$ with both conditions, then we can define a parity condition in $\A$ recognising $L$ too. Theorem~\ref{Th_Buchi+co-B->Weak} states that if we can define Büchi and co-Büchi conditions on top of an automaton $\A$ recognising the language $L$, then we can define a $"\mathit{Weak}"$ condition over $\A$ such that it recognises $L$.

%
%

First, we extend the Definition~\ref{Def_ShapesOfTrees} of Section~\ref{Section_ZielonkaTree-SomeTypes} to the "alternating cycle decomposition".

\begin{definition}
	\begin{itemize}
	Given a Muller transition system $\T=(V,E,\msource,\mtarget,I_0,\F)$, we say that its "alternating cycle decomposition" $\ACD(\T)$ is a
		
		\item  \AP""Rabin ACD"" if for every state $q\in V$, the tree $"t_q"$ has "Rabin shape".
		
		\item  \AP""Streett ACD"" if for every state $q\in V$, the tree $"t_q"$ has "Streett shape".
		
		\item  \AP""parity ACD"" if for every state $q\in V$, the tree $"t_q"$ has "parity shape".
		
		\item  \AP""$[1,\eta]$-parity ACD"" (resp. $[0,\eta-1]$-parity ACD) if it is a parity ACD, every tree has "height" at most $\eta$ and trees of height $\eta$ are \kl(ACD){odd} (resp. \kl(ACD){even}).
		
		\item  \AP""Büchi ACD"" if it is a $[0,1]$-parity ACD. \item  \AP""co-Büchi ACD"" if it is a $[1,2]$-parity ACD.
		
		
		\item  \AP""$\mathit{Weak}_k$ ACD"" if it is a parity ACD and every tree $(t_i,\nu_i) \in \ACD(\T)$ has "height" at most $k$.
\end{itemize}
\end{definition}

The next proposition follows directly from the definitions.

\begin{proposition}\label{Prop_ACD_equivalentTypes}
	Let $\T$ be a Muller transition system. Then:
	\begin{itemize}
		\item $\ACD(\T)$ is a "parity ACD" if and only if it is a "Rabin ACD" and a "Streett ACD".
		\item $\ACD(\T)$ is a "$\mathit{Weak}_k$ ACD" if and only if it is a "$[0,k]$-parity ACD" and a "$[1,k+1]$-parity ACD".
	\end{itemize}
\end{proposition}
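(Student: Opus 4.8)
The plan is to unfold all the shape definitions and reduce each equivalence to an elementary observation about the trees $t_q$ and the proper trees $(t_i,\nu_i)$; as the statement announces, nothing beyond the definitions is required, so the proof is a short bookkeeping argument rather than a genuine construction.

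For the first equivalence I would fix a state $q\in V$ and reason on the single tree $t_q$, equipped with the priority labelling $p_i$ it inherits by restriction from the ambient proper tree $t_i$ it lives in. Every node $\tt$ of $t_q$ is classified as round or square according to the parity of $p_i(\tt)$, and these two classes partition the nodes of $t_q$. By the shape definitions (Definition~\ref{Def_ShapesOfTrees}), $t_q$ has parity shape exactly when every node has at most one child, it has Rabin shape exactly when every round node has at most one child, and it has Streett shape exactly when every square node has at most one child. Since the round and square nodes together exhaust all of $t_q$, the tree $t_q$ has parity shape if and only if it has both Rabin and Streett shape. Quantifying this equivalence over all $q\in V$ yields precisely that $\ACD(\T)$ is a parity ACD if and only if it is both a Rabin ACD and a Streett ACD.

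For the second equivalence I would first spell out the two hypotheses under the reindexing $\eta=k+1$. A $[0,k]$-parity ACD is a parity ACD in which every tree has height at most $k+1$ and every tree of height exactly $k+1$ is even; a $[1,k+1]$-parity ACD is a parity ACD with the same height bound but in which every tree of height exactly $k+1$ is odd. The crucial point is that a proper tree $t_i$ is even iff $\nu_i(\varepsilon)\in\F$ and odd iff $\nu_i(\varepsilon)\notin\F$, so "even" and "odd" are mutually exclusive. Hence imposing simultaneously that every height-$(k+1)$ tree be even and that every height-$(k+1)$ tree be odd forces there to be no tree of height $k+1$ at all, i.e.\ every tree has height at most $k$; combined with the parity-ACD requirement this is exactly the definition of a $\mathit{Weak}_k$ ACD. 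Conversely, if $\ACD(\T)$ is a $\mathit{Weak}_k$ ACD then it is a parity ACD all of whose trees have height at most $k\le k+1$, so the condition on height-$(k+1)$ trees holds vacuously and it is at once a $[0,k]$-parity and a $[1,k+1]$-parity ACD.

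I expect no real obstacle here; the only points demanding a little care are bookkeeping ones. First, one should confirm that the round/square labelling of $t_q$ is well defined by restriction of $p_i$, so that the notions of Definition~\ref{Def_ShapesOfTrees} apply verbatim to $t_q$ even though it is only a subtree of $t_i$. Second, one must track the index shift between the $[\mu,\eta]$ notation and the height bound, checking that $[0,k]$ and $[1,k+1]$ indeed translate to the \emph{same} height bound $k+1$ while forcing \emph{opposite} parities on the extremal trees; this coincidence is precisely what makes their conjunction collapse to the height bound $k$ and hence to $\mathit{Weak}_k$.
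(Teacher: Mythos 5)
Your proof is correct and matches the paper's intent exactly: the paper offers no explicit argument, stating only that the proposition ``follows directly from the definitions,'' and your write-up is precisely the careful unfolding of those definitions (round/square nodes partition each $t_q$, and the even/odd dichotomy for height-$(k+1)$ trees forces the height bound $k$). Your two bookkeeping caveats — that $t_q$ inherits a well-defined priority labelling from its ambient proper tree, and that $[0,k]$ and $[1,k+1]$ share the height bound $k+1$ with opposite extremal parities — are exactly the right points to check, and both hold.
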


\begin{proposition}\label{Prop_RelabellinRabin}
	Let $\T=(V,E,\msource,\mtarget,I_0,\F)$ be a Muller transition system. The following conditions are equivalent:
	\begin{enumerate}
		\item We can define a "Rabin condition" 
		over $\T$ that is "equivalent to" $\F$ over $\T$.
		\item For every pair of "loops" $l_1,l_2\in \mloop(\T)$ such that $l_1\cup l_2$ is a "loop", if $l_1\notin \F $ and $l_2\notin \F$, then $l_1\cup l_2 \notin \F$. 
		\item $\ACD(\T)$ is a "Rabin ACD".
	\end{enumerate}
\end{proposition}
\begin{proof}
	\begin{description}
		\item[($1 \Rightarrow 2$)] 
		Suppose that $\T$ uses a Rabin condition with Rabin pairs $(E_1,F_1),\dots,(E_r,F_r)$. Let $l_1$ and $l_2$ be two rejecting "loops". If $l_1\cup l_2$ was accepting, then there would be some Rabin pair $(E_j,F_j)$ and some edge $e\in l_1\cup l_2$ such that $e\in E_j$ and $e\notin F_j$. However, the edge $e$ belongs to $l_1$ or to $l_2$, and the "loop" it belongs to should be accepting too.
		
		\item[($2 \Rightarrow 3$)] Let $q\in V$ of "index" $i$, and $"t_q"$ be the subtree of $\ACD(\T)$ associated to $q$. Suppose that there is a node $\tt \in t_q$ such that $\kl(node){p_i}(\tt)$ is even ("round" node) and that it has two different children $\ss_1$ and $\ss_2$. The "loops" $\nu_i(\ss_1)$ and $\nu_i(\ss_2)$ are maximal rejecting "loops" contained in $\nu_i(\tt)$, and since they share the state $q$, their union is also a "loop" that must verify $ \nu_i(\ss_1) \cup \nu_i(\ss_2)\in \F$, contradicting the hypothesis.
		
		\item[($3 \Rightarrow 1$)] We define a "Rabin condition" over $\T$. For each tree $t_i$ in $\ACD(\T)$ and each "round" node $\tt\in t_i$ ($\kl(node){p_i}(\tt)$ even) we define the Rabin pair $(E_{i,\tt},F_{i,\tt})$ given by:
		\[ E_{i,\tt}=\nu_i(\tt)\setminus \bigcup_{\ss \in "\mathit{Children}"(\tau)}\nu_i(\ss) \quad, \qquad  F_{i,\tt}=E \setminus \nu_i(\tt). \]
			
		Let us show that this condition is "equivalent to" $\F$ over the transition system $\T$. We begin by proving the following consequence of being a "Rabin ACD":
		\begin{claim*}
			If $\tt$ is a "round" node in the tree $t_i$ of $\ACD(\T)$, and $l\in \mloop(\T)$ is a "loop" such that $l\subseteq \nu_i(\tt)$ and $l\nsubseteq \nu_i(\ss)$ for any child $\ss$ of $\tt$, then there is some edge $e\in l$ such that $e\notin \nu_i(\ss)$ for any child $\ss$ of $\tt$.
		\end{claim*}
		\begin{claimproof}
			Since for each state $q\in V$ the tree $"t_q"$ has "Rabin shape", it is verified that $\mstate_i(\ss)\cap \mstate_i(\ss')=\emptyset$ for every pair of different children $\ss, \ss'$ of $\tt$. Therefore, the union of $\nu_i(\ss)$ and $\nu_i(\ss')$ is not a "loop", and any "loop" $l$ contained in this union must be contained either in $\nu_i(\ss)$ or in $\nu_i(\ss')$.
		\end{claimproof}

		 Let $\rr\in \mrun_{T}$ be a "run" in $\T$, let $l\in \mloop(\T)$ be the "loop" of $\T$ such that $\minf(\rr)=l$ and let $i$ be the "index" of the edges in this "loop". Let $\tt$ be a maximal node in $t_i$ (for $\prefix$) such that $l\subseteq \nu_i(\tt)$.  If $l\in \F$, this node $\tt$ is a round node, and from the previous claim it follows that there is some edge $e\in l$ such that $e$ does not belong to any child of $\tt$, so $e\in E_{i,\tt}$ and $e\notin F_{i,\tt}$, so the "run" $\rr$ is accepted by the Rabin condition too. If $l\notin \F$, then for every round node $\tt$, if $l\subseteq \nu_i(\tt)$ then $l\subseteq \nu_i(\ss)$ for some child $\ss$ of $\tt$. Therefore, for every Rabin pair $(E_{i,\tt},F_{i,\tt})$ and every $e\in l$, it is verified that $e\in E_{i,\tt} \, \Rightarrow \, e\in F_{i,\tt}$.
	\end{description}
\end{proof}
\begin{remark*}
	The Rabin condition presented in this proof does not necessarily use the optimal number of Rabin pairs required to define a Rabin condition "equivalent to" $\F$ over $\T$.
\end{remark*}

\begin{proposition}\label{Prop_RelabellinStreett}
	Let $\T=(V,E,\msource,\mtarget,I_0,\F)$ be a Muller transition system. The following conditions are equivalent:
	\begin{enumerate}
		\item We can define a "Streett condition" 
		over $\T$ that is "equivalent to" $\F$ over $\T$.
		\item For every pair of "loops" $l_1,l_2\in \mloop(\T)$ such that $l_1\cup l_2$ is a "loop", if $l_1\in \F $ and $l_2\in \F$, then $l_1\cup l_2 \in \F$. 
		\item $\ACD(\T)$ is a "Streett ACD".
	\end{enumerate}
\end{proposition}

We omit the proof of Proposition~\ref{Prop_RelabellinStreett}, being the dual case of Proposition~\ref{Prop_RelabellinRabin}.

\begin{proposition}\label{Prop_RelabellinParity}
	Let $\T=(V,E,\msource,\mtarget,I_0,\F)$ be a Muller transition system. The following conditions are equivalent:
	
	\begin{enumerate}
		\item We can define a "parity condition" 
		over $\T$ that is "equivalent to" $\F$ over $\T$.
		\item For every pair of "loops" $l_1,l_2\in \mloop(\T)$ such that $l_1\cup l_2$ is a "loop", if $l_1\in \F \, \Leftrightarrow \, l_2\in \F$, then $l_1\cup l_2 \in \F \, \Leftrightarrow \,l_1\in \F$. That is, union of "loops" having the same ``accepting status'' preserves their ``accepting status''. 
		\item $\ACD(\T)$ is a "parity ACD".
	\end{enumerate}
	
	Moreover, the parity condition we can define over $\T$ is a "$[1,\eta]$-parity" (resp. $[0,\eta-1]$-parity~/~$"\mathit{Weak}_k"$) condition if and only if $\ACD(\T)$ is a "$[1,\eta]$-parity ACD" (resp. $[0,\eta-1]$-parity ACD~/~"$\mathit{Weak}_k$ ACD").
\end{proposition}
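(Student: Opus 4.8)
The plan is to derive the equivalence $(1)\Leftrightarrow(2)\Leftrightarrow(3)$ from the already established Propositions~\ref{Prop_RelabellinRabin} and~\ref{Prop_RelabellinStreett} together with Proposition~\ref{Prop_ACD_equivalentTypes}, and then to read off the ``moreover'' part from the priorities produced by the "ACD-transformation". The starting observation is that condition~$(2)$ is exactly the conjunction of condition~$(2)$ of Proposition~\ref{Prop_RelabellinRabin} (the union of two rejecting "loops" is rejecting) and condition~$(2)$ of Proposition~\ref{Prop_RelabellinStreett} (the union of two accepting "loops" is accepting): asserting that the union of two loops with the same accepting status keeps that status splits precisely into the rejecting case and the accepting case.

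I would then establish the cycle $(1)\Rightarrow(2)\Rightarrow(3)\Rightarrow(1)$. For $(1)\Rightarrow(2)$ I proceed directly: if $p\colon E\to[\mu,\eta]$ is a parity relabelling "equivalent to" $\F$ over $\T$, then a "loop" $l$ is accepting if and only if $\min p(l)$ is even (realise $l$ as $\minf(\rr)$ for a run $\rr$), and since the minimum of two priorities of the same parity has that same parity, unions of equal-status loops preserve their status. For $(2)\Rightarrow(3)$, Propositions~\ref{Prop_RelabellinRabin} and~\ref{Prop_RelabellinStreett} turn the two halves of~$(2)$ into the statements that $\ACD(\T)$ is a "Rabin ACD" and a "Streett ACD" respectively, and Proposition~\ref{Prop_ACD_equivalentTypes} identifies this conjunction with $\ACD(\T)$ being a "parity ACD", which is~$(3)$.

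The implication $(3)\Rightarrow(1)$ is where the transformation does the work. If $\ACD(\T)$ is a "parity ACD", then each "subtree associated to" a state, $t_q$, has "parity shape", i.e.\ it is a single "branch", so $|\mbranch(t_q)|=1$ for every $q$. Consequently $\P_{\mathcal{ACD}(\T)}$ has exactly one state per state of $\T$ and one edge per edge of $\T$, so the "locally bijective morphism" $\pp\colon\P_{\mathcal{ACD}(\T)}\to\T$ of Proposition~\ref{Prop_Correctness-ACD} is in fact an isomorphism of the underlying graphs. Transporting the parity condition of $\P_{\mathcal{ACD}(\T)}$ to $\T$ along $\pp_E^{-1}$ yields a parity condition on $\T$, and it is "equivalent to" $\F$ over $\T$ precisely because $\pp$ preserves acceptance.

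For the ``moreover'' statement I would track the range of priorities assigned by the "ACD-transformation". By the remark following Definition~\ref{Def_TransformationMullerAutomata}, the priorities used are controlled by the maximal "height" of a tree of $\ACD(\T)$ and by whether $\ACD(\T)$ is \kl(ACD){even}, \kl(ACD){odd} or \kl(ACD){ambiguous}; hence a $[1,\eta]$-parity ACD (resp.\ $[0,\eta-1]$-parity / $\mathit{Weak}_k$ ACD) produces exactly a $[1,\eta]$-parity (resp.\ $[0,\eta-1]$-parity / $\mathit{Weak}_k$) condition through the construction of $(3)\Rightarrow(1)$. For the converse directions I would use the optimality of the number of priorities: any parity relabelling of $\T$ gives, via the identity, a parity transition system admitting a "locally bijective morphism" onto $\T$, so by Proposition~\ref{Prop_OptimalityACD_Priorities} it uses at least as many priorities as $\P_{\mathcal{ACD}(\T)}$; this forces the corresponding bound on tree heights, and together with the "parity shape" coming from~$(3)$ and the \kl(ACD){even}/\kl(ACD){odd} status of the maximal trees it pins down the ACD type, the $\mathit{Weak}_k$ case being settled by Proposition~\ref{Prop_ACD_equivalentTypes}. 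I expect the main obstacle to be exactly this bookkeeping in the ``moreover'' part---correctly aligning the \kl(ACD){even}/\kl(ACD){odd}/\kl(ACD){ambiguous} classification and the maximal "height" with the interval $[\mu,\eta]$ and the parity of $\mu$---whereas the three equivalences themselves are essentially a repackaging of the two dual propositions.
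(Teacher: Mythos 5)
Your proof is correct, and its overall architecture matches the paper's: the cycle $(1)\Rightarrow(2)\Rightarrow(3)\Rightarrow(1)$, with $(1)\Rightarrow(2)$ argued via minima of priorities on loops, $(3)\Rightarrow(1)$ via the observation that a "parity ACD" makes every $t_q$ a single branch so that $\P_{\mathcal{ACD}(\T)}$ is just a relabelling of $\T$, and the ``moreover'' part handled by reading priorities off the construction in one direction and invoking Proposition~\ref{Prop_OptimalityACD_Priorities} in the other. The one genuinely different step is $(2)\Rightarrow(3)$: the paper proves it directly, by taking a node $\tt\in t_q$ with two children $\ss_1,\ss_2$, noting that $\nu_i(\ss_1)$ and $\nu_i(\ss_2)$ share the state $q$ so their union is a loop, and deriving a contradiction with the maximality defining the children of $\tt$; you instead split condition~$(2)$ into its rejecting half and accepting half and compose Propositions~\ref{Prop_RelabellinRabin}, \ref{Prop_RelabellinStreett} and~\ref{Prop_ACD_equivalentTypes}. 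Your route is more modular and makes explicit the ``parity $=$ Rabin $\cap$ Streett'' structure that the paper itself exploits only later in Corollary~\ref{Coro_Rabin+Streett=Parity_TransSyst}; its cost is that it leans on Proposition~\ref{Prop_RelabellinStreett}, whose proof the paper omits as dual, whereas the paper's direct argument is self-contained. One caveat on the ``moreover'' part, which affects you and the paper equally: Proposition~\ref{Prop_OptimalityACD_Priorities} as stated only bounds the \emph{number} of priorities, while distinguishing a $[1,\eta]$-parity ACD from a $[0,\eta-1]$-parity ACD additionally requires the parity information (whether the extremal run can be forced to produce an even or odd minimal priority), which is established inside the proof of that proposition but not recorded in its statement; you correctly flag this bookkeeping as the delicate point, and it is exactly the refinement one must import to close the converse direction rigorously.
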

\begin{proof}
	\begin{description}
		\item[($1 \Rightarrow 2$)] Suppose that $\T$ uses a parity acceptance condition with the priorities given by $p:E \rightarrow \NN$. Then, since $l_1$ and $l_2$ are both accepting or both rejecting,  $p_1=\min p(l_1)$ and $p_2=\min p(l_2)$ have the same parity, that is also the same parity than $\min p(l_1 \cup l_2)=\min\{p_1,p_2\}$.
		
		\item[($2 \Rightarrow 3$)] Let $q\in V$ of "index" $i$, and $"t_q"$ be the subtree of $\ACD(\T)$ associated to $q$. Suppose that there is a node $\tt \in t_q$ with two different children $\ss_1$ and $\ss_2$. The "loops" $\nu_i(\ss_1)$ and $\nu_i(\ss_2)$ are different maximal "loops" with the property $\nu_i(\ss)\subseteq \nu_i(\tt)$ and $\nu_i(\ss)\in \F \, \Leftrightarrow \, \nu_i(\tt) \notin \F$. Since they share the state $q$, their union is also a "loop" contained in $\nu_i(\tt)$ and then
		\[ \nu_i(\ss_1) \cup \nu_i(\ss_2)\in \F \; \Leftrightarrow \; \nu_i(\tt) \in \F \; \Leftrightarrow \; \nu_i(\ss_1)\notin \F\] 
		contradicting the hypothesis.
		
		\item[($3 \Rightarrow 1$)] From the construction of the "ACD-transformation", it follows that $"\P_{\mathcal {ACD}(\T )}"$ is just a relabelling of $\T$ with an equivalent parity condition.
		
	\end{description}	
	For the implication from right to left of the last statement, we remark that if the trees of $\ACD(\T)$ have priorities assigned in $[\mu,\eta]$, then the parity transition system $"\P_{\mathcal {ACD}(\T )}"$ will use priorities in $[\mu,\eta]$. If $\ACD(\T)$ is a "$\mathit{Weak}_k$ ACD", then in each "strongly connected component" of $"\P_{\mathcal {ACD}(\T )}"$ the number of priorities used will be the same as the "height" of the corresponding tree of $\ACD(\T)$ (at most $k$).
	
	For the other implication it suffices to remark that the priorities assigned by $\ACD(\T)$ are optimal (Proposition~\ref{Prop_OptimalityACD_Priorities}).	
\end{proof}

\begin{corollary}\label{Coro_Rabin+Streett=Parity_TransSyst}
	Given a "transition system graph" $\T_G=(V,E,\msource,\mtarget,I_0)$ and a "Muller condition" $\F \subseteq \P(E)$, we can define a "parity condition" $p:E\rightarrow \NN$ "equivalent to" $\F$ over $\T_G$ if and only if we can define a "Rabin condition" $R$ and a "Streett condition" $S$ over $\T_G$ such that 
	$ (\T_G,\F) \,"\simeq"\, (\T_G,R)\, "\simeq"\, (\T_G,S) $.
	
	Moreover, if the Rabin condition $R$ uses $r$ Rabin pairs and the Streett condition $S$ uses $s$ Streett pairs, we can take the parity condition $p$ using priorities in
	\begin{itemize}
		\item $[1,2r+1]$ if $r\leq s$.
		\item $[0,2s]$ if $s\leq r$.
	\end{itemize} 
\end{corollary}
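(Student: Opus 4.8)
The plan is to derive both parts from the three relabelling propositions of this subsection. For the first equivalence, write $\T=(\T_G,\F)$. In the direction from parity to Rabin-and-Streett there is nothing to construct: a parity condition $p$ equivalent to $\F$ over $\T_G$ is, as noted in Section~\ref{Section_Acceptingconditions}, a Rabin chain and hence a Rabin condition, and dually a Streett condition, so a single $p$ can play the role of both $R$ and $S$. For the converse I would feed the hypotheses into the structural characterisations: by Proposition~\ref{Prop_RelabellinRabin} the existence of a Rabin condition equivalent to $\F$ makes $\ACD(\T)$ a Rabin ACD, by Proposition~\ref{Prop_RelabellinStreett} the Streett condition makes it a Streett ACD, Proposition~\ref{Prop_ACD_equivalentTypes} then gives that $\ACD(\T)$ is a parity ACD, and Proposition~\ref{Prop_RelabellinParity} produces the desired parity condition.

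For the bound on the number of priorities, the plan is to control the heights of the trees of $\ACD(\T)$ and invoke the last statement of Proposition~\ref{Prop_RelabellinParity}: it suffices to show that $\ACD(\T)$ is simultaneously a $[1,2r+1]$-parity ACD and a $[0,2s]$-parity ACD, for then one may define either a $[1,2r+1]$-parity or a $[0,2s]$-parity condition and keep whichever uses fewer priorities (the former when $r\le s$, the latter when $s\le r$). I recall that along any branch of a tree $t_i$ the node labels form a strictly decreasing chain of loops whose membership in $\F$ alternates, and that a node is round precisely when its label belongs to $\F$.

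The crux is a counting claim tying the number of pairs to these heights. Fixing Rabin pairs $(E_1,F_1),\dots,(E_r,F_r)$ for the given Rabin condition, I would assign to each round node along a branch, whose label is an accepting loop $l$, an index $k$ with $l\cap E_k\neq\emptyset$ and $l\cap F_k=\emptyset$, and then show that distinct round nodes of the branch get distinct indices: if two accepting loops $l\supsetneq l'$ of the branch shared an index $k$, the rejecting loop $m$ with $l'\subseteq m\subsetneq l$ lying between them would satisfy $m\cap E_k\supseteq l'\cap E_k\neq\emptyset$, forcing $m\cap F_k\neq\emptyset$ because $m$ is rejecting, while $m\subseteq l$ gives $m\cap F_k\subseteq l\cap F_k=\emptyset$, a contradiction. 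Hence every branch carries at most $r$ round nodes; counting them by the parity of the depth (round nodes sit at odd depths in an odd tree, at even depths in an even tree) bounds an odd tree by height $2r+1$ and an even tree by height $2r$, so all trees have height at most $2r+1$ and those of height exactly $2r+1$ are odd, i.e.\ $\ACD(\T)$ is a $[1,2r+1]$-parity ACD. The dual argument on the $s$ Streett pairs bounds the number of square (rejecting) nodes per branch by $s$, giving odd trees of height at most $2s$ and even trees of height at most $2s+1$, so that trees of height $2s+1$ are even and $\ACD(\T)$ is a $[0,2s]$-parity ACD, which finishes the proof.

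The main obstacle is precisely this counting claim; everything else is bookkeeping over the earlier propositions. The delicate point is the disjointness of the witnessing indices along a branch: it cannot be read off the Rabin (resp.\ Streett) condition directly but must be extracted from the alternation of acceptance down the chain, using that an intermediate loop of the opposite status is squeezed between any two equally-labelled nodes.
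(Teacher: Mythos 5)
Your proof is correct and follows essentially the same route as the paper: the first equivalence is read off the relabelling characterisations (Propositions~\ref{Prop_RelabellinRabin}, \ref{Prop_RelabellinStreett}, \ref{Prop_RelabellinParity} combined via Proposition~\ref{Prop_ACD_equivalentTypes}), and the priority bound comes from showing that the trees of the alternating cycle decomposition have height at most $2r+1$ (with maximal ones odd) and at most $2s+1$ (with maximal ones even). The paper only asserts these height bounds in a one-line remark, and your per-branch counting argument --- assigning distinct Rabin (resp.\ Streett) pairs to the round (resp.\ square) nodes of a branch, with the contradiction extracted from an intermediate loop of opposite acceptance status --- is exactly the justification it leaves implicit.
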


\begin{proof}
	
	The first statement is a consequence of the characterisations (2) or (3) from Propositions~\ref{Prop_RelabellinRabin},~\ref{Prop_RelabellinStreett} and~\ref{Prop_RelabellinParity}.
	
	For the second statement we remark that the trees of $\ACD(\T)$ have "height" at most $\min\{2r+1, 2s+1\}$. If $r\geq s$, then the height $2r+1$ can only be reached by \kl(ACD){odd} trees, and if $s\geq r$, the height $2s+1$ only by \kl(ACD){even} trees.
\end{proof}

From the last statement of Proposition~\ref{Prop_RelabellinParity} and thanks to the second item of Proposition~\ref{Prop_ACD_equivalentTypes}, we obtain: 

\begin{corollary}\label{Coro_Buchi+coB=Weak_TransSyst}
	Given a "transition system graph" $\T_G$ and a Muller condition $\F$ over $\T_G$, there is an equivalent $"\mathit{Weak}_k"$ condition over $\T_G$ if and only if there are both $[0,k]$ and "$[1,k+1]$-parity" conditions "equivalent to" $\F$ over $\T_G$. 
	
	In particular, there is an equivalent "Weak condition" if and only if there are "Büchi" and "co-Büchi" conditions equivalent to $\F$ over $\T_G$. 
\end{corollary}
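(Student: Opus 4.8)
The plan is to obtain the corollary by chaining the two results cited just above it, with no genuinely new argument: the whole content is an index-matching bookkeeping. Throughout I write $\T$ for the Muller transition system obtained by equipping $\T_G$ with $\F$, so that the propositions stated for Muller transition systems apply verbatim to $(\T_G,\F)$.

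First I would invoke the last statement of proposition \ref{Prop_RelabellinParity} to translate each definability assertion into an assertion about the shape of $\ACD(\T)$. Since a $\mathit{Weak}_k$ condition is in particular a parity condition, that statement gives that a $\mathit{Weak}_k$ condition equivalent to $\F$ over $\T_G$ exists if and only if $\ACD(\T)$ is a $\mathit{Weak}_k$ ACD. Taking $\eta = k+1$, so that $[0,\eta-1] = [0,k]$ and $[1,\eta] = [1,k+1]$, the same statement yields that a $[0,k]$-parity (resp.\ $[1,k+1]$-parity) condition equivalent to $\F$ over $\T_G$ exists if and only if $\ACD(\T)$ is a $[0,k]$-parity ACD (resp.\ a $[1,k+1]$-parity ACD).

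Next I would feed in the second item of proposition \ref{Prop_ACD_equivalentTypes}, namely that $\ACD(\T)$ is a $\mathit{Weak}_k$ ACD if and only if it is simultaneously a $[0,k]$-parity ACD and a $[1,k+1]$-parity ACD. Composing the three biconditionals of the previous paragraph with this one produces the required equivalence: a $\mathit{Weak}_k$ condition equivalent to $\F$ over $\T_G$ exists iff $\ACD(\T)$ is a $\mathit{Weak}_k$ ACD, iff $\ACD(\T)$ is both a $[0,k]$-parity ACD and a $[1,k+1]$-parity ACD, iff both a $[0,k]$-parity and a $[1,k+1]$-parity condition equivalent to $\F$ over $\T_G$ exist. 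The \emph{in particular} clause is then the specialisation $k = 1$, using that $\mathit{Weak}$ denotes $\mathit{Weak}_1$ together with the identifications of Büchi with $[0,1]$-parity and of co-Büchi with $[1,2]$-parity recorded in section \ref{Section_Acceptingconditions}. The single point demanding care --- and the only place a slip could occur --- is the index matching (the choice $\eta = k+1$) together with the observation that each of these restricted conditions is in particular a parity condition, so that the ``moreover'' half of proposition \ref{Prop_RelabellinParity}, phrased for parity conditions, legitimately applies in all three instances; once this is checked, nothing further is required.
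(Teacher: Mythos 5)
Your proof is correct and follows exactly the paper's own route: the paper derives this corollary precisely by combining the last statement of proposition \ref{Prop_RelabellinParity} with the second item of proposition \ref{Prop_ACD_equivalentTypes}, which is the chain of biconditionals you describe (including the index choice $\eta = k+1$ and the specialisation $k=1$ for the \emph{in particular} clause). No gap; your bookkeeping matches the paper's intended argument.
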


%

\begin{remark*}
	It is important to notice that the previous results are stated for non-labelled transition systems. We must be careful when translating these results to automata and formal languages. For instance, in~\cite[Section 4]{kupferman2010ParityizingRA} there is an example of a non-deterministic automaton $\A$, such that we can put on top of it Rabin and Streett conditions $R$ and $S$ such that $\L(\A,R)=\L(\A,S)$, but we cannot put a parity condition on top of it recognising the same language. However, proposition~\ref{Prop_EquivaleceDetAutomata_ImpliesEqCondition} allows us to obtain analogous results for "deterministic automata".
\end{remark*}

\begin{theorem}[{{\cite[Theorem 7]{kupferman2010ParityizingRA}}}]\label{Th_Rabin+Streett->Parity}
	Let $\A$ be the "transition system graph" of a "deterministic automaton" with set of states $Q$. Let $R$ be a Rabin condition over $\A$ with $r$ pairs and $S$ a Streett condition over $\A$ with $s$ pairs such that $\L(\A,R)=\L(\A,S)$. Then, there exists a parity condition $p: Q \times \Sigma \rightarrow \NN$ over $\A$ such that $\L(\A,p)=\L(\A,R)=\L(\A,S)$.
	Moreover, 
	\begin{itemize}
		\item if $r\leq s$, we can take $p$ to be a "$[1,2r+1]$-parity condition".
		\item if $s\leq r$, we can take $p$ to be a "$[0,2s]$-parity condition".
	\end{itemize}
\end{theorem}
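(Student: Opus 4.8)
The plan is to reduce this language-level statement to the purely structural Corollary~\ref{Coro_Rabin+Streett=Parity_TransSyst}, which is phrased in terms of equivalence of conditions over a "transition system graph". The bridge between the two formulations is Proposition~\ref{Prop_EquivaleceDetAutomata_ImpliesEqCondition}, and it is exactly there that the determinism hypothesis on $\A$ enters.

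First I would pass from equal languages to equivalence over $\A$. Since $\A$ is the "transition system graph" of a "deterministic automaton" and $\L(\A,R)=\L(\A,S)$, Proposition~\ref{Prop_EquivaleceDetAutomata_ImpliesEqCondition} gives that the two representations are "equivalent over" $\A$, i.e. $(\A,R)\simeq(\A,S)$. Next I would exhibit a "Muller condition" equivalent to both. By Remark~\ref{Remark_Colours=Edges} we may assume $\Gamma=E$, and a "Rabin condition" is a particular "Muller condition": writing $R=\{(E_1,F_1),\dots,(E_r,F_r)\}$ and setting $\F:=\{\,S\subseteq E : \exists\, i,\ S\cap E_i\neq\emptyset \text{ and } S\cap F_i=\emptyset\,\}\subseteq \P(E)$, we have $u\in\macc_R \Leftrightarrow \minf(u)\in\F$ for every $u\in E^\oo$, so $(\A,\F)\simeq(\A,R)$ trivially. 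Combining with the previous step yields $(\A,\F)\simeq(\A,R)\simeq(\A,S)$, with $R$ a "Rabin condition" ($r$ pairs) and $S$ a "Streett condition" ($s$ pairs).

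Finally I would apply Corollary~\ref{Coro_Rabin+Streett=Parity_TransSyst} to the graph $\A$, the Muller condition $\F$, and the pair $R,S$. Its ``if'' direction produces a "parity condition" $p$ with $(\A,p)\simeq(\A,\F)$, and its quantitative part gives priorities in $[1,2r+1]$ when $r\le s$ and in $[0,2s]$ when $s\le r$, which is precisely the claimed range; since for a deterministic automaton $E$ is identified with $Q\times\Sigma$, this is the desired $p:Q\times\Sigma\rightarrow\NN$. As "equivalent over" $\A$ implies equality of accepted languages, $\L(\A,p)=\L(\A,\F)=\L(\A,R)=\L(\A,S)$, completing the argument.

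There is no substantive obstacle here: the proof is an assembly of results already established, and the real content lies in Corollary~\ref{Coro_Rabin+Streett=Parity_TransSyst} (hence ultimately in the optimality of $\ACD(\T)$ and the shape characterisations of Propositions~\ref{Prop_RelabellinRabin}, \ref{Prop_RelabellinStreett}, \ref{Prop_RelabellinParity}). The single point demanding care is the very first step: the implication ``equal languages $\Rightarrow$ "equivalent over" $\A$'' fails for non-deterministic automata, as the example of \cite{kupferman2010ParityizingRA} recalled in the remark preceding the theorem shows, so I would be careful to invoke Proposition~\ref{Prop_EquivaleceDetAutomata_ImpliesEqCondition} explicitly rather than silently identifying accepted languages with sets of accepting "runs".
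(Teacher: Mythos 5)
Your proposal is correct and follows essentially the same route as the paper: Proposition~\ref{Prop_EquivaleceDetAutomata_ImpliesEqCondition} converts equality of languages into equivalence of conditions over $\A$ (the only place determinism is used), and Corollary~\ref{Coro_Rabin+Streett=Parity_TransSyst} then yields the parity condition with the stated priority ranges, from which equality of languages follows. Your explicit intermediate step of packaging the Rabin condition $R$ as a Muller condition $\F$ so that the corollary's hypothesis $(\A,\F)\simeq(\A,R)\simeq(\A,S)$ is literally met is a detail the paper leaves implicit, not a different argument.
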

\begin{proof}
	 
	 Proposition~\ref{Prop_EquivaleceDetAutomata_ImpliesEqCondition} implies that $(\A,R)"\simeq"(\A,S)$, and after corollary~\ref{Coro_Rabin+Streett=Parity_TransSyst}, there is a parity condition $p$ using the proclaimed priorities such that $(\A,p)"\simeq"(\A,R)$. Therefore $\L(\A,p)"\simeq"\L(\A,R)$ (since for both deterministic and non-deterministic $(\A,p)"\simeq"(\A,R)$ implies $\L(\A,p)"\simeq"\L(\A,R)$). 
\end{proof}

\begin{theorem}\label{Th_Buchi+co-B->Weak}
	Let $\A$ be the "transition system graph" of a deterministic automaton and $p$ and $p'$ be $[0,k]$ and "$[1,k+1]$-parity conditions" respectively over $\A$ such that $\L(\A,p)= \L(\A,p')$. Then, there exists a $"\mathit{Weak}_k"$ condition $W$ over $\A$ such that $\L(\A,W)=\L(\A,p)$.
	
	In particular, there is a $"\mathit{Weak}"$ condition $W$ over $\A$ such that $\L(\A,W)=L$ if and only if there are both "Büchi" and "co-Büchi" conditions $B,B'$ over $\A$ such that $\L(\A,B)= \L(\A,B')=L$.
	
	
\end{theorem}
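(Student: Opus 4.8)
The plan is to follow exactly the strategy used for Theorem~\ref{Th_Rabin+Streett->Parity}: pass from equality of recognised languages to equivalence of conditions over the underlying transition system graph, apply the structural characterisation of Corollary~\ref{Coro_Buchi+coB=Weak_TransSyst}, and then pass back to languages. Since $\A$ is the transition system graph of a deterministic automaton and $\L(\A,p)=\L(\A,p')$, Proposition~\ref{Prop_EquivaleceDetAutomata_ImpliesEqCondition} gives $(\A,p)\simeq(\A,p')$; that is, $p$ and $p'$ accept exactly the same runs of $\A$.

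To invoke Corollary~\ref{Coro_Buchi+coB=Weak_TransSyst} I first exhibit a single Muller condition to which both parity conditions are equivalent. Taking $\GG=E$, I let $\F\subseteq\P(E)$ be the Muller condition induced by $p$, namely $C\in\F$ if and only if $\min p(C)$ is even. By the very definition of a parity condition, a run $\rr$ satisfies $p$ precisely when $\min p(\minf(\rr))$ is even, so $p$ and $\F$ define the same accepting subset of runs; in particular $p$ is equivalent to $\F$ over $\A$, and by the previous paragraph so is $p'$. Hence both a $[0,k]$-parity condition (namely $p$) and a $[1,k+1]$-parity condition (namely $p'$) are equivalent to $\F$ over $\A$. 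Corollary~\ref{Coro_Buchi+coB=Weak_TransSyst} then yields a $\mathit{Weak}_k$ condition $W$ with $(\A,W)\simeq(\A,\F)\simeq(\A,p)$. Finally, since equivalence over a transition system graph always implies equality of the recognised languages (the remark preceding Proposition~\ref{Prop_EquivaleceDetAutomata_ImpliesEqCondition}), I obtain $\L(\A,W)=\L(\A,p)$, establishing the first assertion.

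For the second assertion I specialise to $k=1$, using that $[0,1]$-parity conditions are exactly Büchi conditions, $[1,2]$-parity conditions are exactly co-Büchi conditions, and $\mathit{Weak}_1=\mathit{Weak}$. The implication from the existence of Büchi and co-Büchi conditions recognising $L$ to that of a $\mathit{Weak}$ condition recognising $L$ is precisely the first part of the theorem applied with $k=1$. For the converse, suppose $W$ is a $\mathit{Weak}$ condition over $\A$ with $\L(\A,W)=L$; letting $\F$ be the Muller condition induced by $W$ as above, $W$ is a $\mathit{Weak}_1$ condition equivalent to $\F$ over $\A$, so Corollary~\ref{Coro_Buchi+coB=Weak_TransSyst}, read in the direction from a $\mathit{Weak}_k$ condition to the two parity conditions, provides Büchi and co-Büchi conditions equivalent to $\F$ over $\A$; being equivalent over the deterministic automaton $\A$, these recognise $L$.

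Once the two cited results are in hand the argument is pure bookkeeping, so there is no deep obstacle; the only delicate point is the shuttling between the two notions of equivalence. I must apply Proposition~\ref{Prop_EquivaleceDetAutomata_ImpliesEqCondition} precisely in the direction that uses determinism (from equal languages to equal accepting runs) and the unconditional converse in the other direction, and I must check that the chosen Muller condition $\F$ is genuinely equivalent over $\A$ to both parity conditions before Corollary~\ref{Coro_Buchi+coB=Weak_TransSyst} applies.
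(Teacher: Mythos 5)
Your proposal is correct and takes essentially the same approach as the paper: the paper's own proof is a one-line citation of Proposition~\ref{Prop_EquivaleceDetAutomata_ImpliesEqCondition} and Corollary~\ref{Coro_Buchi+coB=Weak_TransSyst}, and your argument simply makes explicit the bookkeeping that the paper leaves implicit (introducing the intermediate Muller condition $\F$ induced by $p$, checking that both parity conditions are equivalent to it over $\A$, and applying the corollary in both directions for the second assertion).
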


\begin{proof}
	If follows from proposition~\ref{Prop_EquivaleceDetAutomata_ImpliesEqCondition} and corollary~\ref{Coro_Buchi+coB=Weak_TransSyst}.
\end{proof}
%
%

\newpage
%

\section{Conclusions}
\label{sec:conclusions}

We have presented a transformation that, given a Muller "transition system", provides an equivalent "parity" transition system that has minimal size and uses an optimal number of priorities among those which accept a "locally bijective morphism" to the original Muller transition system. In order to describe this transformation we have introduced the "alternating cycle decomposition", a data structure that arranges all the information about the acceptance condition of the transition system and the interplay between this condition and the structure of the system.

We have shown in Section~\ref{sec:applications} how the alternating cycle decomposition can be useful to reason about acceptance conditions, and we hope that this representation of the information will be helpful in future works.

We have not discussed the complexity of effectively computing the "alternating cycle de\-com\-po\-si\-tion" of a Muller transition system. It is known that solving Muller games is $\mathrm{PSPACE}$-complete when the acceptance condition is given as a list of accepting sets of colours
\cite{Dawar2005ComplexityBounds}. However, given a Muller game $\G$ and the "Zielonka tree" of its Muller condition, we have a transformation into a parity game of polynomial size on the size of $\G$, so solving Muller games with this extra information is in $\mathrm{NP}\cap \mathrm{co}$-$\mathrm{NP}$. Also, in order to build $\ACD(\T)$ we suppose that the Muller condition is expressed using as colours the set of edges of the game (that is, as an explicit Muller condition), and solving explicit Muller games is in $\mathrm{PTIME}$~\cite{Horn2008Explicit}. Consequently, unless $\mathrm{PSPACE}$ is contained in $\mathrm{NP}\cap \mathrm{co}$-$\mathrm{NP}$, we cannot compute the "Zielonka tree" of a Muller condition, nor the "alternating cycle decomposition" of a Muller transition system in polynomial time.

\bibliography{bibShort}




\end{document}